\long\def\beginpgfgraphicnamed#1#2\endpgfgraphicnamed{\includegraphics{#1.pdf}}
\theoremstyle{plain}
\newtheorem{theorem}{Theorem}[section]
\newtheorem*{definition}{Definition}
\theoremstyle{remark}
\DeclareMathOperator{\tr}{tr}
\DeclareMathOperator{\Sp}{Sp}
\DeclareMathOperator{\diag}{diag}
\DeclareMathOperator{\cov}{cov}
\DeclareMathOperator{\var}{var}
\renewcommand{\Re}{\operatorname{Re}}
\renewcommand{\Im}{\operatorname{Im}}
\DeclareMathOperator{\sech}{sech}
\def\cc{\text{c.c.}}
\def\tp{\mathrm{T}}
\def\herm{\mathrm{H}}
\def\olcite#1{\cite{#1}}
\def\graphrule#1{\textbf{#1}}
\def\clap#1{\hbox to 0pt{\hss#1\hss}}
\def\mathclap{\mathpalette\mathclapinternal}
\def\mathclapinternal#1#2{\clap{$\mathsurround=0pt#1{#2}$}}
\def\negspace{\!}
\def\rsub#1#2{{#1} \negspace {\protect\vphantom{#1}}_{#2}}
\def\ketsub#1#2{\rsub {\ket{#1}} {#2}}
\def\pket#1{\ketsub{#1} p}
\def\avg#1{\left\langle {#1} \right\rangle}
\def\avgg#1{\langle {#1} \rangle}
\def\micronodesize{4pt}
\def\edgethickness{very thick}
\def\poscolor{blue!60!black}
\def\negcolor{red!30!yellow}
\def\unitcolor{\poscolor}
\def\evencolor{black}
\def\oddcolor{white}
\def\phantomcolor{gray!75}
\tikzset{micro-no-color/.style={
	circle,
	minimum size=\micronodesize,
	inner sep=0pt,
	outer sep=0pt
	}
}
\tikzset{micro/.style={
	micro-no-color, ball color=black,
	}
}
\tikzset{micro-even/.style={
	micro-no-color, ball color=\evencolor,
	}
}
\tikzset{micro-odd/.style={
	micro-no-color, ball color=\oddcolor,
	}
}
\tikzset{poslink/.style={
	\edgethickness,
	draw=\poscolor,
	}
}
\tikzset{neglink/.style={
	\edgethickness,
	draw=\negcolor,
	}
}
\tikzset{unitlink/.style={
	\edgethickness,
	draw=\unitcolor,
	}
}
\tikzset{phantomlink/.style={
	\edgethickness,
	draw=\phantomcolor,
	ultra nearly transparent
	}
}
\tikzset{nodehighlight/.style={
	thin,
	red,
	fill=red,
	semitransparent
	}
}
\tikzset{latticeopts/.style={
	x=3.375cm,
	y=3.75mm,
	z=20.25mm,
	inner sep=0pt,
	outer sep=0pt
	}
}
\def\reals{\mathbb{R}}
\def\op#1{\hat{#1}}
\def\opvec#1{\op{\vec{#1}}}
\def\opmat#1{\op{\mat{#1}}}
\def\id{I}
\def\1{\mat{\id}}
\def\mat#1{\bm{\mathrm{#1}}}
\renewcommand{\vec}[1]{\bm{\mathrm{#1}}}
\def\controlled#1{\mathrm{C}_{#1}}
\def\CZ{\controlled Z}
\def\opCZ{\op{\mathrm{C}}_{Z}}
\def\cH{\mathcal{H}}
\begin{document}

\title{Graphical calculus for Gaussian pure states}

\author{Nicolas C. Menicucci}
\author{Steven T. Flammia}
\affiliation{Perimeter Institute for Theoretical Physics, Waterloo, Ontario N2L 2Y5, Canada}

\author{Peter van Loock}
\affiliation{Optical Quantum Information Theory Group,
Max Planck Institute for the Science of Light}
\affiliation{Institute of Theoretical Physics I, Universit\"{a}t Erlangen-N\"{u}rnberg,
Staudtstr.7/B2, 91058 Erlangen, Germany}

\date{February 4, 2011}

\begin{abstract} 
We provide a unified graphical calculus for all Gaussian pure states, including graph transformation rules for all local and semi-local Gaussian unitary operations, as well as local quadrature measurements. We then use this graphical calculus to analyze continuous-variable~(CV) cluster states, the essential resource for one-way quantum computing with CV systems.  Current graphical approaches to CV cluster states are only valid in the unphysical limit of infinite squeezing, and the associated graph transformation rules only apply when the initial and final states are of this form.  Our formalism applies to all Gaussian pure states and subsumes these rules in a natural way.  In addition, the term ``CV graph state'' currently has several inequivalent definitions in use.  Using this formalism we provide a single unifying definition that encompasses all of them.  We provide many examples of how the formalism may be used in the context of CV cluster states: defining the ``closest'' CV cluster state to a given Gaussian pure state and quantifying the error in the approximation due to finite squeezing; analyzing the optimality of certain methods of generating CV cluster states; drawing connections between this new graphical formalism and bosonic Hamiltonians with Gaussian ground states, including those useful for CV one-way quantum computing; and deriving a graphical measure of bipartite entanglement for certain classes of CV cluster states.  We mention other possible applications of this formalism and conclude with a brief note on fault tolerance in CV one-way quantum computing.
\end{abstract}

\maketitle

\section{Introduction}
\label{sec:intro}

The invention of one-way quantum computing (QC)~\cite{Raussendorf2001} launched an intensive research effort into this new method of QC that eliminates unitary evolution and relies solely on adaptive measurements on a highly entangled state of many qubits called a \emph{cluster state}~\cite{Briegel2001}.  Concurrently, other work was underway generalizing qubit-based QC to QC using continuous-variable~(CV) systems~\cite{Braunstein2005a}.  The two paths merged with the invention of CV cluster states~\cite{Zhang2006}, which were shortly thereafter shown to be capable of serving as the entangled resource in the CV~version of one-way~QC~\cite{Menicucci2006,Gu2009}.

Initially, CV cluster states and the platform of one-way QC making use of them were not believed to be a promising contender for scalable QC~\cite{Menicucci2006}.  It was, however, believed that CV cluster states would be convenient for demonstrating the basic principles of one-way QC since generating such states in the optical context was easier than making ordinary cluster states from optical qubits~\cite{Nielsen2004, Browne2005}. The main reason for this belief was that CV cluster states could be generated deterministically, while getting single photons to interact required nondeterministic gates whose (heralded) failure happens a large fraction of the time~\cite{Knill2001}.  Nevertheless, \emph{ideal} CV cluster states are not achievable since they are singular states (i.e.,~infinitely squeezed) and thus have an infinite average photon number and infinite energy.  Approximate states must therefore be used instead, necessarily leading to errors in any CV one-way QC protocol~\cite{Gu2009,Ohliger2010} (we will say more about fault tolerance in Section~\ref{sec:conc}).  The most natural choice for these approximate states would be multimode squeezed states, but the originally proposed method of making them~\cite{Menicucci2006} involved experimentally arduous inline squeezing operations~\cite{Yurke1985, LaPorta1989, Yoshikawa2007}.  This limited the expected usefulness of the technology.

Shortly after the invention of CV one-way QC, it was shown that inline squeezing was not required at all and that CV cluster states could be generated optically using offline squeezing plus interferometry~\cite{vanLoock2007}. This method involves preparing one single-mode squeezed vacuum state per node of the cluster and sending these states through an appropriately designed network of beamsplitters.  In fact, this method can be used to make any Gaussian state at all~\cite{Braunstein2005}.  This represented a vast simplification for experiments, which quickly demonstrated the viability of this method of generating CV cluster states and their usefulness for simple CV quantum information processing tasks~\cite{Su2007, Yukawa2008, Yonezawa2010, Ukai2010, Miwa2010}.

Concurrent with this work was a separate initiative to generate optical CV cluster states in a single-shot, top-down fashion using just a single optical parametric oscillator~(OPO) consisting of a nonlinear crystal within an optical cavity~\cite{Menicucci2007}.  In this implementation, independent modes are not separated in space (as in previous optical proposals) but are instead taken to be the different frequencies within a single beam.  The initial proposal showed that a single OPO and appropriately designed multifrequency pump beam could, in principle, generate any approximate CV cluster state with a bipartite graph,\footnote{The reader should be aware that we use the term ``cluster state'' where other authors might prefer ``graph state,'' since sometimes in the literature ``cluster state'' is used to refer only to a graph state with a square-lattice graph.  We would refer to such states as ``cluster states with a square-lattice graph.''  Each convention has its proponents, but in the present context, where ``CV graph state'' could have three different meanings (CV cluster state, $\cH$-graph state, or general Gaussian pure state), this convention also serves a clarifying function.}  which includes the universal family of square-lattice graphs.  Further work revealed that this method could generate a multitude of small CV cluster states~\cite{Zaidi2008} or a universal family of CV cluster states~\cite{Menicucci2008, Flammia2009}, using a method that has excellent scaling potential up to a few thousand optical modes with currently available technology~\cite{Pooser2005,Pysher2010,Pysher2009a,Midgley2010}.

Yet another method~\cite{Menicucci2010} reintroduces the experimentally challenging $\CZ$~gate.  But in this case, only \emph{one} such gate is needed because the modes are encoded \emph{temporally}, each traversing the same optical path (but at different times) and each passing multiple times through the same optical hardware implementing the $\CZ$~gate.  This method has the additional advantage that the cluster state is extended as needed---simultaneously with measurements implementing an algorithm on it---in a manner analogous to repeatedly laying down additional track in front of a moving train car (a ``Wallace and Gromit'' approach; see footnote in Ref.~\olcite{Menicucci2010}).  Such a method eliminates the need for long-time coherence of a large cluster state because only a small piece of the state exists at any given time.

While every CV cluster state---regardless of how it is made---can be represented by a graph~\cite{Gu2009}, the single-OPO generation method revealed another type of graph that is useful for describing Gaussian pure states~\cite{Pfister2007}.  This graph indicates the strength and pairings of the two-mode squeezing interactions that act within the OPO, and its adjacency matrix defines the interaction Hamiltonian directly.  Thus, we call these graphs Hamiltonian graphs~\cite{Menicucci2007,Zaidi2008, Menicucci2008, Flammia2009,Flammia2009a}, or $\cH$-graphs for short.\footnote{In Ref.~\olcite{Menicucci2007} the term ``two-mode-squeezing graph'' was used instead of ``$\cH$-graph.''  These terms are synonymous, and only the latter will be used in this paper.}  Despite this natural way of representing the Hamiltonian interactions by graphs, the resulting states are \emph{not} ``CV graph states'' in the sense of a CV cluster state with the same graph as the $\cH$-graph, although they can be interpreted as CV cluster states with (in general) a \emph{different} graph~\cite{Menicucci2007}.  This creates an ambiguity in the meaning of a ``CV graph state.''

Independently of this work, Zhang showed that ideal CV cluster states admit graph transformation rules that correspond to local Gaussian operations mapping them to other ideal CV cluster states~\cite{Zhang2008}.  These rules bear some similarity to the corresponding rules for qubit cluster states~\cite{VandenNest2004, Hein2004}
but they are not exactly equivalent.  (Related work has also been done for qubit stabilizer states and local Clifford transformations~\cite{Schlingemann2002,Schlingemann2004,Elliott2008,Elliott2010}.)  Further revisions showed that ideal CV cluster states with \emph{weighted} graphs were necessary for a more complete understanding of the graph transformation rules~\cite{Zhang2008a}---a consequence of the continuity of the quantum variables in question, as opposed to the binary nature of qubits, whose graphs are necessarily unweighted.\footnote{Weighted graph states for qubits have been defined~\cite{Hein2006} but they are not stabilizer states.  When dealing with CVs, however, weighted graphs occur naturally because the entangling operation that makes a CV cluster state necessarily has a strength (which can be---but need not be---chosen equal for all interactions).  This strength becomes the weight for the corresponding edge in the graph.  Unlike their similarly named but non-analogous qubit counterparts, CV cluster states with weighted graphs \emph{are} CV stabilizer states.}  The effect of quadrature measurements, which can be used to implement any Gaussian operation in CV one-way QC~\cite{Ukai2009,Gu2009}, has recently been incorporated into the formalism, as well~\cite{Zhang2010}.

This original graphical calculus, while useful for demonstrating local Gaussian equivalence of CV cluster states, has several limitations.  First, for the weighted as well as for the unweighted case, only ideal (i.e.,~infinitely squeezed) CV cluster states can be represented.  As mentioned before, these states are not physical.  Neither their approximating Gaussian states nor any other Gaussian state can be represented in the formalism.  Second, there are many Gaussian operations (for instance, the very common Fourier transform) that do not map CV cluster states to other CV cluster states and thus cannot be represented as a graph transformation.  Third, no connection is made with $\cH$-graphs; the rules apply only to CV cluster-state graphs.  Nonetheless, Zhang's formalism is exciting because it promises an intuitive visual way of manipulating CV cluster states, paralleling similar tools for qubit cluster states~\cite{VandenNest2004, Hein2004}.

In this paper we generalize these rules in a consistent fashion to cover all Gaussian pure states, including approximate CV cluster states.  This includes physical states generated by the action of a Hamiltonian with an associated $\cH$-graph.  Because the details of a Gaussian pure state are displayed in its graphical representation (and represented \emph{uniquely} within it), this formalism can be used to quantify the deviations from ideality for any approximate CV cluster state.  Furthermore, the formalism can also be used to identify the ``closest'' CV cluster state to any given Gaussian pure state, and it is useful when considering physical systems whose ground states would be useful for CV one-way QC~\cite{Doherty2009}.  We also make connections with a measure of bipartite entanglement in Gaussian pure states. In certain cases, this admits a simple graphical rule.

In what follows, we shall (1)~define the unique graph associated with any Gaussian pure state, (2)~derive the transformation rules for all Gaussian unitary operations, (3)~illustrate the effect of local Gaussian unitary operations in graphical form, showing that they faithfully generalize Zhang's rules~\cite{Zhang2008a,Zhang2010}, and (4)~illustrate the connection of this formalism to approximate CV cluster state generation via an $\cH$-graph Hamiltonian, and (5)~provide several applications of the formalism to analysis of physical states and the $\cH$-graph generation method, as well as Hamiltonian ground states and bipartite entanglement.  The connection with $\cH$-graphs answers an important question about the method proposed in Refs.~\cite{Menicucci2007,Zaidi2008,Menicucci2008,Flammia2009}---namely, what happens when the method is used on \emph{physical} states.  Previous connections between $\cH$-graphs and CV cluster-state graphs have only been rigorously made in the unphysical limit of infinite squeezing.  This formalism allows the important effects of finite squeezing to be properly accounted for while remaining entirely within the intuitive framework of the graphical representation.

The mathematics behind this formalism is the complex matrix formalism for representing and manipulating Gaussian pure states~\cite{Simon1988}.  When we interpret these matrices as adjacency matrices for complex-weighted graphs, transformations using the symplectic representation can also be interpreted in graph-theoretic terms.  In this formalism, real-weighted graphs, representing idealized, unphysical, infinitely squeezed Gaussian states~\cite{Zhang2008,Zhang2008a,Zhang2010}, are generalized to complex-weighted graphs that uniquely specify realistic, physical, finitely squeezed Gaussian pure states.  This extension includes generalizing the real-valued graph-state nullifiers for ideal CV cluster states~\cite{Gu2009} to complex-valued nullifiers for physical CV cluster states. These results are closely related to the stabilizer formalism for Gaussian pure states, which, though utilized to some extent for a proof of the CV version of the Gottesman-Knill theorem~\cite{Bartlett2002} and often mentioned as a straightforward generalization from finite-dimensional systems~\olcite{Fattal2004}, has not been fully explored yet.  While we are building on existing mathematics, what is new in this work---beyond the straightforward graph-theoretic interpretation of that mathematics---is showing its natural connection to CV cluster states, plus all of the examples, applications, and avenues for future work that open up as a result.  We will have more to say about the context and importance of our work in Section~\ref{sec:conc}.

\section{Gaussian Pure States and Complex-Weighted Graphs}
\label{sec:gauss}

\subsection{CV Cluster States}
\label{subsec:gauss:cvcs}

The motivation for a graphical study of Gaussian pure states begins with CV cluster states~\cite{Zhang2006,Menicucci2006,Gu2009}.  In the ideal case, CV cluster states are prepared beginning with a collection of $N$~zero-momentum eigenstates, which we write as~$\pket{0}^{\otimes N}$, where the $p$-subscripted kets satisfy~$\op p\pket s = s \pket s$.  These states are then entangled via a collection of controlled-$Z$ operations, denoted~$\opCZ = \exp (i g \op q \otimes \op q)$, where the real number~$g$ is the strength of the interaction.  Since all $\CZ$ operations commute, they can be performed in any order (or simultaneously), which leads naturally to the use of graphs as recipes for generating particular CV cluster states.  An example of such a graph---and the CV cluster-state recipe it encodes---is illustrated in Figure~\ref{fig:CVgraph}.  Each node represents a zero-momentum eigenstate, and edges indicate a $\CZ$ operation to be performed between the two connected nodes.  The strength~$g$ of the interaction is indicated by the label, or \emph{weight}, of the associated edge.  As such, \emph{weighted graphs} with real-valued weights are the natural language for depicting ideal CV cluster states.

\begin{figure}[!tbp]
\begin{center}
\beginpgfgraphicnamed{CVCS_orig}
\begin{tikzpicture} [scale=1,label distance=-2pt]

	\def\orig{(0,0)}
	\def\symb{(0.6*\scale,-0.5*\scale)}
	\def\circuitloc{(2.5*\scale,-0.5*\scale)}
	
	\def\scale{1.5}
	
	\def\circuitinnersep{6pt}

	\tiny

	\path \orig [outer sep=0pt]
		node [micro-odd,label=45:1] (1) {}
		++(180:\scale) node [micro-odd,label=135:2] (2) {}
		++(-90:\scale) node [micro-odd,label=225:3] (3) {}
		++(0:\scale) node [micro-odd,label=315:4] (4) {};

	\footnotesize

%
%
	\path (1)
		edge [unitlink] node [above=-1pt] {$1$} (2)
		edge [unitlink] node [below right=-2.5pt] {$3$} (3);

	\path (2)
		edge [unitlink] node [left=-2pt] {$2$} (3);
	
	\path (3)
		edge [unitlink] node [below=-1pt] {$-1$} (4);
	
	\path \symb
		node {$\Longleftrightarrow$};
	
	\small

	\path \circuitloc
	node [inner sep=\circuitinnersep] (circuit)
	{$\mbox{
	\Qcircuit @C=1.80em @R=2.2em @!R @!C {
		\lstick {\ketsub {0}{p_1}}	& \ctrl{1}	&  \qw 		& \ctrl{2}		& \qw	\\
		\lstick {\ketsub {0}{p_2}}	& \control \qw	& \ctrl{1}		& \qw		& \qw	\\
		\lstick {\ketsub {0}{p_3}}	& \ctrl{1}		& \control \qw	& \control \qw	& \qw	\\
		\lstick {\ketsub {0}{p_4}}	& \control \qw	& \qw		& \qw		& \qw	
	}
	}
	$};
	
	\footnotesize
	
	\path (circuit.west)	++(\circuitinnersep-9pt,0)
					+(6em,0) node {$2$}
					+(4em,2.4em) node {$1$}
					+(4em,-2.4em) node [xshift=0.4em] {$-1$}
					+(8em,2.4em) node {$3$};

\end{tikzpicture}
\endpgfgraphicnamed
\caption{Original formulation of weighted graphs for ideal CV cluster states.  Ideal CV cluster states are represented by undirected graphs with real-weighted edges~\cite{Zhang2006,Menicucci2006,vanLoock2007,Menicucci2007,Zaidi2008,Menicucci2008,Flammia2009,Gu2009}.  (Unweighted graphs are a special case with all weights equal to~1.)  Each graph uniquely defines a recipe (i.e.,~a quantum circuit) for creating a CV cluster state, as illustrated above: (1)~each node represents a state that is infinitely squeezed in the $\op p$~quadrature~$\pket 0$; (2)~$\CZ$~gates are applied between modes in accordance with the graph, with the weight~$g$ of an edge corresponding to the strength of the interaction~$\opCZ(g) = e^{ig\op q \otimes \op q}$ between the two nodes connected.  These states are unphysical because they cannot be normalized.  Instead they are approximated in physical applications by very highly squeezed states.}
\label{fig:CVgraph}
\end{center}
\end{figure}
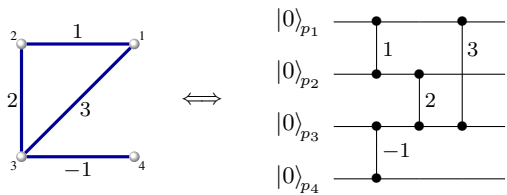

Labeling the nodes of the graph in some arbitrary order, we can define a symmetric \emph{adjacency matrix}~$\mat A=\mat A^\tp$ whose $(j,k)^{\text{th}}$ entry~$A_{jk}$~is equal to the weight of the edge linking node~$j$ to node~$k$ (with no edge corresponding to a weight of~0).  Since a graph is uniquely specified (up to isomorphism) by its adjacency matrix, we will often omit the distinction between the two and refer unambiguously to ``the graph~$\mat A$''.  The collection of controlled-$Z$ operations used to make the CV cluster state is now a function of~$\mat A$, denoted~$\opCZ[\mat A]$.  The CV cluster state associated with the graph~$\mat A$ is then
\begin{align}
\label{eq:psiA}
	\ket{\psi_{\mat A}} &= \opCZ[\mat A] \pket 0^{\otimes N} \nonumber \\
	&= \prod_{\mathclap{j,k=1}}^N \exp \left(\frac i 2 A_{jk} \op q_j \op q_k \right) \pket 0^{\otimes N} \nonumber \\
	&= \exp \left( \frac i 2 \opvec q^\tp \mat A \opvec q \right) \pket {0}^{\otimes N}\,,
\end{align}
where $\opvec q = (\op q_1, \dotsc, \op q_N)^\tp$~is a column vector of Schr\"odinger-picture position operators.  The factor of~2 is necessary because each edge weight appears twice in the sum (as~$A_{jk}$ and as~$A_{kj}$).

Ideal CV cluster states satisfy a set of \emph{nullifier} relations~\cite{Gu2009,Menicucci2007}, which can be written concisely as
\begin{align}
\label{eq:psiAnull}
	(\opvec p - \mat A \opvec q) \ket {\psi_{\mat A}} = \vec 0\,,
\end{align}
where $\opvec p = (\op p_1, \dotsc, \op p_N)^\tp$~is a column vector of Schr\"odinger-picture momentum operators.  This equation actually represents $N$~independent equations, one for each component of the vector~$(\opvec p - \mat A \opvec q)$, which are called \emph{nullifiers} for the state~$ \ket {\psi_{\mat A}}$ because that state is a simultaneous zero-eigenstate of them (and of any linear combination of them).

These ideal CV cluster states admit a convenient graphical representation in terms of the adjacency matrix~$\mat A$.  Some local Gaussian unitary operations~\cite{Zhang2008a} and quadrature measurements~\cite{Zhang2010} can be represented as convenient graphical update rules.  As discussed in the introduction, this graphical formalism is elegant and intuitive but also has several limitations---most notably, the restriction to ideal (and hence non-physical) CV cluster states and to a subset of all local Gaussian unitaries (those which map ideal CV cluster states to other ideal CV cluster states).

Approximate CV cluster states are those for which measurements of each nullifier give values that are \emph{close to zero}.  Quantifying this, we would say that an approximate CV cluster state is any member of a family of Gaussian pure states, indexed by an overall squeezing parameter~$\alpha$ for which
\begin{align}
\label{eq:approxCVCS}
	\lim_{\alpha \to \infty} \cov{(\opvec p - \mat A \opvec q)} = \mat 0\,,
\end{align}
where the covariance matrix of a vector of operators has components defined as the symmetrized expectation value
\begin{align}
\label{eq:covcomponents}
	(\cov \opvec r)_{jk} := \frac 1 2 \avg{ \{\op r_j^\dag, \op r_k\} }\,,
\end{align}
with the expectation taken over the state of interest---in this case, the $\alpha$-indexed approximate CV cluster state~$\ket {\psi_{\mat A}(\alpha)}$.  (Notice that this definition assumes that the state has zero mean.)  The limit in Eq.~\eqref{eq:approxCVCS} is defined component-wise for all entries in the covariance matrix, but because the matrix is positive definite, it is sufficient to require that the relation hold only for the diagonal elements:
\begin{align}
\label{eq:approxCVCSexplicit}
	\lim_{\alpha \to \infty} \bra {\psi_{\mat A}(\alpha)} \bigl( \op p_j - \sum_k A_{jk} \op q_k  \bigr)^2 \ket {\psi_{\mat A}(\alpha)} = 0 \quad \forall j\,.
\end{align}
Any $\alpha$-indexed family of Gaussian pure states~$\{\ket {\psi_{\mat A}(\alpha)}\}$ satisfying Eq.~\eqref{eq:approxCVCS} [or Eq.~\eqref{eq:approxCVCSexplicit}] defines a family of approximate CV cluster states with graph~$\mat A$.  We would like these Gaussian pure states to be representable directly in a graphical formalism in a manner that takes into account their unique deviation from ideality.

In addition to this, there are plenty of other useful Gaussian pure states that are not approximate CV cluster states at all.  For instance, a two-mode squeezed state satisfies
\begin{align}
\label{eq:TMSstate}
	\begin{split}
		\var(\op q_1 - \op q_2) &= e^{-2\alpha}\,, \\
		\var(\op p_1 + \op p_2) &= e^{-2\alpha}\,.
	\end{split}
\end{align}
Such states are readily made in the lab by passing position-squeezed and momentum-squeezed beams through a 50:50 beamsplitter~\cite{Furusawa1998,vanLoock1999,Braunstein2005a} or directly by nondegenerate parametric downconversion~\cite{Reid1988,Reid1989,Drummond1990,Ou1992a,Braunstein2005a}.  Despite being an $\alpha$-indexed family of multimode squeezed states whose variances tend to zero as $\alpha \to \infty$, these states do not satisfy Eq.~\eqref{eq:approxCVCS} for any choice of (finitely-weighted) graph~$\mat A$, which means they cannot be represented within the existing graphical formalism for CV cluster states~\cite{Zhang2008a}---even in the limiting case where $\alpha \to \infty$.  This is unfortunate since the two-mode squeezed state is related to a two-mode CV cluster state by a Fourier transform on one of the modes---a local Gaussian unitary~\cite{Menicucci2007} that is one of the simplest to perform experimentally.  That this equivalence cannot be represented in the graphical formalism---in either the ideal or the approximate case---is a significant drawback.  Our formalism addresses all these concerns.

\subsection{Desired properties of the graphical calculus}
\label{subsec:gauss:props}

In our attempt to generalize the formalism of References~\olcite{Zhang2008a,Zhang2010}, we desire a unified graphical calculus that has the following properties:
\begin{enumerate}
\item All Gaussian pure states can be represented uniquely, up to phase-space displacements, as graphs.
\item All Gaussian unitaries---local or otherwise---can be represented uniquely, up to phase-space displacements, as graph transformations.
\item All local projective measurements of quadrature operators can be represented uniquely, up to phase-space displacements, as graph transformations.
\item The representation of a family of approximate CV cluster states faithfully limits to the standard graph representation of the associated ideal CV cluster state in the limit of large squeezing.
\item The representation of local Gaussian unitaries and projective measurements of quadrature operators acting on a family of approximate CV cluster states faithfully reproduces Zhang's rules~\cite{Zhang2008a,Zhang2010} in the limit of large squeezing.
\end{enumerate}
In addition to these requirements, we would also like the graphical calculus to be useful for the following purposes:
\begin{itemize}
\item Visualize the entanglement structure of a Gaussian pure state.
\item Consider finite squeezing effects within the graphical formalism, including their effect on one-way QC using an approximate CV cluster state.
\item Make a connection with $\cH$-graphs and their usefulness in generating CV cluster states~
\cite{Menicucci2007,Zaidi2008,Menicucci2008}, including possible graph transformation rules between the two types of graphs.
\end{itemize}
This is not meant to be an exhaustive list.  We expect that other uses will present themselves as the formalism gets applied to actual calculations.

\subsection{Symplectic representation of Gaussian pure states}
\label{subsec:gauss:symp}

In the present and the following subsection, we shall review the complex-matrix formalism~\cite{Simon1988} for representing Gaussian pure states and their transformations among each other, adapt it to our notation, and recast it for our purposes.

It is well known that all $N$-mode Gaussian pure states can be created by acting on the ground state of $N$~harmonic oscillator with a unitary operation whose Heisenberg action on the vector of quadrature operators is a symplectic transformation, followed by a phase-space displacement.  These are sometimes called linear unitary Bogoliubov~(LUBO) transformations~\cite{Braunstein2005a}, but we will not use this term.  Furthermore, we will neglect the phase-space displacement altogether since we only desire that our graphical formalism describe the noise properties of the state, which do not depend on overall displacement.

Stacking~$\opvec q$ on top of~$\opvec p$ to form a column vector called $\opvec x = \left( \begin{smallmatrix} \opvec q \\ \opvec p \end{smallmatrix} \right)$, the Heisenberg action of a Gaussian unitary operation~$\op U$ takes the form
\begin{align}
	\opvec x' = \op U^\dag \opvec x \op U = \mat S \opvec x\,,
\end{align}
where $\mat S$ is a symplectic matrix of c-numbers that acts via matrix multiplication on $\opvec x$ as a vector, while $\op U$ is a unitary operator that acts on the individual operators within~$\opvec x$.  Specifically,
\begin{align}\label{eq:specific_symplectic_trafo}
	\op x_j' = \op U^\dag \op x_j \op U = \sum_{k=1}^{2N} S_{jk} \op x_k\,.
\end{align}
Notice that in general there would be a phase-space displacement term, which would give $\opvec x' = \mat S \opvec x + \vec y$, but we are neglecting this.  There is a unique~$\mat S$ for every Gaussian~$\op U$, and there is a unique~$\op U$ (up to an overall phase) for every symplectic~$\mat S$.  This correspondence can be chosen to faithfully preserve composition and map the identity operator~$\op \id$ to the identity matrix~$\mat \id$, thus giving a symplectic representation of the Gaussian unitary group~\cite{Arvind1995}.

The symplectic nature of~$\mat S$ is guaranteed because the commutation relations must be preserved, giving rise to a symplectic form~$\mat \Omega$ to be preserved by the Heisenberg matrix action.  The explicit form of~$\mat \Omega$ may be deduced by writing out the commutation relations for $\opvec x$ and requiring them to be unchanged under the Gaussian unitary operation.  The canonical commutation relations~$[\op q_j, \op p_k] = i \delta_{jk}$ (with $\hbar = 1$) can be written succinctly as
\begin{align}
\label{eq:xcomm}
	[ \opvec x, \opvec x^\tp ] = i
	\begin{pmatrix}
		\mat 0	& \mat \id	\\
		-\mat \id	& \mat 0
	\end{pmatrix}
	=: i \mat \Omega\,,
\end{align}
where the commutator of two operator-valued vectors is defined as
\begin{align}
\label{eq:commdef}
	[ \opvec r, \opvec s^\tp ] := \opvec r \opvec s^\tp - (\opvec s \opvec r^\tp)^\tp\,.
\end{align}
Note that the transpose operation acts only on the entries in the matrix (or vector), leaving the actual operators involved alone.%
\footnote{\emph{Note on notational conventions.} %
Because we are dealing with operator-valued matrices (or vectors in this case), the transpose operation must be carefully defined.  In fact, we define it in a natural way to simply exchange the indices of an operator-valued matrix [$(\opmat A{}^\tp)_{jk} = (\opmat A)_{kj}$] and leave the entries themselves alone.  That is, each entry in the matrix---which is itself an operator---does \emph{not} get a transpose applied to it.  It is then no longer the case that matrix transposition follows the usual distributive rule---i.e.,\ $(\opmat A \opmat B)^\tp \neq \opmat B{}^\tp \opmat A{}^\tp$ because the operators end up in the wrong order.  Rather than being a problem, we can use this feature to define the commutator-product of two operator-valued column vectors~$\opvec r$ and~$\opvec s$ as in Eq.~\eqref{eq:commdef}.  This has the desired property of forming a matrix whose entries are the commutators in question:
\begin{align*}
	\left ([ \opvec r, \opvec s^\tp ] \right)_{jk} = \op r_j \op s_k - \op s_k \op r_j = [ \op r_j, \op s_k ]\,.
\end{align*}
This is how Eq.~\eqref{eq:xcomm} should be interpreted.  Also note that c-number matrices acting on the vectors within the commutator factor out:
\begin{align*}
	[ \mat A \opvec r, (\mat B \opvec s)^\tp ] &= \mat A \opvec r \opvec s^\tp \mat B^\tp - (\mat B \opvec s \opvec r^\tp \mat A^\tp)^\tp \nonumber \\
	&= \mat A \opvec r \opvec s^\tp \mat B^\tp - \mat A (\opvec s \opvec r^\tp)^\tp \mat B^\tp \nonumber \\
	&= \mat A [\opvec r, \opvec s^\tp ] \mat B^\tp\,.
\end{align*}
The usefulness here stems from the ability to ``vectorize'' expressions using operators in a natural way.  This will be useful in what follows.}  %
Requiring that the quadrature-operator commutators remain unchanged after the Gaussian operation gives
\begin{align}
	i \mat \Omega &=[ \opvec x', \opvec x'^\tp] \nonumber \\
	&=[ \mat S \opvec x, (\mat S\opvec x)^\tp] \nonumber \\
	&= \mat S[ \opvec x, \opvec x^\tp] \mat S^\tp \nonumber \\
	&= i \mat S \mat \Omega \mat S^\tp\,.
\end{align}
Noting that the $i$'s cancel, this equation is exactly the defining relation for any ${2N\times 2N}$ square matrix $\mat S$~to be a symplectic matrix with symplectic form~$\mat \Omega$.  The symplectic nature of~$\mat S$ is therefore guaranteed and required by the need to preserve the canonical commutation relations, which themselves play the role of the symplectic form~$\mat \Omega$ (up to an overall factor of~$i$), as shown in Eq.~\eqref{eq:xcomm}.

A Gaussian pure state (with zero mean) is uniquely specified by its covariance matrix.  We will write the symmetrized covariance matrix for an operator-valued vector as
\begin{align}
\label{eq:covmat}
	\cov \opvec r &= \frac 1 2 \avg{ \{ \opvec r^\dag, \opvec r^\tp \} }\,,
\end{align}
which accords with Eq.~\eqref{eq:covcomponents} if we define the anti-commutator product as
\begin{align}
\label{eq:anticommdef}
	\{ \opvec r, \opvec s^\tp \} := \opvec r \opvec s^\tp + (\opvec s \opvec r^\tp)^\tp\,,
\end{align}
which is analogous to Eq.~\eqref{eq:commdef}, and if we require that Hermitian conjugation (indicated by~$^\dag$) apply only to the operators within the vector \emph{without} transposing the vector itself.  We will use the notation $^\herm$ to indicate transposition of the vector and conjugation of its entries:
\begin{align}
\label{eq:hermdef}
	\opvec r^\herm := (\opvec r^\dag)^\tp = (\opvec r^\tp)^\dag\,.
\end{align}
These caveats are unimportant for our current purposes because~$\opvec x = \opvec x^\dag$, but they will be necessary later on when we wish to take the covariance matrix associated with non-Hermitian operators.  Even in those cases, Eqs.~\eqref{eq:covcomponents} and~\eqref{eq:covmat} still hold.

Since every $N$-mode Gaussian pure state can be obtained by acting with a Gaussian unitary operation on the ground state of $N$~independent harmonic oscillators, we can use the symplectic representation of this operation to parametrize these states.  To eliminate units in~$\opvec q$ and~$\opvec p$, we will normalize the covariance matrix of the $N$-mode ground state to be
\begin{align}
\label{eq:covground}
	\cov \opvec x_0 = \frac 1 2 \mat \id\,,
\end{align}
where $\opvec x_0$~is the vector of Heisenberg operators associated with this state.  This means that $\var{\op q_{0j}} = \var{\op p_{0j}} = \tfrac 1 2$ for every mode~$j$.  The Heisenberg operators for any Gaussian pure state can be obtained from $\opvec x_0$ by acting with a symplectic matrix, resulting in a covariance matrix of
\begin{align}
\label{eq:covfromS}
	\cov \opvec x &= \cov (\mat S \opvec x_0) \nonumber \\
	&= \frac 1 2 \avg{ \{ (\mat S \opvec x_0)^\dag, (\mat S \opvec x_0)^\tp \} } \nonumber \\
	&= \frac 1 2 \mat S \avg{ \{ \opvec x_0^\dag, \opvec x_0^\tp \} } \mat S^\tp \nonumber \\
	&= \mat S (\cov \opvec x_0) \mat S^\tp \nonumber \\
	&= \frac 1 2 \mat S \mat S^\tp\,.
\end{align}
(For some thoughts on how to generalize this to mixed Gaussian states, see Appendix~\ref{app:mixed}.)  Since the covariance matrix uniquely defines a Gaussian state, by Eq.~\eqref{eq:covfromS} so does~$\mat S \mat S^\tp$.  To be practically useful for our purposes, we need a graph representation of~$\mat S$ (or more accurately, of~$\mat S \mat S^\tp$) and useful transformation rules for this representation.  To this end, we will decompose~$\mat S$ and use the resulting matrix factors to define the adjacency matrix for an associated graph.

There are a number of ways to decompose a symplectic matrix, but the one we will be interested in is the following \emph{unique} decomposition for any symplectic~$\mat S$~\cite{Simon1988}:
\begin{align}
\label{eq:preiwasawa}
	\mat S =
	\begin{pmatrix}
		\mat \id	& \mat 0	\\
		\mat V	& \mat \id
	\end{pmatrix}
	\begin{pmatrix}
		\mat U^{-1/2}	& \mat 0	\\
		\mat 0		& \mat U^{1/2}
	\end{pmatrix}
	\begin{pmatrix}
		\mat X	& -\mat Y	\\
		\mat Y	& \mat X
	\end{pmatrix}
	\,,
\end{align}
where $\mat U$~is symmetric and positive definite ($\mat U = \mat U^\tp >~0$), $\mat V$~is symmetric (but not necessarily positive definite), and the third matrix is orthogonal and thus irrelevant in the product~$\mat S \mat S^\tp$.\footnote{Physically, in an optical setting for instance, this term corresponds to a passive interferometer, which can be implemented using just beamsplitters and phase shifters.  These have no effect on the vacuum.}  Therefore, while this expansion is unique for a given~$\mat S$~\cite{Simon1988}, since we only care about~$\mat S \mat S^\tp$ we can fix~$\mat X = \mat \id$ and~$\mat Y = \mat 0$ and, after multiplying the right hand side above, \emph{define}
\begin{align}
\label{eq:Suv}
	\mat S_{(\mat U, \mat V)} :=
	\begin{pmatrix}
		\mat U^{-1/2}			& \mat 0	\\
		\mat V \mat U^{-1/2}		& \mat U^{1/2}
	\end{pmatrix}
	\,.
\end{align}
Using Eq.~\eqref{eq:covfromS}, the covariance matrix associated with this state is
\begin{align}
\label{eq:covmatuv}
	\mat \Sigma_{(\mat U, \mat V)} &= \frac 1 2 \mat S_{(\mat U, \mat V)} \mat S_{(\mat U, \mat V)}^\tp \nonumber \\
	&= \frac 1 2
	\begin{pmatrix}
		\mat U^{-1}			& \mat U^{-1} \mat V	\\
		\mat V \mat U^{-1}	& \mat U + \mat V \mat U^{-1} \mat V
	\end{pmatrix}
	\,.
\end{align}

Using this we can immediately write down the Wigner function for the state
\begin{align}
\label{eq:Wignerfromuv}
	W_{(\mat U, \mat V)}(\vec x) &= (2\pi)^{-N} (\det \mat \Sigma_{(\mat U, \mat V)})^{-1/2} \nonumber \\
	&\qquad \times \exp\left[ - \frac 1 2 \vec x^\tp \mat \Sigma_{(\mat U, \mat V)}^{-1} \vec x \right] \nonumber \\
	&= \pi^{-N} \exp\left[ - \left(\mat S_{(\mat U, \mat V)}^{-1} \vec x\right)^\tp \left(\mat S_{(\mat U, \mat V)}^{-1} \vec x\right) \right]\,,
\end{align}
and since the state is pure, we can also write down its position-space wave function (up to an arbitrary overall phase)
\begin{align}
\label{eq:psifromuv}
	\psi_{(\mat U,\mat V)}(\vec q) &= \pi^{-N/4} (\det \mat U)^{1/4} \exp \left[- \frac 1 2 \vec q^\tp (\mat U - i\mat V) \vec q \right]\,.
\end{align}
Notice that $\vec q$, $\vec p$, and~$\vec x = \left( \begin{smallmatrix} \vec q \\ \vec p \end{smallmatrix} \right)$ are c-number column vectors that correspond to their respective operator-valued counterparts.  Any of these four equations can be used to define a Gaussian pure state from any pair of $N \times N$ symmetric matrices, $\mat U$ and~$\mat V$, with~$\mat U > 0$ ensuring physicality of the state.  Equation~\eqref{eq:Suv} defines the Heisenberg quadrature variables~$\opvec x = \mat S_{(\mat U, \mat V)} \opvec x_0$ associated with the state in question, and Eq.~\eqref{eq:covmatuv} gives the (symmetrized) covariance matrix, from which the Wigner function, Eq.~\eqref{eq:Wignerfromuv}, may be readily obtained.  If one wishes to work with wave functions, then Eq.~\eqref{eq:psifromuv} can be used.  Inversion of these relations to find~$\mat U$ and~$\mat V$ is straightforward.  The ground state corresponds to~$\mat U = \mat \id$ and~$\mat V = \mat 0$.

\subsection{Gaussian pure states as undirected graphs with complex-weighted edges}
\label{subsec:gauss:graph}

The complex combination~$\mat U - i\mat V$ that appears in Eq.~\eqref{eq:psifromuv} is suggestive of a way to unify the two symmetric matrices that define a Gaussian pure state.  Instead, we will multiply this by~$i$ and define
\begin{align}
\label{eq:Zdef}
	\mat Z := \mat V + i \mat U
\end{align}
for reasons that will become clear shortly.  This complex, symmetric matrix~$\mat Z$ is only useful, of course, if it has nice transformation properties under Gaussian unitary operations.  In fact, this is the case.  Defining
\begin{align}
\label{eq:SABCD}
	\mat S =
	\begin{pmatrix}
		\mat A	& \mat B	\\
		\mat C	& \mat D
	\end{pmatrix}
\end{align}
as the symplectic matrix corresponding to the evolution in question, if the initial state corresponds to~$\mat Z$ as above then the new state after the evolution will correspond to~\cite{Simon1988}
\begin{align}
\label{eq:Zprime}
	\mat Z' = (\mat C + \mat D \mat Z)(\mat A + \mat B \mat Z)^{-1}\,.
\end{align}
We will interpret~$\mat Z$ as the adjacency matrix for an undirected graph with complex-valued edge weights, thus providing our graph representation for any Gaussian pure state.  A rigorous derivation of this relation and of the unique map from Gaussian pure states to graphs is included in Appendix~\ref{app:derivations}.

\subsection{Gaussian graphs from expectation values of observables}
\label{subsec:gauss:expval}

An important operational question is, \emph{How can the graph for a Gaussian pure state be obtained from the statistics of measurements made on the system?}  To answer it, it is useful to consider the covariance matrix from Eq.~\eqref{eq:covmatuv}.  We can immediately extract~$\mat U$ from the upper-left block:
\begin{align}
\label{eq:ufromsigma}
	\mat U &= (2 \cov \opvec q)^{-1} = \frac 1 2 \avgg { \opvec q \opvec q^\tp }^{-1} \,.
\end{align}
Once we have~$\mat U$, extracting~$\mat V$ from the upper-right block is straightforward:
\begin{align}
\label{eq:vfromsigma}
	\mat V &= \mat U \avgg { \{ \opvec q, \opvec p^\tp \} } \nonumber \\
	&= \frac 1 2 \avgg { \opvec q \opvec q^\tp }^{-1} \avgg { \{ \opvec q, \opvec p^\tp \} }\,.
\end{align}
Putting these together gives
\begin{align}
\label{eq:Zfromsigma}
	\mat Z &= \mat V + i\mat U \nonumber \\
	&= \frac 1 2 \avgg { \opvec q \opvec q^\tp }^{-1} \left( \avgg { \{ \opvec q, \opvec p^\tp \} } + i \mat \id \right) \nonumber \\
	&= \frac 1 2 \avgg { \opvec q \opvec q^\tp }^{-1} \left( \avgg { \{ \opvec q, \opvec p^\tp \} } + \avgg { [ \opvec q, \opvec p^\tp ] } \right) \nonumber \\
	&= \avgg { \opvec q \opvec q^\tp }^{-1} \avgg { \opvec q \opvec p^\tp }\,.
\end{align}
Equation~\eqref{eq:Zfromsigma} shows how to extract~$\mat Z$ from the expectation values of $\op q_j \op q_k$ and $\op q_j \op p_k$, with the latter obtainable from the observables~$(\op q_j \op p_k + \op p_k \op q_j)$ using the form on the second line.

\subsection{Approximate CV cluster states}
\label{subsec:gauss:CVCS}

The graph representative of a Gaussian pure state defined above is, in fact, the natural way to extend the graph representation of ideal CV cluster state to their finitely squeezed Gaussian approximants.  The canonical method for creating a CV cluster state~\cite{Menicucci2006} is to squeeze all modes as much as possible in the momentum quadrature and then to apply~$\opCZ[\mat A]$ in accord with a (real-weighted) graph~$\mat A$.  An ideal cluster state~$\ket{\psi_{\mat A}}$ from Eq.~\eqref{eq:psiA} is obtained by taking the limit of infinite initial squeezing on all the modes.  Let's see what this looks like in our formalism.

The symplectic transformation corresponding to the canonically generated CV cluster state consists of two parts: the initial single-mode squeezing and the controlled-$Z$ operations.  If we take all modes to be momentum-squeezed such that their variance is reduced by a factor of~$e^{-2r}$, followed by $\opCZ[\mat A]$, this corresponds to a total symplectic transformation of
\begin{align}
	\begin{pmatrix}
		\mat \id	& \mat 0	\\
		\mat A	& \mat \id
	\end{pmatrix}
	\begin{pmatrix}
		e^r \mat \id	& \mat 0 \\
		\mat 0		& e^{-r} \mat \id
	\end{pmatrix}
	\,.
\end{align}
Comparing this with Eq.~\eqref{eq:Suv}, we can immediately read off that~$\mat U = e^{-2r} \mat \id$ and~$\mat V = \mat A$, and we find that
\begin{align}
\label{eq:canonicalZ}
	\mat Z_r := \mat A + ie^{-2r} \mat \id
\end{align}
corresponds to an $r$-indexed family of approximate CV cluster states with graph~$\mat A$ since
\begin{align}
	\lim_{r \to \infty} \mat Z_r = \mat A\,.
\end{align}
However, there are many other families of Gaussian pure states that fit the bill, including one that will be useful later (in Section
~\ref{sec:apps}):
\begin{align}
\label{eq:Zalphafamily}
	\mat Z_\alpha = i \sech 2\alpha\, \mat I + \tanh 2\alpha\, \mat A\,,
\end{align}
which satisfies~$\lim_{\alpha \to \infty} \mat Z_\alpha = \mat A$.  Figure~\ref{fig:complexgraphs} illustrates the correspondence.

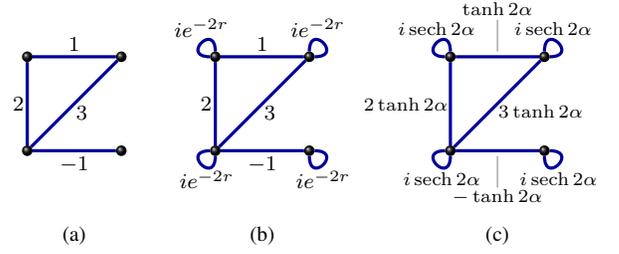
\begin{figure}[!tbp]
\begin{center}
\beginpgfgraphicnamed{CVCS_approx}
\begin{tikzpicture} [scale=1,label distance=-2pt, pin distance=2em]

	\footnotesize

	\def\orig{(0,0)}
	\def\canon{(2*\scale,0)}
	\def\hgraph{(4.5*\scale,0)}
	\def\labelloc{(-0.5*\scale,-1.9*\scale)}

	\def\scale{1.25}

	\path \orig +\labelloc node {(a)};
	\path \orig [outer sep=0pt]
		node [micro] (1) {}
		++(180:\scale) node [micro] (2) {}
		++(-90:\scale) node [micro] (3) {}
		++(0:\scale) node [micro] (4) {};

	\path (1)
		edge [unitlink] node [above=-1pt] {$1$} (2)
		edge [unitlink] node [below right=-2.5pt] {$3$} (3);

	\path (2)
		edge [unitlink] node [left=-2pt] {$2$} (3);
	
	\path (3)
		edge [unitlink] node [below=-1pt] {$-1$} (4);

	\path \canon +\labelloc node {(b)};
	\path \canon [outer sep=0pt]
		node [micro] (1) {}
		++(180:\scale) node [micro] (2) {}
		++(-90:\scale) node [micro] (3) {}
		++(0:\scale) node [micro] (4) {};

	\path (1)
		edge [unitlink,in=0,out=90, min distance=0.4cm] node [xshift=-2pt,above=-1pt] {$ie^{-2r}$} (1)
		edge [unitlink] node [above=-1pt] {$1$} (2)
		edge [unitlink] node [below right=-2.5pt] {$3$} (3);

	\path (2)
		edge [unitlink,in=90,out=180, min distance=0.4cm] node [above=-1pt] {$ie^{-2r}$} (2)
		edge [unitlink] node [left=-2pt] {$2$} (3);
	
	\path (3)
		edge [unitlink,in=180,out=270, min distance=0.4cm] node [xshift=2pt, below=-1pt] {$ie^{-2r}$} (3)
		edge [unitlink] node [below=-1pt] {$-1$} (4);
	
	\path (4)
		edge [unitlink,in=270,out=0, min distance=0.4cm] node [below=-1pt] {$ie^{-2r}$} (4);

	\path \hgraph +\labelloc node {(c)};
	\scriptsize
	\path \hgraph [outer sep=0pt]
		node [micro] (1) {}
		++(180:\scale) node [micro] (2) {}
		++(-90:\scale) node [micro] (3) {}
		++(0:\scale) node [micro] (4) {};

	\path (1)
		edge [unitlink,in=0,out=90, min distance=0.4cm] node [xshift=-2pt,above=0pt] {$i\sech 2\alpha$} (1)
		edge [unitlink] node [inner sep=0, outer sep=0.5pt, above, pin={[inner sep=1pt] 90:$\tanh 2\alpha $}] {} (2)
		edge [unitlink] node [below right=-2.5pt] {$3 \tanh 2\alpha $} (3);

	\path (2)
		edge [unitlink,in=90,out=180, min distance=0.4cm] node [above=0pt] {$i\sech 2\alpha$} (2)
		edge [unitlink] node [left=-2pt] {$2 \tanh 2\alpha$} (3);
	
	\path (3)
		edge [unitlink,in=180,out=270, min distance=0.4cm] node [xshift=2pt, below=0pt] {$i\sech 2\alpha$} (3)
		edge [unitlink] node [inner sep=0, outer sep=0.5pt, below, pin={[inner sep=0] -90:$-\tanh 2\alpha $}] {} (4);
	
	\path (4)
		edge [unitlink,in=270,out=0, min distance=0.4cm] node [below=0pt] {$i\sech 2\alpha$} (4);

\end{tikzpicture}
\endpgfgraphicnamed
\caption{Complex-weighted graph representation of approximate CV~cluster states.  Nodes no longer specify any particular input state on their own and are henceforth colored black to emphasize this distinction.  Instead, the state represented by the graph is entirely specified by the edge weights, which can now be complex.  The real part of the graph~$\mat V$ still has the same interpretation as in the original formalism---i.e.,~a collection of weights for the respective $\CZ$~gates applied between the linked nodes, while the imaginary part~$\mat U$ corresponds to initial multimode squeezing that only mixes~$\op q$s 
and~$\op p$s 
separately 
(see text).  (a)~The graph from Figure~\ref{fig:CVgraph} is reinterpreted in this formalism.  Because the graph has only real weights, it does not represent a physical state (since~$\mat U \not > 0$)---and rightly so since ideal CV cluster states are infinitely squeezed.  (b)~This is the graph for an approximate CV cluster state made by the canonical method~\cite{Menicucci2006,Gu2009}.  The imaginary-weighted self-loops indicate the amount of initial single-mode squeezing applied.  In the limit of large squeezing ($r \to \infty$), these physical Gaussian pure states limit to the ideal graph in~(a).  Notice that in this context, states with large squeezing in~$\op p$ are represented by black nodes with a vanishing imaginary self-loop weight, rather than by the white nodes of Figure~\ref{fig:CVgraph}.  (c)~Here is another approximate CV cluster state, distinct from that in~(b), that also limits to the ideal case~(a) when~$\alpha \to \infty$.  The states represented in~(b) and~(c) are physically distinct, but because they have the same large-squeezing behavior, they cannot be distinguished in the original graphical formalism (see~Figure~\ref{fig:CVgraph}).  This new formalism allows them to be distinguished at the graph level.}
\label{fig:complexgraphs}
\end{center}
\end{figure}

\subsection{\texorpdfstring{$\cH$}{H}-graph states}
\label{subsec:gauss:Hgraph}

$\cH$-graph states are generated by a multimode OPO pumped by a multifrequency pump beam and have a mathematical connection to CV cluster states, even though the nature of the two types of graph are different~\cite{Menicucci2007,Zaidi2008,Menicucci2008,Flammia2009}.  In our graphical formalism, these states correspond to graphs with purely imaginary weights.  Specifically,
\begin{align}
\label{eq:ZHgraph}
	\mat Z = ie^{-2 \alpha \mat G}\,,
\end{align}
where the real, symmetric matrix $\mat G$~is the $\cH$-graph for the state, and $\alpha$~is a unitless overall squeezing strength.  The term~``$\cH$-graph'' refers to the fact that $\mat G$~specifies the (linearized) Hamiltonian for the OPO that acts on the vacuum of the cavity modes in order to generate the state.  This Hamiltonian is defined by
\begin{align}
\label{eq:HfromG}
	\op{\cH}(\mat G) &= i \hbar \kappa \sum_{j,k} G_{jk} (\op a^\dag_j \op a^\dag_k - \op a_j \op a_k) \nonumber \\
	&= i\hbar \kappa (\opvec a^\herm \mat G \opvec a^\dag - \opvec a^\tp \mat G \opvec a) \nonumber \\
	&= \hbar \kappa (\opvec q^\tp \mat G \opvec p + \opvec p^\tp \mat G \opvec q) \,,
\end{align}
where $\kappa$~is a squeezing parameter per unit time~\cite{Menicucci2007}.  If this Hamiltonian is applied for time~$t$, acting on the vacuum state with~$\op U = \exp[-\tfrac i \hbar \op{\cH}(\mat G) t]$, then~$\alpha = 2 \kappa t$ in Eq.~\eqref{eq:ZHgraph}.

Most $\cH$-graph states (i.e.,~those with a full-rank $\mat G$) correspond to CV cluster states in the limit of large squeezing if one phase-shifts modes appropriately~\cite{Menicucci2007}.  Recent work shows that this method can be used, in a scalable fashion, to make many small, disconnected square-graph CV cluster states from a single OPO~\cite{Zaidi2008}.  More importantly, a single OPO can also be used to make very large QC-universal CV cluster states in a scalable fashion~\cite{Menicucci2008,Flammia2009}.

The connection between the $\cH$-graph~$\mat G$ and the Gaussian graph~$\mat Z$, as indicated by Eq.~\eqref{eq:ZHgraph}, is through the exponential map, which is generally a nontrivial operation on a graph.  However, when $\mat G$~is self-inverse ($\mat G^2 = \mat \id$), this connection simplifies greatly:
\begin{align}
\label{eq:ZHgraphselfinv}
	\mat G^2 = \mat \id \quad \Longrightarrow \quad \mat Z = i\cosh 2\alpha\, \mat \id - i\sinh 2\alpha\, \mat G
\end{align}
In this case, the Gaussian graph~$\mat Z$ is just a rescaled version of~$\mat G$ with additional self-loops.  We will explore the close connection between $\cH$-graphs of this form and CV~cluster states of the form of Eq.~\eqref{eq:Zalphafamily} in Section~\ref{sec:apps}.

\subsection{Complex nullifiers}
\label{subec:gauss:nullifier}

In Reference~\olcite{Gu2009}, the real-valued nullifier formalism is used to illustrate the effect of quadrature measurements on ideal CV cluster states.  (In an optical setting, this corresponds to homodyne detection.)  Rules for updating the nullifiers under such measurements are derived, but they only apply to ideal CV cluster states.  In addition, the resulting state after the measurement frequently is only an ideal CV cluster state up to phase shifts (if at all) and thus cannot be represented by any real-weighted CV cluster-state graph.  Zhang has implemented these rules as a graph transformation~\cite{Zhang2010}, but the same restriction to ideal CV cluster states still applies.  Here we extend the nullifier transformation rules from Reference~\olcite{Gu2009} to graph transformation rules for quadrature measurements made on any Gaussian pure state, including approximate CV cluster states, thus generalizing the results from Reference~\cite{Zhang2010}.  We do this by first generalizing the real-valued nullifier formalism to complex-valued nullifiers, which can be used to represent all Gaussian pure states.

The nullifier formalism for CV cluster states, given by Eq.~\eqref{eq:psiAnull}, can be extended to all Gaussian pure states using the simple replacement of the CV cluster-state graph~$\mat A$ with the Gaussian graph~$\mat Z$:
\begin{align}
\label{eq:psiZnull}
	(\opvec p - \mat Z \opvec q) \ket {\psi_{\mat Z}} &=
	\begin{pmatrix}
		-\mat Z	& \mat \id
	\end{pmatrix}
	\opvec x
	\op U_{\mat Z} \ket {0} \nonumber \\
	&=
	\op U_{\mat Z}
	\begin{pmatrix}
		-\mat Z	& \mat \id
	\end{pmatrix}
	\mat S_{\mat Z}
	\opvec x
	\ket {0} \nonumber \\
	&=
	\op U_{\mat Z} \mat U^{1/2} (\opvec p - i \opvec q) \ket 0 \nonumber  \\
	&= \vec 0\,,
\end{align}
where we have used Eq.~\eqref{eq:Suv} to plug in for~$\mat S_{\mat Z} =\mat S_{(\mat U, \mat V)}$, and we note that $\opvec p - i \opvec q = -i \sqrt 2 \opvec a$, a vector of operators annihilating the ground state.  Notice that the \emph{nullifier vector}~$\opvec p - \mat Z \opvec q$ is not unique for a given graph since any c-number matrix~$\mat M$, acting from the left, will generate a new vector of nullifiers that are also satisfied by the state:
\begin{align}
\label{eq:psiZnullother}
	(\mat M \opvec p - \mat M \mat Z \opvec q) \ket {\psi_{\mat Z}} &= 0\,.
\end{align}
The action of the matrix~$\mat M$ represents forming a new nullifier vector from linear combinations of the original nullifiers.  If, in addition, $\mat M$~is invertible, then this new nullifier vector also uniquely defines the state.

We can take the fact that $\opvec p - i \opvec q$ is proportional to a vector of annihilation operations and make the analogy more explicit.  Let's define
\begin{align}
\label{eq:aZ}
	\opvec a_{\mat Z} &:= \frac {i} {\sqrt 2} \mat U^{-1/2} (\opvec p - \mat Z \opvec q)\,, \\
\label{eq:aZdag}
	\opvec a_{\mat Z}^\dag &:= \frac {-i} {\sqrt 2} \mat U^{-1/2} (\opvec p - \mat Z^* \opvec q)\,,
\end{align}
which have the usual commutation relations
\begin{align}
\label{eq:aZcomm}
	[\opvec a_{\mat Z}, \opvec a_{\mat Z}^\herm] &= \frac 1 2 \mat U^{-1/2}
	\begin{pmatrix}
		-\mat Z	& \mat \id
	\end{pmatrix}
	[ \opvec x, \opvec x^\tp ]
	\begin{pmatrix}
		-\mat Z^*	\\
		\mat \id
	\end{pmatrix}
	\mat U^{-1/2} \nonumber \\
	&= \frac 1 2 \mat U^{-1/2}
	\begin{pmatrix}
		-\mat Z	& \mat \id
	\end{pmatrix}
	i\mat \Omega
	\begin{pmatrix}
		-\mat Z^*	\\
		\mat \id
	\end{pmatrix}
	\mat U^{-1/2} \nonumber \\
	&= \mat \id \,,
\end{align}
along with $[\opvec a_{\mat Z}, \opvec a_{\mat Z}^\tp] = [\opvec a_{\mat Z}^\dag, \opvec a_{\mat Z}^\herm] = \mat 0$.  In the case of the ground state, $\mat Z = i\mat \id$, and this expression reduces to the usual result ($\opvec a_{\mat Z} = \opvec a$).  In the more general case, these operators can be used to derive a Hamiltonian for which the associated graph~$\mat Z$ is the ground state (see Section~\ref{subsec:apps:ground}).

With this notation in hand, we can calculate
\begin{align}
\label{eq:covZ}
	\cov (\opvec p - \mat Z \opvec q) &= \cov (-i\sqrt 2 \mat U^{1/2} \opvec a_{\mat Z}) \nonumber \\
	&= \mat U^{1/2} \avg{ \{\opvec a_{\mat Z}^\dag, \opvec a_{\mat Z}^\tp \} } \mat U^{1/2} \nonumber \\
	&= \mat U^{1/2} \left\{ \avg{ 2 \opvec a_{\mat Z}^\dag \opvec a_{\mat Z}^\tp } + \avg{ [\opvec a_{\mat Z}, \opvec a_{\mat Z}^\herm ] } \right\} \mat U^{1/2} \nonumber \\
	&= \mat U\,,
\end{align}
where we used Eqs.~\eqref{eq:psiZnull} and~\eqref{eq:aZcomm} to obtain the last line.  For approximate CV cluster states, It's also useful to calculate the covariance matrix of just the real part of~$\mat Z$ (namely,~$\mat V$):
\begin{align}
\label{eq:covv}
	\cov [\opvec p - \mat V \opvec q] &=
	\begin{pmatrix}
		-\mat V	& \mat \id
	\end{pmatrix}
	(\cov \opvec x)
	\begin{pmatrix}
		-\mat V	\\
		\mat \id
	\end{pmatrix}
	\nonumber \\
	&= \frac 1 2
	\begin{pmatrix}
		\mat 0	& \mat \id
	\end{pmatrix}
	\begin{pmatrix}
		\mat U^{-1}	& \mat 0	\\
		\mat 0		& \mat U
	\end{pmatrix}
	\begin{pmatrix}
		\mat 0	\\
		\mat \id
	\end{pmatrix}
	\nonumber \\
	&= \frac 1 2 \mat U \,,
\end{align}
where we have used the explicit form for the covariance matrix---the second line of Eq.~\eqref{eq:covmatuv}.  Comparing this expression with Eq.~\eqref{eq:approxCVCS}, we get a nice interpretation of $\mat Z$'s real and imaginary parts: $\mat V$~is the graph of the ideal CV cluster state approximated by~$\mat Z$, and $\frac 1 2 \mat U$~is the error in the approximation---now expressed in quantitative terms as the covariance matrix of the nullifiers~$\opvec p - \mat V \opvec q$.  For this interpretation to make any sense, of course, $\mat U$ must be small.  Since $\mat U > 0$, we can use the trace to say that $\frac 1 2 \tr \mat U$~is the magnitude of the \emph{approximation error} in approximating the ideal CV cluster state~$\mat V$ using~$\mat Z$.  This trace corresponds to the sum of the variances of each of the nullifiers:
\begin{align}
	\frac 1 2 \tr \mat U = \sum_j \bra {\psi_{\mat Z}} \bigl( \op p_j - \sum_k V_{jk} \op q_k  \bigr)^2 \ket {\psi_{\mat Z}}\,,
\end{align}
which should be compared with Eq.~\eqref{eq:approxCVCSexplicit}.

This extension of the real nullifier formalism for real-weighted, ideal CV cluster states to a complex nullifier formalism for complex-weighted, physical CV cluster states---as well as all other Gaussian pure states---corresponds to a similar generalization on the level of the stabilizer operators, which, in the CV case, describe position and momentum shifts in phase space~\cite{Gu2009,vanLoock2007}. As we shall continue focusing on the nullifiers here, we give a brief derivation of the corresponding generalized stabilizer representation for Gaussian pure states in Appendix~\ref{app:stabilizers}.

\section{Transformation rules}
\label{sec:transrules}

In this section, we derive the rules for updating a graph after a local Gaussian unitary transformation is performed on one or more of the modes or when a quadrature measurement is made.  Since an arbitrary $n$-local transformation can be obtained by composing 1- and 2-local transformations, we will treat those cases with additional care.  We begin by deriving the affects of the symplectic transformations on the underlying matrix $\mat Z$, and following that, we make connection with the Gaussian unitary transformations they represent, and we illustrate these transformations graphically.

\subsection{\texorpdfstring{$n$}{n}-Local Gaussian unitary operations}

Consider a Gaussian pure state on $n+p$ modes.  It's graph~$\mat Z$ can be written in block form as
\begin{align}
\label{eq:nlocalblocks}
	\mat Z =
	\begin{pmatrix}
		\mat T	&	\mat R^\tp 	\\
		\mat R	&	\mat W
	\end{pmatrix}
	\,,
\end{align}
where $\mat T$~is $n \times n$, $\mat R$~is $p \times n$, and $\mat W$~is $p \times p$.  The symmetric matrix~$\mat T$ is the adjacency matrix for the induced subgraph of~$\mat Z$ formed by considering only the $n$~modes in question.  The other symmetric matrix~$\mat W$ represents the induced subgraph corresponding to the untouched modes.  $\mat R$, of course, represents the connection between the two sets of nodes.

Without loss of generality, we can represent an arbitrary $n$-local operation with the symplectic matrix
\begin{align}
	\mat S =
	\begin{pmatrix}
		\mat A	&	\mat 0	&	\mat B	&	\mat 0	\\
		\mat 0	&	\mat \id	&	\mat 0	&	\mat 0	\\
		\mat C	&	\mat 0	&	\mat D	&	\mat 0	\\
		\mat 0	&	\mat 0	&	\mat 0	&	\mat \id
	\end{pmatrix}
	\,,
\end{align}
and
\begin{align}
	\mat S_\text{$n$-local} =
	\begin{pmatrix}
		\mat A & \mat B \\
		\mat C & \mat D
	\end{pmatrix}
\end{align}
is the symplectic operation for just the $n$ target nodes.  Then, defining
\begin{align}
\label{eq:Jdef}
	\mat J = (\mat A + \mat B \mat T)^{-1}\,,
\end{align}
we follow Eq.~(\ref{eq:Zprime}) and write the transformed matrix as
\begin{align}
	\mat Z' &= \left[ \begin{pmatrix} \mat C & \mat 0 \\ \mat 0 & \mat 0 \end{pmatrix} + \begin{pmatrix} \mat D & \mat 0 \\ \mat 0 & \mat \id \end{pmatrix} \mat Z \right] \left[ \begin{pmatrix} \mat A & \mat 0 \\ \mat 0 & \mat \id \end{pmatrix} + \begin{pmatrix} \mat B & \mat 0 \\ \mat 0 & \mat 0 \end{pmatrix} \mat Z \right]^{-1} \nonumber \\
	&=
	\begin{pmatrix}
		\mat C + \mat D\mat T	&	\mat D\mat R^\tp 	\\
		\mat R		&	\mat W
	\end{pmatrix}
	\begin{pmatrix}
		\mat A + \mat B\mat T	&	\mat B\mat R^\tp 	\\
		\mat 0		&	\mat \id
	\end{pmatrix}
	^{-1} \nonumber \\
	&=
	\begin{pmatrix}
		\mat C + \mat D\mat T	&	\mat D\mat R^\tp 	\\
		\mat R		&	\mat W
	\end{pmatrix}
	\begin{pmatrix}
		\mat J	&	-\mat J\mat B\mat R^\tp 	\\
		\mat 0			&	\mat \id
	\end{pmatrix}
	\nonumber \\
	&=
	\begin{pmatrix}
		(\mat C + \mat D\mat T)\mat J	&	-(\mat C + \mat D\mat T)\mat J\mat B\mat R^\tp + \mat D\mat R^\tp 	\\
		\mat R\mat J			&	\mat W - \mat R\mat J\mat B\mat R^\tp
	\end{pmatrix}
	\,.
\end{align}
This matrix must be symmetric, so we know that the upper-right block must simply be $\mat J^\tp\mat R^\tp $, but let's work it out and see why.  The first fact we'll need is that $(\mat C + \mat D\mat T)(\mat A + \mat B\mat T)^{-1}$ is symmetric, since it represents a Gaussian operation on a valid graph (recall that $\mat T$ is symmetric and that $\Im \mat T > 0$).  Therefore,
\begin{align}
	&\mathrel{\phantom{=}} -(\mat C + \mat D\mat T) \mat J\mat B\mat R^\tp + \mat D\mat R^\tp \nonumber \\
	&= -\mat J ^\tp(\mat C + \mat D\mat T)^\tp \mat B\mat R^\tp + \mat D\mat R^\tp \nonumber \\
	&= \mat J ^\tp \bigl[ -(\mat C + \mat D\mat T)^\tp \mat B + \mat J ^{-\tp} \mat D \bigr] \mat R^\tp \nonumber \\
	&= \mat J^\tp \bigl[ -\mat C^\tp \mat B - \mat T\mat D^\tp \mat B + \mat A^\tp \mat D + \mat T\mat B^\tp \mat D \bigr] \mat R^\tp \,.
\end{align}
The properties of a symplectic matrix~\cite{Arvind1995} include $\mat A^\tp \mat D - \mat C^\tp \mat B = \mat \id$ and $\mat D^\tp \mat B = \mat B^\tp \mat D$.  Therefore, the quantity in brackets is equal to~$\mat \id$, and the $n$-local-transformed matrix is
\begin{align}
\label{eq:nlocaltransformedZ}
	&\mat Z' = \\
	&\begin{pmatrix}
		(\mat C + \mat D\mat T)(\mat A + \mat B\mat T)^{-1}	&	(\mat A + \mat B\mat T)^{-\tp}\mat R^\tp 	\\
		\mat R(\mat A + \mat B\mat T)^{-1}			&	\mat W - \mat R(\mat A + \mat B\mat T)^{-1}\mat B\mat R^\tp
	\end{pmatrix}
	. \nonumber
\end{align}
This has a nice interpretation.  The upper-left block is just the transformation that would result from applying the $n$-local Gaussian to the nodes in question, without any connection to other nodes.  The bottom-right block reflects the fact that any changes to the other nodes' induced subgraph are mediated solely by connections with the target nodes and in a \emph{additive} fashion on~$\mat W$.  These are the two key observations---the induced subgraph on the target nodes cares not about connections to other nodes, and the effect on adjacent nodes is only through connection with the target nodes.  The off-diagonal blocks illustrate the action of the transformation on the edges connecting the two sets.  After discussing the important special cases of 1- and 2-local transformations, we will illustrate these graphically to provide additional insight.

\subsection{Local Gaussian unitary operations}
\label{subsec:transrules:LG}

In the case of (1-)local Gaussian operations, so-called \emph{LG~operations}, with $2 \times 2$ symplectic matrix
\begin{align}
	\mat S_\text{LG} =
	\begin{pmatrix}
		a	&	b		\\
		c	&	d
	\end{pmatrix}
	\,,
\end{align}
this transformation is just
\begin{align}
\label{eq:ZunderLG}
	\mat Z =
	\begin{pmatrix}
		t		&	\vec r^\tp 	\\
		\vec r	&	\mat W
	\end{pmatrix}
	\xmapsto{\mat S_{\mathrm{LG}}}
	\begin{pmatrix}
		\displaystyle{\frac {c + d t} {a + b t}}		&	\displaystyle{\frac {\vec r^\tp } {a + b t}}		\\
		\displaystyle{\frac {\vec r} {a + b t}}		&	\mat W - \displaystyle{\frac {b \vec r \vec r^\tp } {a + b t}}
	\end{pmatrix}
	= \mat Z'\,.
\end{align}
From these results, we will calculate (below) local graph transformation rules corresponding to elementary LG operations.

\subsection{2-Local Gaussian operations}

Arbitrary Gaussian operations can be constructed out of 1-local (LG) and 2-local (2LG) operations alone.  In fact, given the availability of all LGs, only a single fiducial 2LG is needed to construct any Gaussian operation~\cite{Lloyd1999}.  The most theoretically simple 2LG is just the $\CZ$ gate.  This operation just adds a real constant---proportional to the strength of the interaction---to the edge in question.

Considering the applicability of our results to optical implementations, we shall also consider another operation (or actually, another class of operations), since the $\CZ$ gate is difficult to implement experimentally~\cite{Yurke1985}.  This class will consist of beamsplitter interactions.  Specifically, we consider only a fiducial type of photon-number-conserving interaction.  This 2LG interaction can be used to model any beamsplitter when combined with appropriate phase shifts on the input and output modes:
\begin{align}
\label{eq:SBS}
	\mat S_{\mathrm{BS}}(\theta) &=
	\begin{pmatrix}
		\mat R_\theta	&	\mat 0	\\
		\mat 0		&	\mat R_\theta
	\end{pmatrix}
	=
	\begin{pmatrix}
		\cos \theta	&	-\sin \theta	&	0		&	0		\\
		\sin \theta	&	\cos \theta	&	0		&	0		\\
		0		&	0		&	\cos \theta	&	-\sin \theta	\\
		0		&	0		&	\sin \theta	&	\cos \theta
	\end{pmatrix}
	\,,
\end{align}
where $\sin \theta$ is the amplitude reflectivity of the beamsplitter~\cite{Walls2008}.  This form is particularly simple because $\mat B=\mat C=\mat 0$ and $\mat A=\mat D=\mat R_\theta$, which gives $(\mat A+\mat B\mat T)^{-1} = \mat R_\theta^\tp = \mat R_{-\theta}$.  This being the only quantity that affects nodes outside the target set, this is particularly convenient. 
The transformation of~$\mat T$ is given by
\begin{widetext}
\begin{align}
\label{eq:TprimeBS}
	\mat T' = \mat R_\theta \mat T \mat R_\theta^\tp 
	&=
	\begin{pmatrix}
		\cos \theta	&	-\sin \theta		\\
		\sin \theta	&	\cos \theta	
	\end{pmatrix}	
	\begin{pmatrix}
		T_{11}	&	T_{12}	\\
		T_{12}	&	T_{22}
	\end{pmatrix}	
	\begin{pmatrix}
		\cos \theta	&	\sin \theta		\\
		-\sin \theta	&	\cos \theta	
	\end{pmatrix}
	\nonumber \\
	&= 
	\begin{pmatrix}
		T_{11} \cos^2 \theta + T_{22} \sin^2 \theta - T_{12} \sin 2\theta		& T_{12} \cos 2\theta + \tfrac 1 2 (T_{11} - T_{22}) \sin 2\theta	\\
		T_{12} \cos 2\theta + \tfrac 1 2 (T_{11} - T_{22}) \sin 2\theta		& T_{22} \cos^2 \theta + T_{11} \sin^2 \theta + T_{12} \sin 2\theta
	\end{pmatrix}
	\,.
\end{align}
\end{widetext}
The final result for the beamsplitter transformation is:
\begin{align}
	\mat Z' &=
	\begin{pmatrix}
		\mat T'	&	\mat R_\theta \mat R^\tp 	\\
		\mat R \mat R_\theta^\tp			&	\mat W
	\end{pmatrix}
	\\
	&=
	\begin{pmatrix}
		\mat R_\theta	&	\mat 0	\\
		\mat 0		&	\mat \id
	\end{pmatrix}
	\begin{pmatrix}
		\mat T	&	\mat R^\tp 	\\
		\mat R	&	\mat W
	\end{pmatrix}
	\begin{pmatrix}
		\mat R_\theta^\tp	&	\mat 0	\\
		\mat 0			&	\mat \id
	\end{pmatrix}
	\nonumber \,,
\end{align}
where $\mat T'$~is defined in Eq.~\eqref{eq:TprimeBS}, and we have included the second (expanded) form to show the formally simple effect that beamsplitter interactions of the form~$\mat S_{\mathrm{BS}}(\theta)$ have on Gaussian graphs.  (As previously mentioned, modeling a physical beamsplitter may additionally involve phase shifts, which are local Gaussian operations.)

\subsection{Quadrature measurements}
\label{subec:transrules:quadmeas}

According to the rules of Reference~\olcite{Gu2009}, the first thing to do when considering quadrature measurements is to define the nullifier that corresponds to the measurement outcome.  This is the new nullifier that the post-measurement state must satisfy due to projection onto the measurement basis.  (This rule applies even if the measurement is destructive.)  Next, we choose an appropriate invertible matrix~$\mat M$ such that the entries in the new (ideal) nullifier vector~$\mat M \opvec p - \mat M \mat V \opvec q$ are such that only one of them fails to commute with the nullifier corresponding to the measurement (something which can always be done).  This new nullifier vector also uniquely defines the pre-measurement state (since $\mat M$~is invertible), 
but because only one of its entries (i.e.,~a single nullifier) fails to commute with the measurement nullifier, all the remaining ones will also be nullifiers for the post-measurement state as well.  The one that fails to commute is therefore discarded and replaced by the measurement nullifier to form the post-measurement nullifier vector.

Everything said in the previous paragraph remains valid when considering the complex nullifiers from Section~\ref{subec:gauss:nullifier} instead of the ideal ones from Reference~\olcite{Gu2009}.  We will therefore focus on one specific example: $\op q$-measurements.  Such a measurement on node~$j$ with outcome~$s_j$ means that the new state has $\op q_j - s_j$~as one of its nullifiers (we assume the detector noise is negligible).  However, since we are neglecting displacements, the post-measurement state will instead have $\op q_j$~as the measurement nullifier.  The usual nullifier vector for a Gaussian graph~$\mat Z$ is~$\opvec p - \mat Z \opvec q$, which is already in a fortunate form since all of the nullifiers  commute with~$\op q_j$ except the~$j$th one.  This measurement on an ideal CV cluster state corresponds to deletion of the measured node from the cluster, along with all of its links.  The effect is the same on Gaussian graphs and can be seen in the nullifier formalism in that all references to~$\op p_j$ are gone, and linear combinations of the post-measurement nullifiers can be used to delete all references to~$\op q_j$, as well (since $\op q_j$~alone is one of the nullifiers in the set).  As an action on the adjacency matrix~$\mat Z$, this corresponds to deleting its $j$th row and column.

Another useful quadrature measurement is a measurement of~$\op p$.  For an ideal CV cluster state, such a measurement deletes the corresponding node but \emph{preserves the links} between neighboring nodes, up to local phase shifts.  Because of the phase shifts, the resulting state is often impossible to represent as an ideal CV cluster-state graph (although it still has a nullifier representation), but we can do it easily for approximate CV cluster states as a two-step process: (1)~perform an inverse Fourier transform on the node to be measured, and then (2)~perform a $\op q$~measurement as described above, thereby deleting it and its links from the graph.  This is not equivalent in total to a simple disconnection, however, because the phase shift generates additional connections in the neighborhood of the measured node before that node is disconnected.  It is through this mechanism that a measurement of~$\op p$ preserves links in the graph while deleting the measured node~\cite{Gu2009}.  Measurements of the general quadrature~$\op q \cos \theta + \op p \sin \theta$, which can be used to perform Gaussian dynamics\footnote{For universal QC, the ability to measure in a non-Gaussian basis is required~\cite{Gu2009}.  In an optical context, this can be achieved through photon counting.  Because the resulting state is no longer Gaussian, such measurements are not incorporated into this formalism.} on encoded CV quantum information in the cluster~\cite{Ukai2009,Gu2009}, are represented analogously, with a phase shift by~$\theta$ replacing the inverse Fourier transform before the $\op q$~measurement.

\subsection{Graph transformation rules}
\label{subsec:transrules:rules}

Here we illustrate the transformation rules described above as actions on the adjacency matrix~$\mat Z$ into rules transforming the associated graph for several examples.  In all cases, the original~$\mat Z$ is given in block form according to Eq.~\eqref{eq:nlocalblocks}, while Equations~\eqref{eq:ZunderLG} and~\eqref{eq:nlocaltransformedZ} provide the transformation laws for local and 2-local operations, respectively.  We will focus on representative local transformations, then the 2-local interactions discussed previously, and finally quadrature measurements.


\graphrule{Displacement}---The graph rule for displacements in phase space is trivial: do nothing to the graph.  The Gaussian graph only represents the noise properties of the state, which are unaffected by overall displacements.

\graphrule{Local shear}---The easiest nontrivial local operation to represent in this formalism is a local shear in phase space:~$\exp(\tfrac i 2 g \op q^2)$.  The corresponding local symplectic matrix is
\begin{align}
\label{eq:SLGshear}
	\mat S_\mathrm{LG} = \mat S_\mathrm{shear}(g) :=
	\begin{pmatrix}
		1	&	0		\\
		g	&	1
	\end{pmatrix}
	\,,
\end{align}
resulting in the simple transformation
\begin{align}
	\mat Z \xmapsto{\mat S_{\mathrm{shear}}(g)}
	\begin{pmatrix}
		t + g		&	\vec r^\tp	\\
		\vec r	&	\mat W
	\end{pmatrix}
	\,,
\end{align}
where $\mat Z$ is defined in Eq.~\eqref{eq:ZunderLG}.  Notice that the only edge affected by this transformation is the self-loop on the node experiencing the shear (and note that $\mat W$~is not shown):
\begin{center}
\beginpgfgraphicnamed{shear}
\begin{tikzpicture} [scale=1]
	\small

	\def\orig{(0,0)}
	\def\final{(4,0)}
	\def\mapsymb{(2.5,0)}
	
	\def\scale{1.5}
	
	\path \orig
		+(30:\scale) coordinate (r1) 
		+(0:\scale) coordinate (r2)
		+(-80:\scale) coordinate (rn)
		[draw,dotted] +(-35:0.5*\scale) arc (-35:-70:0.5*\scale);
	\path \orig node [micro] (t) {}
		edge [unitlink] node [above] {$r_1$} (r1)
		edge [unitlink] node [below] {$r_2$} (r2)
		edge [unitlink] node [below left] {$r_n$} (rn)
		edge [unitlink, loop, in=90,out=180,min distance=0.4cm] node [above] {$t$} (t);

	\draw [nodehighlight] (t) circle (\micronodesize);

	\node at \mapsymb {$\xmapsto{\mat S_{\mathrm{shear}}(g)}$};

	\path \final
		+(30:\scale) coordinate (r1) 
		+(0:\scale) coordinate (r2)
		+(-80:\scale) coordinate (rn)
		[draw,dotted] +(-35:0.5*\scale) arc (-35:-70:0.5*\scale);
	\path \final node [micro] (t) {}
		edge [unitlink] node [above] {$r_1$} (r1)
		edge [unitlink] node [below] {$r_2$} (r2)
		edge [unitlink] node [below left] {$r_n$} (rn)
		edge [unitlink, loop, in=90,out=180,min distance=0.4cm] node [above] {$t+g$} (t);
\end{tikzpicture}
\endpgfgraphicnamed
\end{center}

\graphrule{Single-mode squeezing}---Squeezing (i.e.,~reducing the variance) in~$\op p$ with squeezing parameter~$r>0$ 
is equivalent to squeezing in~$\op q$ with squeezing parameter~$-r$ and is effected by the unitary operator~$\exp[-\tfrac i 2 r (\op q \op p + \op p \op q)] = \exp[\tfrac 1 2 r(a^{\dag2} - a^2)]$ and represented by the local symplectic matrix
\begin{align}
\label{eq:SLGsqueeze}
	\mat S_\mathrm{LG} = \mat S_\mathrm{squeeze}(r) :=
	\begin{pmatrix}
		e^r		&	0		\\
		0		&	e^{-r}
	\end{pmatrix}
\end{align}
and transforms graphs in a particularly simple way:
\begin{align}
	\mat Z \xmapsto{\mat S_{\mathrm{squeeze}}(r)}
	\begin{pmatrix}
		e^{-2r} t		&	e^{-r} \vec r^\tp	\\
		e^{-r} \vec r	&	\mat W
	\end{pmatrix}
	\,.
\end{align}
Notice that this transformation only affects the edges attached to the node being squeezed; there is no effect on the neighborhood of the affected node (represented by~$\mat W$, not shown):
\begin{center}
\beginpgfgraphicnamed{squeeze}
\begin{tikzpicture} [scale=1]
	\small 

	\def\orig{(0,0)}
	\def\final{(4.5,0)}
	\def\mapsymb{(2.5,0)}

	\def\scale{1.5}

	\path \orig
		+(30:\scale) coordinate (r1) 
		+(0:\scale) coordinate (r2)
		+(-80:\scale) coordinate (rn)
		[draw,dotted] +(-35:0.5*\scale) arc (-35:-70:0.5*\scale);
	\path \orig node [micro] (t) {}
		edge [unitlink] node [above] {$r_1$} (r1)
		edge [unitlink] node [below] {$r_2$} (r2)
		edge [unitlink] node [below left] {$r_n$} (rn)
		edge [unitlink, loop, in=90,out=180,min distance=0.4cm] node [above left=-1pt] {$t$} (t);

	\draw [nodehighlight] (t) circle (\micronodesize);

	\node at \mapsymb {$\xmapsto{\mat S_{\mathrm{squeeze}}(r)}$};

	\path \final
		+(30:\scale) coordinate (r1) 
		+(0:\scale) coordinate (r2)
		+(-80:\scale) coordinate (rn)
		[draw,dotted] +(-35:0.5*\scale) arc (-35:-70:0.5*\scale);
	\path \final node [micro] (t) {}
		edge [unitlink] node [above=4pt] {$e^{-r} r_1$} (r1)
		edge [unitlink] node [below=-2pt] {$e^{-r} r_2$} (r2)
		edge [unitlink] node [below left=-1pt] {$e^{-r} r_n$} (rn)
		edge [unitlink, loop, in=90,out=180,min distance=0.4cm] node [above left=-1pt] {$e^{-2r}t$} (t);
\end{tikzpicture}
\endpgfgraphicnamed
\end{center}

\graphrule{Phase shift}---The transformations above only affect edges attached to the node being acted upon (the ``active" node).  In order to make a change to the $\mat W$-matrix, however, which represents the rest of the graph, we need something else.  A simple operation that accomplishes this goal and has the advantage of being very easy to implement in many experiments is the phase shift (a.k.a.\ phase delay), which simply rotates the phase plane of the active node by an angle~$\theta$.  The unitary operator is~$\exp[-\tfrac i 2 \theta (\op q^2 + \op p^2) ] = e^{-i\theta / 2} \exp(-i \theta \op a^\dag \op a)$, where the overall phase~$e^{-i\theta / 2}$ can be ignored.  The corresponding local symplectic matrix is just a rotation matrix:\footnote{There is some ambiguity in the sign used in the definition of the phase shift.  The convention we are using is consistent with References~\olcite{Menicucci2007,Menicucci2008,Flammia2009}.}
\begin{align}
\label{eq:SLGphaseshift}
	\mat S_\mathrm{LG} = \mat S_\mathrm{phase}(\theta) :=
	\begin{pmatrix}
		\cos \theta	&	\sin \theta		\\
		-\sin \theta	&	\cos \theta
	\end{pmatrix}
	\,.
\end{align}
The associated transformation is
\begin{align}
	\mat Z \xmapsto{\mat S_{\mathrm{phase}}(\theta)}
	\begin{pmatrix}
		\dfrac {- \sin \theta + t \cos \theta} {\cos \theta + t \sin \theta}	& \dfrac {\vec r^\tp }  {\cos \theta + t \sin \theta}	\\
		\dfrac {\vec r} {\cos \theta + t \sin \theta}	& \mat W - \dfrac { \sin \theta\, \vec r \vec r^\tp} {\cos \theta + t \sin \theta}
	\end{pmatrix}
\end{align}
Notice that phase shifts on one node can be used to induce additional links within its neighborhood (consider the case where~$W_{12} = 0$ in what follows; also note that $c_\theta = \cos \theta$ and~$s_\theta = \sin \theta$):%
\begin{center}
\beginpgfgraphicnamed{phase}
\begin{tikzpicture} [scale=1]
	\small 

	\def\orig{(0,0)}
	\def\final{(4.5,0)}
	\def\mapsymb{(3,-0.3)}

	\def\scale{1.5}

	\path \orig
		+(0:\scale) node [micro] (r1) {}
		+(-40:\scale) node [micro] (r2) {};
	\path \orig node [micro] (t) {}
		edge [unitlink] node [above] {$r_1$} (r1)
		edge [unitlink] node [below] {$r_2$} (r2)
		edge [unitlink, loop, in=90,out=180,min distance=0.4cm] node [above=-1pt] {$t$} (t);

	\path (r1) 
		edge [unitlink, loop, in=0,out=90,min distance=0.4cm] node [above=-1pt] {$W_{11}$} (r1)
		edge [unitlink] node [right] {$W_{12}$} (r2);
		
	\path (r2) 
		edge [unitlink, loop, in=-45,out=-135,min distance=0.4cm] node [below=-1pt] {$W_{22}$} (r2);

	\draw [nodehighlight] (t) circle (\micronodesize);

	\node at \mapsymb {$\xmapsto{\mat S_{\mathrm{phase}}(\theta)}$};

	
	\path \final
		+(0:\scale) node [micro] (r1) {}
		+(-40:\scale) node [micro] (r2) {};

	\path \final node [micro] (t) {}
		edge [unitlink] node [above=-2pt] {$\frac {r_1}  {c_\theta + t s_\theta}$} (r1)
		edge [unitlink] node [below left=-5pt] {$\frac {r_2}  {c_\theta + t s_\theta}$} (r2)
		edge [unitlink, loop, in=90,out=180,min distance=0.4cm] node [above] {$\frac {- s_\theta + t c_\theta} {c_\theta + t s_\theta}$} (t);

	\path (r1) 
		edge [unitlink, loop, in=0,out=90,min distance=0.4cm] node [above right=-4pt] {$W_{11}\! - \! \frac {s_\theta r_1^2} {c_\theta + t s_\theta}$} (r1)
		edge [unitlink] node [right] {$W_{12} - \frac {s_\theta r_1 r_2} {c_\theta + t s_\theta}$} (r2);
		
	\path (r2) 
		edge [unitlink, loop, in=-45,out=-135,min distance=0.4cm] node [xshift=-2em, below right=-6pt] {$W_{22} \! -\! \frac {s_\theta r_2^2} {c_\theta + t s_\theta}$} (r2);
		
\end{tikzpicture}
\endpgfgraphicnamed
\end{center}
Any single-mode Gaussian unitary operation can be represented as a graph transformation by appropriately concatenating the rules for squeezing and phase shifting~\cite{Braunstein2005}.


At this point it is useful to mention that all of these rules agree with the graph rules derived by Zhang~\cite{Zhang2008a} in the limit of $\Im \mat Z \to \mat 0$, but only when the initial and final graphs remain finite in this limit.  For example, the Fourier transform,\footnote{Unlike the sign ambiguity for the definition of the phase shift, the Fourier transform is fixed by the requirement that measuring in~$\op p$ is the same as applying the inverse Fourier transform~$\op F^\dag$ and then measuring in~$\op q$~\cite{Menicucci2006}.} which corresponds to the phase shift
\begin{align}
	\mat F := \mat S_{\mathrm{phase}}(-\tfrac \pi 2) = \begin{pmatrix}
		0	& -1	\\
		1	& 0
	\end{pmatrix}\,,
\end{align}
gives
\begin{align}
\label{eq:Faction}
	\mat Z \xmapsto{\; \mat F \;}
	\begin{pmatrix}
		- t^{-1}		&	-t^{-1} \vec r^\tp		\\
		-t^{-1} \vec r	&	\mat W - t^{-1} \vec r \vec r^\tp
	\end{pmatrix}
\end{align}
and is represented as follows:
\begin{center}
\beginpgfgraphicnamed{Fourier}
\begin{tikzpicture} [scale=1]
	\small 

	\def\orig{(0,0)}
	\def\final{(4,0)}
	\def\mapsymb{(2.75,-0.3)}

	\def\scale{1.5}

	\path \orig
		+(0:\scale) node [micro] (r1) {}
		+(-40:\scale) node [micro] (r2) {};
	\path \orig node [micro] (t) {}
		edge [unitlink] node [above] {$r_1$} (r1)
		edge [unitlink] node [below] {$r_2$} (r2)
		edge [unitlink, loop, in=90,out=180,min distance=0.4cm] node [above=-1pt] {$t$} (t);

	\path (r1) 
		edge [unitlink, loop, in=0,out=90,min distance=0.4cm] node [above=-1pt] {$W_{11}$} (r1)
		edge [unitlink] node [right] {$W_{12}$} (r2);
		
	\path (r2) 
		edge [unitlink, loop, in=-45,out=-135,min distance=0.4cm] node [below=-1pt] {$W_{22}$} (r2);

	\draw [nodehighlight] (t) circle (\micronodesize);

	\node at \mapsymb {$\xmapsto{\quad \mat F \quad}$};

	
	\path \final
		+(0:\scale) node [micro] (r1) {}
		+(-40:\scale) node [micro] (r2) {};

	\path \final node [micro] (t) {}
		edge [unitlink] node [above=-1pt] {$-t^{-1} r_1$} (r1)
		edge [unitlink] node [below left=-5pt] {$-t^{-1} r_2$} (r2)
		edge [unitlink, loop, in=90,out=180,min distance=0.4cm] node [above=-1pt] {$-t^{-1}$} (t);

	\path (r1) 
		edge [unitlink, loop, in=0,out=90,min distance=0.4cm] node [above right=-3pt] {$W_{11}\! - \! t^{-1} r_1^2$} (r1)
		edge [unitlink] node [right] {$W_{12}-t^{-1} r_1 r_2$} (r2);
		
	\path (r2) 
		edge [unitlink, loop, in=-45,out=-135,min distance=0.4cm] node [xshift=-2em,below right=-3pt] {$W_{22} \! -\! t^{-1} r_2^2$} (r2);

\end{tikzpicture}
\endpgfgraphicnamed
\end{center}
This has no corresponding rule in the ideal limit if $t = 0$, i.e.,~if the active node has no self-loop, which is the case for most ideal CV cluster states of interest.  Nevertheless, applying the Fourier transform twice has a corresponding rule in the ideal limit, even when $t=0$, since $\mat F^2 = -\mat \id$:
\begin{align}
	\mat Z \xmapsto{\; \mat F^2 \;}
	\begin{pmatrix}
		t		&	- \vec r^\tp		\\
		- \vec r	&	\mat W
	\end{pmatrix}
	\,.
\end{align}
This rule corresponds exactly to Figure~5 in Ref.~\olcite{Zhang2008a}.  Similarly, the rule for local squeezing corresponds to Figure~6 in that reference (and is valid for $t=0$), while Figures~3 and~4 in that work can also be derived from the rules given here.

\graphrule{Controlled-$Z$ gate}---Similar to the local shear operation discussed above, the controlled-$Z$~gate~$\opCZ(g) = \exp(ig\op q_1 \op q_2)$ is the easiest 2LG operation to represent in the graphical formalism.  The corresponding symplectic matrix is
\begin{align}
	\mat S_\mathrm{2LG} = \mat S_{\CZ}(g) :=
	\begin{pmatrix}
   	 	1	& 0	& 0	& 0	\\
		0	& 1	& 0	& 0	\\
		0	& g	& 1	& 0	\\
		g	& 0	& 0	& 1
	\end{pmatrix}
	\,,
\end{align}
resulting in the simple transformation
\begin{align}
	\mat Z \xmapsto{\mat S_{\CZ}(g)}
	\begin{pmatrix}
    		T_{11}	& T_{12}+g	& \multicolumn{2}{c}{\multirow{2}{*}{$\mat R^\tp$}}  	\\
		T_{21}+g	& T_{22}		& \quad 	& \quad \\
		\multicolumn{2}{c}{\multirow{2}{*}{$\mat R$}}			& \multicolumn{2}{c}{\multirow{2}{*}{$\mat W$}}	\\
		&
	\end{pmatrix}
    	\,,
\end{align}
The graphical representation of this is simply to add~$g$ to the edge weight between the two active nodes:
\begin{center}
\beginpgfgraphicnamed{CZ}
\begin{tikzpicture} [scale=1]
	\small 

	\def\orig{(0,0)}
	\def\final{(5.25,0)}
	\def\mapsymb{(3.25,-0.5)}

	\def\scale{1.5}
	\def\smallfactor{0.75}

	\path \orig node [micro] (t1) {}
		+(\scale,-0.5*\scale) node [micro] (r1a) {}
		++(0,-1*\scale) node [micro] (t2) {}
		;
		
	\path (t1)
		edge [unitlink] node [above=1pt] {$R_{11}$} (r1a)
		edge [unitlink, loop, in=45,out=135,min distance=0.4cm] node [above=-1pt] {$T_{11}$} (t1);
	\path (t2)
		edge [unitlink] node [below=1pt] {$R_{12}$} (r1a)
		edge [unitlink, loop, in=-45,out=-135,min distance=0.4cm] node [below=-1pt] {$T_{22}$} (t2);
	\path (t1) edge [unitlink] node [left=-1pt] {$T_{12}$} (t2);

	\path (r1a)
		edge [unitlink, loop, in=-45,out=45,min distance=0.4cm] node [right=-2pt] {$W_{11}$} (r1a); 

	\draw [nodehighlight] (t1) circle (\micronodesize);
	\draw [nodehighlight] (t2) circle (\micronodesize);

	\node at \mapsymb {$\xmapsto{\mat S_{\CZ}(g)}$};

	\path \final node [micro] (t1) {}
		+(\scale,-0.5*\scale) node [micro] (r1a) {}
		++(0,-1*\scale) node [micro] (t2) {}
		;
		
	\path (t1)
		edge [unitlink] node [above=1pt] {$R_{11}$} (r1a)
		edge [unitlink, loop, in=45,out=135,min distance=0.4cm] node [above=-1pt] {$T_{11}$} (t1);
	\path (t2)
		edge [unitlink] node [below=1pt] {$R_{12}$} (r1a)
		edge [unitlink, loop, in=-45,out=-135,min distance=0.4cm] node [below=-1pt] {$T_{22}$} (t2);
	\path (t1) edge [unitlink] node [left=-1pt] {$T_{12}+g$} (t2);
	\path (r1a)
		edge [unitlink, loop, in=-45,out=45,min distance=0.4cm] node [right=-2pt] {$W_{11}$} (r1a);

\end{tikzpicture}
\endpgfgraphicnamed
\end{center}
All such gates commute, and thus they can be performed in any order or even simultaneously.  Despite this theoretical simplicity, their experimental difficulty~\cite{Yurke1985} suggests developing another rule for a canonical 2LG operation.

\graphrule{Beamsplitter}---The beamsplitter interaction~$\mat S_{\text{BS}}(\theta)$ defined in Eq.~\eqref{eq:SBS}, which corresponds to the unitary~$\exp[-i \theta (\op q_1 \op p_2 - \op q_2 \op p_1)] = \exp[-\theta (\op a_1 \op a_2^\dag - \op a_2 \op a_1^\dag) ]$ and whose action on~$\mat Z$ is given by Eq.~\eqref{eq:TprimeBS}, has the following graph transformation rule (also note that~$\mat W$ is unaffected and that~$c_\theta = \cos \theta$ and~$s_\theta = \sin \theta$):
\begin{center}
\beginpgfgraphicnamed{BS}
\begin{tikzpicture} [scale=1]
	\small 

	\def\orig{(0,0)}
	\def\final{(5.3,0)}
	\def\mapsymb{(3.15,-0.4)}

	\def\scale{1.5}
	\def\smallfactor{0.75}

	\path \orig node [micro] (t1) {}
		+(\scale,-0.5*\scale) node [micro] (r1a) {}
		++(0,-1*\scale) node [micro] (t2) {}
		;
		
	\path (t1)
		edge [unitlink] node [above=1pt] {$R_{11}$} (r1a)
		edge [unitlink, loop, in=45,out=135,min distance=0.4cm] node [above=-1pt] {$T_{11}$} (t1);
	\path (t2)
		edge [unitlink] node [below=1pt] {$R_{12}$} (r1a)
		edge [unitlink, loop, in=-45,out=-135,min distance=0.4cm] node [below=-1pt] {$T_{22}$} (t2);
	\path (t1) edge [unitlink] node [left=-1pt] {$T_{12}$} (t2);

	\path (r1a)
		edge [unitlink, loop, in=-45,out=45,min distance=0.4cm] node [right=-2pt] {$W_{11}$} (r1a); 

	\draw [nodehighlight] (t1) circle (\micronodesize);
	\draw [nodehighlight] (t2) circle (\micronodesize);

	\node at \mapsymb {$\xmapsto{\mat S_{\mathrm{BS}}(\theta)}$};

	\path \final node [micro] (t1) {}
		+(\scale,-0.5*\scale) node [micro] (r1a) {}
		++(0,-1*\scale) node [micro] (t2) {}
		;
		
	\footnotesize
	
	\path (t1)
		edge [unitlink] node [above=0pt,anchor=south west,xshift=-9pt] {$(R_{11} c_\theta - R_{12} s_\theta)$} (r1a)
		edge [unitlink, loop, in=45,out=135,min distance=0.4cm] node [above=-1pt] {$(T_{11} c_\theta^2 + T_{22} s_\theta^2 - T_{12} s_{2\theta})$} (t1);
	\path (t2)
		edge [unitlink] node [below=0pt,anchor=north west,xshift=-9pt] {$(R_{12} c_\theta + R_{11} s_\theta)$} (r1a)
		edge [unitlink, loop, in=-45,out=-135,min distance=0.4cm] node [below=-1pt] {$(T_{22} c_\theta^2 + T_{11} s_\theta^2 + T_{12} s_{2\theta})$} (t2);
	\path (t1) edge [unitlink] node [text width=2cm, left=-3pt, yshift=-0.9em] {$\qquad \;\; T_{12} c_{2\theta} + {\tfrac {s_{2\theta}} {2} (T_{11} - T_{22})}$} (t2);
	\path (r1a)
		edge [unitlink, loop, in=-45,out=45,min distance=0.4cm] node [right=-2pt] {$W_{11}$} (r1a);

\end{tikzpicture}
\endpgfgraphicnamed
\end{center}
A useful special case of the above rule is the 50:50 beamsplitter, for which $\theta = \frac \pi 4$:
\begin{center}
\beginpgfgraphicnamed{BS50}
\begin{tikzpicture} [scale=1]
	\small 

	\def\orig{(0,0)}
	\def\final{(5.3,0)}
	\def\mapsymb{(3.1,-0.4)}

	\def\scale{1.5}
	\def\smallfactor{0.75}

	\path \orig node [micro] (t1) {}
		+(\scale,-0.5*\scale) node [micro] (r1a) {}
		++(0,-1*\scale) node [micro] (t2) {}
		;
		
	\path (t1)
		edge [unitlink] node [above=1pt] {$R_{11}$} (r1a)
		edge [unitlink, loop, in=45,out=135,min distance=0.4cm] node [above=-1pt] {$T_{11}$} (t1);
	\path (t2)
		edge [unitlink] node [below=1pt] {$R_{12}$} (r1a)
		edge [unitlink, loop, in=-45,out=-135,min distance=0.4cm] node [below=-1pt] {$T_{22}$} (t2);
	\path (t1) edge [unitlink] node [left=-1pt] {$T_{12}$} (t2);

	\path (r1a)
		edge [unitlink, loop, in=-45,out=45,min distance=0.4cm] node [right=-2pt] {$W_{11}$} (r1a); 

	\draw [nodehighlight] (t1) circle (\micronodesize);
	\draw [nodehighlight] (t2) circle (\micronodesize);

	\node at \mapsymb {$\xmapsto{\mat S_{\mathrm{BS}}(\tfrac \pi 4)}$};

	\path \final node [micro] (t1) {}
		+(\scale,-0.5*\scale) node [micro] (r1a) {}
		++(0,-1*\scale) node [micro] (t2) {}
		;
		
	\footnotesize
	
	\path (t1)
		edge [unitlink] node [above=0pt,anchor=south west,xshift=-9pt] {$\tfrac {1} {\sqrt 2} (R_{11} - R_{12})$} (r1a)
		edge [unitlink, loop, in=45,out=135,min distance=0.4cm] node [above=-2pt] {$\tfrac 1 2 (T_{11} + T_{22}) - T_{12}$} (t1);
		
	\path (t2)
		edge [unitlink] node [below=0pt,anchor=north west,xshift=-9pt] {$\tfrac {1} {\sqrt 2} (R_{12} + R_{11})$} (r1a)
		edge [unitlink, loop, in=-45,out=-135,min distance=0.4cm] node [below=-2pt] {$\tfrac 1 2 (T_{11} + T_{22}) + T_{12}$} (t2);
		
	\path (t1) edge [unitlink] node [left=-2pt, yshift=-1.0em] {$\tfrac 1 2 (T_{11} - T_{22})$} (t2);
	
	\path (r1a)
		edge [unitlink, loop, in=-45,out=45,min distance=0.4cm] node [right=-2pt] {$W_{11}$} (r1a);

\end{tikzpicture}
\endpgfgraphicnamed
\end{center}
Any multimode Gaussian operation can be represented as a graph transformation by appropriately concatenating the rules for arbitrary beamsplitters, single-mode squeezing, and phase shifts through the Bloch-Messiah decomposition~\cite{Braunstein2005}.

\graphrule{Measurement of~$\op q$, $\op p$, and other quadratures}---The rule for projective measurement of~$\op q$ is to delete the node and its links from the graph:
\begin{center}
\beginpgfgraphicnamed{qmeas}
\begin{tikzpicture} [scale=1]
	\small 

	\def\orig{(0,0)}
	\def\final{(4,0)}
	\def\mapsymb{(3,-0.3)}

	\def\scale{1.5}

	\path \orig
		+(0:\scale) node [micro] (r1) {}
		+(-40:\scale) node [micro] (r2) {};
	\path \orig node [micro] (t) {}
		edge [unitlink] node [above] {$r_1$} (r1)
		edge [unitlink] node [below] {$r_2$} (r2)
		edge [unitlink, loop, in=90,out=180,min distance=0.4cm] node [above=-1pt] {$t$} (t);

	\path (r1) 
		edge [unitlink, loop, in=0,out=90,min distance=0.4cm] node [above=-1pt] {$W_{11}$} (r1)
		edge [unitlink] node [right] {$W_{12}$} (r2);

	\path (r2) 
		edge [unitlink, loop, in=-45,out=-135,min distance=0.4cm] node [below=-1pt] {$W_{22}$} (r2);

	\draw [nodehighlight] (t) circle (\micronodesize);

	\node at \mapsymb {$\xmapsto{\text{measure~$\op q$}}$};

	
	\path \final
		+(0:\scale) node [micro] (r1) {}
		+(-40:\scale) node [micro] (r2) {};
	\path \final node [micro,very nearly transparent] (t) {}
	;

	\path (r1) 
		edge [unitlink, loop, in=0,out=90,min distance=0.4cm] node [above=-1pt] {$W_{11}$} (r1)
		edge [unitlink] node [right] {$W_{12}$} (r2);

	\path (r2) 
		edge [unitlink, loop, in=-45,out=-135,min distance=0.4cm] node [below=-1pt] {$W_{22}$} (r2);
		
\end{tikzpicture}
\endpgfgraphicnamed
\end{center}
This is the only measurement rule we need because, as shown in Section~\ref{subec:transrules:quadmeas}, the rule for measuring quadratures other than~$\op q$ is to phase shift the node to be measured so that a subsequent measurement of~$\op q$ is equivalent to the intended quadrature measurement on the original state.  For example, an inverse Fourier transform~$\mat F^{-1} = \mat S_{\mathrm{phase}}(\tfrac \pi 2)$ followed by a $\op q$~measurement implements a $\op p$~measurement on the original state.  The graph rule for this is just a concatenation of the rule for the phase shift followed by node deletion.

Because phase shifts can induce new links in the neighborhood of the shifted node (i.e.,~it can change~$\mat W$), measurements other than of~$\op q$ will in general preserve links between nodes that were previously mediated by the deleted node.  We show a special case of this by measuring~$\op p$ on the previous graph with $W_{12} = 0$:
\begin{center}
\beginpgfgraphicnamed{pmeas}
\begin{tikzpicture} [scale=1]
	\small 

	\def\orig{(0,0)}
	\def\final{(4,0)}
	\def\mapsymb{(3,-0.3)}

	\def\scale{1.5}

	\path \orig
		+(0:\scale) node [micro] (r1) {}
		+(-40:\scale) node [micro] (r2) {};
	\path \orig node [micro] (t) {}
		edge [unitlink] node [above] {$r_1$} (r1)
		edge [unitlink] node [below] {$r_2$} (r2)
		edge [unitlink, loop, in=90,out=180,min distance=0.4cm] node [above=-1pt] {$t$} (t);

	\path (r1) 
		edge [unitlink, loop, in=0,out=90,min distance=0.4cm] node [above=-1pt] {$W_{11}$} (r1);
		
	\path (r2) 
		edge [unitlink, loop, in=-45,out=-135,min distance=0.4cm] node [below=-1pt] {$W_{22}$} (r2);

	\draw [nodehighlight] (t) circle (\micronodesize);

	\node at \mapsymb {$\xmapsto{\text{measure~$\op p$}}$};

	
	\path \final
		+(0:\scale) node [micro] (r1) {}
		+(-40:\scale) node [micro] (r2) {};
	\path \final node [micro,very nearly transparent] (t) {}
	;

	\path (r1) 
		edge [unitlink, loop, in=0,out=90,min distance=0.4cm] node [above right=-3pt] {$W_{11}\! - \! t^{-1} r_1^2$} (r1)
		edge [unitlink] node [right=-1pt] {$-t^{-1} r_1 r_2$} (r2);
		
	\path (r2) 
		edge [unitlink, loop, in=-45,out=-135,min distance=0.4cm] node [xshift=-2em,below right=-3pt] {$W_{22} \! -\! t^{-1} r_2^2$} (r2);

\end{tikzpicture}
\endpgfgraphicnamed
\end{center}
Notice that since the measured node mediated a connection between the two other nodes, measuring it in~$\op p$ preserved this connection in the form of a new edge connecting those two nodes directly.  Also notice that the strength of this new connection is proportional to both of the weights~$r_{1,2}$ in the original mediated connection.  These rules agree with Zhang's rules for quadrature measurements on ideal CV cluster states~\cite{Zhang2010}.

\section{Applications}
\label{sec:apps}

\subsection{Closest CV cluster state to a given Gaussian pure state}
\label{subsec:apps:closest}

Given the many different ways to make approximate CV cluster states~\cite{Zhang2006,Menicucci2006,vanLoock2007,Menicucci2007,Zaidi2008,Menicucci2008,Flammia2009,Menicucci2010}, a useful question to ask is, \emph{What is the closest CV cluster state approximated by a given Gaussian pure state?}  Imagine that you are given a system with a Gaussian graph~$\mat Z$ and you want to know if you can use it as a CV cluster state for one-way QC.  From Section~\ref{subec:gauss:nullifier}, we know that~$\mat V$ is the graph for the ideal CV cluster state approximated by~$\mat Z$ and that $\frac 1 2 \tr \mat U$ is the approximation error (which vanishes in the ideal case).  On the other hand, we also know that $\cH$-graph states have~$\mat V = \mat 0$ and a diverging~$\frac 1 2 \tr \mat U$ in the limit of large squeezing, yet they are useful for CV one-way~QC~\cite{Menicucci2007,Zaidi2008,Menicucci2008,Flammia2009}.  This na\"ive prescription is therefore not enough.

In fact, transforming these $\cH$-graph states into a useful form requires phase shifting nodes appropriately~\cite{Menicucci2007}, which, of course, also transforms the graph~$\mat Z \mapsto \mat Z'$.  The $\cH$-graph method, first proposed in Reference~\cite{Menicucci2007}, creates approximations to ideal CV cluster states with a bipartite graph.  The nodes of a bipartite graph can be colored such no edge links two nodes of the same color.  The prescription from Reference~\cite{Menicucci2007} for using these states requires first performing a Fourier transform (i.e.,~phase shift by~$-\tfrac \pi 2$) on some of the nodes and then using the resulting state as an ordinary CV cluster state (see Reference~\cite{Menicucci2007} for details).  The transformed~$\mat Z'$ has a nonzero~$\mat V'$ and a small approximation error (in the limit of large squeezing), which is the basis for the main result in that paper.  But is this prescription the best we can do?  Or might there be some other ideal CV cluster state obtained by a different phase shifting procedure that is better approximated by the original $\cH$-graph state?

We will address this question in the next subsection, but to answer it, we need to generalize our notion of what it means for a Gaussian graph to serve as a CV cluster-state graph.  Specifically, we must allow the freedom to phase shift any mode arbitrarily since this operation can be absorbed into the measurement-based protocol to be implemented on the state and is thus simply a redefinition of the quadratures and does not need to be actively implemented.  Our task then is to minimize the approximation error~$\frac 1 2 \tr \mat U'$ over all possible phase shifts (where the prime indicates the new graph after these phase shifts have been applied).  Once such a minimum is obtained, the resulting $\mat V'$~is the ``closest'' ideal CV cluster state approximated by~$\mat Z'$ and thus also the closest one approximated by the original~$\mat Z$.

The details of the calculation are somewhat involved, so we relegate them to Appendix~\ref{app:derivationclosest}, choosing to list here just the results.  Let
\begin{align}
\label{eq:Smattheta}
	\mat S_{\mat \theta} =
	\begin{pmatrix}
		\cos \mat \theta	&	\sin \mat \theta	\\
		-\sin \mat \theta	&	\cos \mat \theta		
	\end{pmatrix}
	\,,
\end{align}
where $\mat \theta = \diag (\theta_1, \dotsc, \theta_N)$.  This is the symplectic matrix representing the phase shifts~$\theta_j$ to be performed on each node~$j$.  In terms of the original graph~$\mat Z$, the new graph is
\begin{align}
\label{eq:Zprimeafterphases}
	\mat Z' = (-\sin \mat \theta + \cos \mat \theta\, \mat Z) (\cos \mat \theta + \sin \mat \theta\, \mat Z)^{-1}\,.
\end{align}
A sufficient condition for~$\frac 1 2 \tr \mat U'$ to be an \emph{extremum} (see Appendix~\ref{app:derivationclosest}) is that
\begin{align}
\label{eq:extremumcond}
	(\Im \mat Z'^2)_{jj} = ( \mat U' \mat V' )_{jj} = 0 \quad \forall j\,,
\end{align}
in other words, when corresponding rows of~$\mat U'$ and~$\mat V'$ are orthogonal.  A sufficient condition for such an extremum to also be a \emph{local minimum} (see Appendix~\ref{app:derivationclosest}) is that
\begin{align}
\label{eq:Hesspos}
	\Im [(\mat \id + \mat Z'^2) \circ \mat Z'] > 0\,,
\end{align}
where $\circ$~represents the Hadamard (entrywise) product of two matrices.  An interesting generalization of this condition is when~$\Im [(\mat \id + \mat Z'^2) \circ \mat Z'] \ge 0$, which is the best one can do when there is a continuous manifold of phases that all (locally) minimize~$\frac 1 2 \tr \mat U'$.  We will see an example of this in Section~\ref{subsec:apps:TMS} when we analyze the two-mode squeezed state.

We would like to stress that these are \emph{sufficient} conditions for \emph{local minima} only; some minima may not be able to be found this way, and not all minima found in this way may be global minima.  But in certain useful cases (such as the ones that follow), we can apply these results to provide evidence that we have found the closest CV cluster state (up to phase shifts) for a given graph.  With rigorous proofs of optimality left to future work, our purpose here is to illustrate a useful application of this graphical formalism.

\subsection{Analysis of the \texorpdfstring{$\cH$}{H}-graph method of construction}
\label{subsec:apps:Hopt}

As shown in Section~\ref{subsec:gauss:Hgraph}, all Gaussian pure states created using the $\cH$-graph method~\cite{Menicucci2007} necessarily satisfy~$\mat Z = i\mat U = ie^{-2 \alpha \mat G}$, where $\mat G$~is the $\cH$-graph that defines the multimode squeezing Hamiltonian, Eq.~\eqref{eq:HfromG}, that acts on the vacuum, and $\alpha > 0$~is an overall squeezing parameter.  To transform this state into an approximate CV cluster state, the $\cH$-graph method prescribes phase shifting particular nodes in accordance with the desired CV cluster-state graph~\cite{Menicucci2007}.  We will thus partition~$\mat Z$ into blocks as in Eq.~\eqref{eq:nlocalblocks} in accord with the partitioning of nodes to be shifted and nodes to be left alone:
\begin{align}
	\mat Z &=
	\begin{pmatrix}
		i \mat U_1		& i \mat U_2^\tp	\\
		i \mat U_2	& i \mat U_3
	\end{pmatrix}
\end{align}
We now perform a Fourier transform~$\mat F$ on the nodes corresponding to the upper-left block, using Eq.~\eqref{eq:Faction} for each or applying Eq.~\eqref{eq:nlocaltransformedZ} directly, resulting in 
\begin{align}
	\mat Z' &=
	\begin{pmatrix}
		i \mat U_1^{-1}			&	-\mat U_1^{-1} \mat U_2^\tp	\\
		-\mat U_2 \mat U_1^{-1}	&	i \mat U_3 - i \mat U_2 \mat U_1^{-1} \mat U_2^\tp
	\end{pmatrix}
	\,.
\end{align}
We want to know whether this procedure results in a minimum in the approximation error~$\frac 1 2 \tr \mat U'$.  This is difficult in general, but we will do so for a particularly useful case shortly.  For the moment, we can prove something weaker but still interesting: this particular choice of phase shifts results in an \emph{extremum} of~$\frac 1 2 \tr \mat U'$.  This is easily seen since every entry in~$\mat Z'$ is either purely real or purely imaginary.  Therefore $(\mat U' \mat V')_{jj} = 0$ for all~$j$, and thus, by Eq.~\eqref{eq:extremumcond}, $\frac 1 2 \tr \mat U'$~is an extremum.

In fact, this holds for \emph{any} set of phase shifts by multiples of~$\frac \pi 2$ on purely imaginary graphs, since the above construction didn't depend on any particular node(s) being selected for phase shifting, and shifting by~$\pi$ just applies negative signs, which doesn't change the real\slash imaginary nature of the entries.  Many of these will result in rather large values of the approximation error, but multiples of~$\tfrac \pi 2$ mean that these cases are still local extrema.  The prescription in Reference~\cite{Menicucci2007} explicitly chooses the nodes to be phase shifted so that the desired ideal CV cluster-state nullifiers vanish in the limit~$\alpha \to \infty$.  What was believed, but not shown explicitly, is that for any \emph{finite} value of~$\alpha$, the prescription \emph{minimized} these variances.  While we have still not shown this rigorously in the general case, we have provided additional evidence for this claim by showing that it results in an extremum of the sum of these variances.  We will be able to say more about particular examples, presented next.

\subsection{Two-mode squeezed state}
\label{subsec:apps:TMS}

The simplest nontrivial example of $\cH$-graph construction of a CV cluster state is embodied by the two-mode squeezed state, which results from downconversion in a nondegenerate OPO~\cite{Reid1988,Reid1989,Drummond1990,Ou1992a,Braunstein2005a}.  This procedure applies the Hamiltonian from Eq.~\eqref{eq:HfromG} with an $\cH$-graph~\cite{Menicucci2007}
\begin{align}
\label{eq:GTMS}
	\mat G =
	\begin{pmatrix}
		0	&	1	\\
		1	&	0
	\end{pmatrix}
	\,.
\end{align}
Using Eq.~\eqref{eq:ZHgraph}, the state created has%
\footnote{\label{foot:TMS}Alternatively, the same state can be generated by interfering a $\op q$-squeezed state~$Z=ie^{2\alpha}$ with a $\op p$-squeezed state~$Z=ie^{-2\alpha}$ at a 50:50 beamsplitter~\cite{Furusawa1998,vanLoock1999,Braunstein2005a}---i.e.,~by applying Eq.~\eqref{eq:SBS} with ${\theta = \tfrac \pi 4}$ to ${\mat Z = \left( \begin{smallmatrix} ie^{-2\alpha} & 0 \\ 0 & ie^{2\alpha} \end{smallmatrix} \right)}$.}
\begin{align}\label{eq:TMSS_adjMatrix}
	\mat Z_1 &= ie^{-2 \alpha \mat G} =
	\begin{pmatrix}
		i \cosh 2\alpha	&	-i \sinh 2\alpha	\\
		-i \sinh 2\alpha	&	i \cosh 2\alpha
	\end{pmatrix}
	\,,
\end{align}
where $\alpha > 0$ is an overall squeezing parameter, and the subscript is used because this state will be compared below to a canonically-generated two-mode CV cluster state.  This state is symmetric under exchange of the nodes, so we can choose to Fourier transform either one, either of which, using Eq.~\eqref{eq:Faction}, results in
\begin{align}
	\mat Z_1' &=
	\begin{pmatrix}
		i \sech 2\alpha	&	\tanh 2\alpha	\\
		\tanh 2\alpha	&	i \sech 2\alpha
	\end{pmatrix}
	\,.
\end{align}
This is a two-mode CV cluster state with weight~$\tanh 2\alpha$, which goes to~1 in the limit~$\alpha \to \infty$, and approximation error~$\frac 1 2 \tr \mat U_1' = \sech 2\alpha$, which vanishes in the same limit.  This means that~$\mat Z_1' \to \mat G$ in this limit, making trivial the connection between the generated CV cluster state and its generating~$\cH$-graph for the two-mode squeezed state.  This property is not generic---most $\cH$-graphs generate entirely different CV cluster states~\cite{Menicucci2007}---but a particular class of $\cH$-graphs admit this trivial connection.  This is discussed further in Section~\ref{subsec:apps:selfinv}.

For finite~$\alpha$, any combination of possible phase shifts by multiples of~$\frac \pi 2$ (see Section~\ref{subsec:apps:Hopt}) results in an approximation error of either~$\cosh 2\alpha$ or~$\sech 2\alpha$.  $\mat Z_1'$~corresponds to the latter---a (local) minimum in the approximation error.  Let's try to use Eq.~\eqref{eq:Hesspos} to verify this, though:
\begin{align}
	\Im [(\mat \id + \mat Z_1'^2) \circ \mat Z_1'] &= 2 \sech 2\alpha \tanh^2 2\alpha
	\begin{pmatrix}
		1	& 1	\\
		1	& 1
	\end{pmatrix}
	\,,
\end{align}
which has eignevalues~0 and~$4 \sech 2\alpha \tanh^2 2\alpha$, and thus only the weaker condition, $\Im [(\mat \id + \mat Z_1'^2) \circ \mat Z_1'] \ge 0$, is satisfied.  The zero eigenvalue in this case corresponds to the fact that any additional phase shift by~$\theta$ on the first node can be \emph{completely canceled} by an additional phase shift by~$-\theta$ on the second node.  This means that there is a one-parameter manifold of phase shifts that all result in~$\mat Z_1 \mapsto \mat Z_1'$, which has approximation error~$\sech 2\alpha$ and which is the minimum value obtainable.  Specifically, all phase shifts~$(\theta_1, \theta_2)$ satisfying~$\theta_1 + \theta_2 = -\frac \pi 2$ will result in minimal error as an approximation to an ideal CV cluster state.  In addition, a second manifold defined by~$\theta_1 + \theta_2 = +\frac \pi 2$ gives a similar minimum-error approximate CV cluster state but one which has weight~$-\tanh 2\alpha$.

We can also create a version of this state as a canonical CV cluster state.  In this case, using Eq.~\eqref{eq:canonicalZ} gives
\begin{align}
\label{eq:canontwomode}
	\mat Z_2 &=
	\begin{pmatrix}
		i e^{-2r}	&	1	\\
		1		&	i e^{-2r}
	\end{pmatrix}
	\,.
\end{align}
Both~$\mat Z_1'$ and~$\mat Z_2$ are complete graphs on 2 nodes with self-loops, and while they are identical in the infinite squeezing limit ($\alpha \to \infty, r \to \infty$), the weightings on each are different for any \emph{finite} amount of squeezing.  While real-weighted ideal CV cluster-state graphs fail to illustrate this, the complex graphical formalism captures the difference.

The importance of this difference comes from the relative efficiency of each method in its use of squeezing resources.  It is known that the canonically-generated CV cluster states (e.g.,~$\mat Z_2$) are inefficient in this sense because the resulting state has single-mode-squeezed marginals~\cite{vanLoock2007,Gu2009}.  An ``efficient'' state has~$\avg{\op q_j \op q_j} = \avg{\op p_j \op p_j}$ and~$\tfrac 1 2 \avg{ \{ \op q_j, \op p_j \} } = 0$ for all nodes~$j$, which means that all of the correlations are between quadratures variables of different systems~\cite{Adesso2006}.  Recalling Eq.~\eqref{eq:Wignerfromuv} for the covariance matrix in terms of~$\mat Z$, these requirements become
\begin{align}
\label{eq:efficient}
	(\mat U + \mat V \mat U^{-1} \mat V - \mat U^{-1})_{jj} = (\mat V \mat U^{-1})_{jj} = 0 \quad \forall j\,.
\end{align}
This fails for~$\mat Z_2$ because
\begin{align}
	(\mat U_2 + \mat V_2 \mat U_2^{-1} \mat V_2 - \mat U_2^{-1})_{jj} &= e^{-2r}\,,
\end{align}
which vanishes only in the infinite-squeezing limit~($r \to \infty$).  In contrast, Eq.~\eqref{eq:efficient} \emph{does} hold for~$\mat Z_1'$.  Thus, the $\cH$-graph method of constructing a two-mode CV cluster state is efficient in its use of squeezing resources, while the canonical method is not.  This fact cannot be seen in the real-weighted graphical formalism, but the complex formalism reveals it.

\subsection{Bipartite, self-inverse \texorpdfstring{$\cH$}{H}-graphs}
\label{subsec:apps:selfinv}

Notice that for the two-mode squeezed state created using the $\cH$-graph method, $\mat Z_1' \to \mat G$~in the limit~$\alpha \to \infty$ of large squeezing.  This is a general feature of $\cH$-graphs that are \emph{bipartite} and \emph{self-inverse}~\cite{Zaidi2008}.  Such graphs include some with square-lattice topology that are useful for universal one-way QC~\cite{Flammia2009}.  Recalling Eq.~\eqref{eq:ZHgraphselfinv}, when $\mat G$~is self-inverse (i.e.,~$\mat G^2 = \mat \id$), the resulting $\cH$-graph Gaussian pure state is
\begin{align}
\label{eq:ZHgraphselfinvrepeat}
	\mat Z = ie^{-2\alpha \mat G} = i \cosh 2\alpha\, \mat \id - i\sinh 2\alpha\, \mat G\,,
\end{align}
In this case, the Gaussian graph~$\mat Z$ is just a rescaled version of~$\mat G$ with additional self-loops.  When $\mat G$~is also bipartite, it can be written as
\begin{align}
	\mat G =
	\begin{pmatrix}
		\mat 0	& \mat G_0^\tp	\\
		\mat G_0	& \mat 0
	\end{pmatrix}
	\,,
\end{align}
where $\mat G_0$~is square and satisfies~$\mat G_0^\tp \mat G_0 = \mat G_0 \mat G_0^\tp = \mat \id$, giving
\begin{align}
	\mat Z =
	\begin{pmatrix}
		i\cosh 2\alpha\, \mat \id	& -i\sinh 2\alpha\, \mat G_0^\tp	\\
		-i\sinh 2\alpha\, \mat G_0	& i\cosh 2\alpha\, \mat \id
	\end{pmatrix}
	\,.
\end{align}
Performing a Fourier transform on the either set of nodes in the bipartition~\cite{Menicucci2007}, using Eq.~\eqref{eq:Faction}, gives
\begin{align}
	\mat Z' &=
	\begin{pmatrix}
		i\sech 2\alpha\, \mat \id	& \tanh 2\alpha\, \mat G_0^\tp	\\
		\tanh 2\alpha\, \mat G_0	& i\sech 2\alpha\, \mat \id
	\end{pmatrix}
	\nonumber \\
	&= i\sech 2\alpha\, \mat \id + \tanh 2\alpha\, \mat G
	\,,
\end{align}
which satisfies~$\mat Z' \to \mat G$ for large squeezing ($\alpha \to \infty$), corresponding to an ideal CV cluster state with the same graph as the $\cH$-graph~$\mat G$~\cite{Zaidi2008,Flammia2009}.  The two-mode squeezed state from Section~\ref{subsec:apps:TMS} is the simplest special case of this construction.

We already know from Section~\ref{subsec:apps:Hopt} that phase shifting by multiples of~$\frac \pi 2$ the nodes of a Gaussian pure state created using the $\cH$-graph method of construction results in approximate CV cluster states with extremal approximation error~$\frac 1 2 \tr \mat U$.  In this case, the possible values of this error are
\begin{align}
	\frac 1 2 \tr \mat U &= n \sech 2\alpha + (N_b-n) \cosh 2\alpha\,,
\end{align}
where $N_b$~is the total number of nodes in one of the bipartitions (equal to half the total number of nodes in the graph).  Clearly the minimum of these choices occurs when~$n=N_b$, corresponding (nonuniquely) to the original prescription~\cite{Menicucci2007}: Fourier transform all nodes in one of the bipartitions, and do nothing to those in the other set.  This results in~$\frac 1 2 \tr \mat U = N_b \sech 2\alpha$, which vanishes in the limit~$\alpha \to \infty$.

Let's try to verify Eq.~\eqref{eq:Hesspos} for this choice:
\begin{align}
\label{eq:Hessselfinv}
	\Im &[(\mat \id + \mat Z'^2) \circ \mat Z'] 
	\nonumber \\
	&= \Im \bigl[ (2\tanh^2 2\alpha\, \mat \id + 2i \sech 2\alpha \tanh 2\alpha\, \mat G) \nonumber \\
	&\qquad 
	\circ ( i\sech 2\alpha\, \mat \id + \tanh 2\alpha\, \mat G) \bigr]
	\nonumber \\
	&= 2 \sech 2\alpha \tanh^2 2\alpha
	\begin{pmatrix}
		\mat \id				& \mat G_0^\tp \circ \mat G_0^\tp	\\
		\mat G_0 \circ \mat G_0	& \mat \id
	\end{pmatrix}
	\nonumber \\
	&= 2 \sech 2\alpha \tanh^2 2\alpha\, (\mat \id + \mat G) \circ (\mat \id + \mat G) \ge 0
	\,,
\end{align}
where we have used the Schur product theorem ($\mat A \circ \mat B \ge 0$ if~$\mat A, \mat B \ge 0$), and~$\mat \id + \mat G \ge 0$ since $\mat G$~is self-inverse.  Therefore, $\Im [(\mat \id + \mat Z'^2) \circ \mat Z'] \ge 0$, which we also found independently for the two-mode squeezed state in Section~\ref{subsec:apps:TMS}.  In some cases of interest, like for the two-mode squeezed state, there will be a manifold of phase shifts that result in equivalent minimal-error CV cluster states made using a bipartite, self-inverse~$\cH$-graph, thus accounting for the vanishing eigenvalue(s) in Eq.~\eqref{eq:Hessselfinv}.  While this does not rigorously prove that the extremal value of the approximation error in the $\cH$-graph construction method for self-inverse, bipartite $\cH$-graphs~\cite{Zaidi2008,Flammia2009} is a global minimum (or even, strictly speaking, a local minimum), these calculations suffice to illustrate the usefulness of the complex graphical formalism and suggest further avenues of research in this area.

\subsection{GHZ state}
\label{subsec:apps:GHZ}

Also discussed in the literature~\cite{Menicucci2007,Pfister2007,Bradley2005,Pfister2004} is the CV Greenberger-Horne-Zeilinger~(GHZ) state, which can be made for any number of systems~$N$ using a complete $\cH$-graph\footnote{Alternatively, just like for the two-mode squeezed state (see footnote~\ref{foot:TMS}), the CV GHZ state can be made using single-mode squeezing and interferometry~\cite{vanLoock2000}.} with no self-loops:~$\mat G = \mat \id - \mat J$, where $\mat J$~is the $N \times N$~matrix of all ones.  We note that the diagonal of~$\mat G$ is not fixed for this state, but there are restrictions on it.  In order to obtain a fully squeezed state in the limit~$\alpha \to \infty$, an~$\cH$-graph must be full rank~\cite{Menicucci2007} (i.e.,~all eigenvalues must be nonzero).  In order to obtain GHZ entanglement, self-loops on the complete graph must be adjusted so that $\mat G$~has at least one eigenvalue of each sign~\cite{Pfister2007,Bradley2005,Pfister2004}.  Since the spectrum of~$\mat J$ is~$(N, 0, \dotsc, 0)$, any $\cH$-graph will suffice that is of the form~$\beta \mat \id - \mat J$ with $\beta > 0$.  
We choose a hollow~$\mat G$ (i.e.,~zero diagonal, corresponding to~$\beta=1$) for calculational convenience and because it has been studied specifically in the literature~\cite{Bradley2005,Pfister2004}.

Since~$\mat J^k = N^{k-1} \mat J$ for~$k \ge 1$, the Gaussian graph associated with~$\mat G$ is
\begin{align}
	\mat Z = i e^{-2\alpha} e^{2\alpha \mat J}
	&= i e^{-2\alpha} \left( \mat \id + \frac {\mat J} {N} \sum_{k=1}^\infty \frac {(2\alpha N)^k} {k!} \right) \nonumber \\
	&= i e^{-2\alpha} \left( \mat \id + \frac {e^{2\alpha N} - 1} {N} \mat J \right) \,,
\end{align}
As prescribed in Reference~\cite{Menicucci2007}, we wish to perform a Fourier transform on the first node using Eq.~\eqref{eq:Faction}.  To do this, we partition the graph as follows:
\begin{align}
	\mat Z &=
	\begin{pmatrix}
		t		& r		& r		& \cdots	& r		\\
		r		& W		& w		& \cdots	& w		\\
		r		& w		& W		& \ddots	& \vdots	\\
		\vdots	& \vdots	& \ddots	& \ddots	& w		\\
		r		& w		& \cdots	& w		& W
	\end{pmatrix}
	\,,
\end{align}
where
\begin{align}
	t = W &= \frac {i e^{-2\alpha}} {N} ( N + e^{2\alpha N} - 1 ) \nonumber \\
	&= \frac {i} {N \epsilon^{N-1}} [1 + (N-1) \epsilon^N] \,, \\
	r = w &= \frac {i e^{-2\alpha}} {N} ( e^{2\alpha N} - 1 ) \nonumber \\
	&= \frac {i} {N \epsilon^{N-1}} [1 - \epsilon^N] \,,
\end{align}
and where we have defined the small parameter~$\epsilon := e^{-2\alpha}$ because we are eventually interested in the limit~$\alpha \to \infty$.  After applying Eq.~\eqref{eq:Faction}, the resulting graph is
\begin{align}
	\mat Z' &=
	\begin{pmatrix}
		t'		& r'		& r'		& \cdots	& r'		\\
		r'		& W'		& w'		& \cdots	& w'		\\
		r'		& w'		& W'		& \ddots	& \vdots	\\
		\vdots	& \vdots	& \ddots	& \ddots	& w'		\\
		r'		& w'		& \cdots	& w'		& W'
	\end{pmatrix}
	\,,
\end{align}
for which we will evaluate each term exactly and also to the lowest nontrivial order in~$\epsilon$:
\begin{align}
	t' = -\frac 1 t &= \frac {iN\epsilon^{N-1}} {1 + (N-1)\epsilon^N} 
		\simeq iN\epsilon^{N-1}\,, \\
	r' = - \frac r t &= \frac {-1 + \epsilon^N} {1 + (N-1) \epsilon^N} 
		\simeq -1 + N\epsilon^N\,, \\
	W' = W - \frac {r^2} {t} &= i\epsilon \frac {2 + (N-2) \epsilon^N} {1 + (N-1)\epsilon^N} \simeq i 2\epsilon\,, \\
	w' = w - \frac {r^2} {t} &= i \epsilon \frac {1 - \epsilon^N} {1 + (N - 1) \epsilon^N} \simeq i \epsilon\,.
\end{align}
In the infinite-squeezing limit ($\alpha \to \infty$, which corresponds to~$\epsilon \to 0$), all terms vanish except~$r'$, resulting in an ideal CV cluster state with a star graph and~$-1$ for all edge weights:
\begin{align}
	\mat Z' &\to
	\begin{pmatrix}
		0		& -1		& \cdots	& -1		\\
		-1		& 0		& \cdots	& 0		\\
		\vdots	& \vdots	& \ddots	& \vdots	\\
		-1		& 0		& \cdots	& 0
	\end{pmatrix}
	\,.
\end{align}
The phase-shifted node is the center of the star.  A similar connection between complete graphs and star graphs is known for qubit cluster states~\cite{VandenNest2004,Hein2004}.  We hope this formalism will be a useful tool in generalizing results like these (for qubits) to the realm of CVs.

That a Gaussian pure state generated from a complete $\cH$-graph corresponds---after a Fourier transform on one of the nodes---to a star-graph CV cluster state in the infinite-squeezing limit is already known~\cite{Menicucci2007}, but the existing graph transformation rules for ideal CV cluster states~\cite{Zhang2008a} do not allow representation of the necessary Fourier transform operation.  What is new here---and what this construction illustrates---is a unified presentation that includes approximate CV cluster states and $\cH$-graphs (through the exponential map), plus intermediate Gaussian pure states and the rules for transforming between them, which captures \emph{all} of the details associated with finite squeezing wholly from within the graphical formalism.

\subsection{CV cluster states as ground states of a 2-body Hamiltonian}
\label{subsec:apps:ground}

Here we derive a (nonunique) Hamiltonian whose ground state is a particular CV cluster state\footnote{After the initial appearance of this work, but before its publication, another result appeared addressing this idea~\cite{Aolita2010}.  The analysis is limited to canonical CV cluster states~\cite{Menicucci2006} and uses the original nullifier formalism~\cite{Gu2009}, but the possible use of ground states of quadratic Hamiltonians for CV one-way QC is explored in much greater depth.} or, more generally, any given Gaussian graph~$\mat Z$.  The straightforward Hamiltonian to consider is
\begin{align}
	H[\mat Z] &= 2 \opvec a_{\mat Z}^\herm \opvec a_{\mat Z} 
\,,
\end{align}
which satisfies~$H[\mat Z] \ket {\psi_{\mat Z}} = 0$ by Eq.~\eqref{eq:psiZnull}.  This two-body Hamiltonian is also positive definite by construction, which means that $\ket {\psi_{\mat Z}}$~is its ground state.  Instead, however, we will use
\begin{align}
	H[\mat Z] &= (\opvec p - \mat Z \opvec q)^\herm (\opvec p - \mat Z \opvec q) \nonumber \\
	&= (\opvec p^\tp - \opvec q^\tp \mat Z^*) (\opvec p - \mat Z \opvec q) \nonumber \\
	&= \opvec p^\tp \opvec p - \opvec q^\tp \mat Z^* \opvec p - \opvec p^\tp \mat Z \opvec q + \opvec q^\tp \mat Z^* \mat Z \opvec q \nonumber \\
	&= \opvec p^\tp \opvec p - \opvec q^\tp (-i \mat U + \mat V)  \opvec p - \opvec p^\tp (i \mat U + \mat V) \opvec q \nonumber \\
	&\qquad+ \opvec q^\tp (\mat U^2 + \mat V^2 - i\mat U \mat V + i \mat V \mat U) \opvec q\,.
\end{align}
This two-body Hamiltonian, too, satisfies~$H[\mat Z] \ket {\psi_{\mat Z}} = 0$ by Eq.~\eqref{eq:psiZnull}, and it is positive definite by construction, which means that $\ket {\psi_{\mat Z}}$~is its ground state.  Reference~\olcite{Menicucci2007} proves that any CV cluster state with a bipartite graph is equivalent to an~$\cH$-graph state up to phase shifts of~$-\frac \pi 2$.  Restricting to these graphs simplifies the construction even further:
\begin{align}
	H[i\mat U] &= \opvec p^\tp \opvec p + i \opvec q^\tp \mat U  \opvec p - i \opvec p^\tp \mat U \opvec q + \opvec q^\tp \mat U^2 \opvec q\,.
\end{align}
Considering that
\begin{align}
	i \opvec q^\tp \mat U  \opvec p - i \opvec p^\tp \mat U \opvec q &= i \sum_{jk} (\op q_j U_{jk} \op p_k - \op p_j U_{jk} \op q_k) \nonumber \\
	&= i \sum_{jk} (\op q_j U_{jk} \op p_k - \op p_k U_{jk} \op q_j) \nonumber \\
	&= i \sum_{jk} [\op q_j, \op p_k] U_{jk} \nonumber \\
	&= -\sum_{j} U_{jj} \nonumber \\
	&= -\tr \mat U
\end{align}
is just an overall shift in energy (which can be ignored), we have the result that
\begin{align}
	H &= \opvec p^\tp \opvec p + \opvec q^\tp \mat U^2 \opvec q
\end{align}
has an $\cH$-graph-representable ground state (with $\mat G \propto -\log \mat U$).  Thus, for any desired CV cluster state with a bipartite graph~\cite{Menicucci2007}, some Hamiltonian of this form exists that has the desired state as its ground state (up to phase shifts).

\subsection{Bipartite entanglement}
\label{subsec:apps:entanglement}

A general $N$-mode Gaussian pure state may exhibit some form of multipartite entanglement.
In this case, it is useful to consider the entanglement for any bipartite splitting of the state. 
Given an $N$-mode Gaussian pure state, we split the modes into two sets, one with $n$ and the other one with $m$ modes,
$N=n+m$. We wish to calculate the entanglement between the two sets.  

Since the overall state is pure, we may use the entanglement entropy for this, which is simply the entropy of one of the subsystems with the other traced out. For a Gaussian pure state, the entanglement entropy is a simple function of the symplectic eigenvalues~\cite{Vidal2002} of the covariance matrix corresponding to the nodes in question.  The symplectic eigenvalues are similar to ordinary eigenvalues of a matrix, except that a symplectic product is taken between the matrix and its symplectic eigenvectors instead of the ordinary matrix product and the magnitude of the imaginary part is then taken.  Thus, while the matrix equation~$\mat M \vec v_j = \lambda_j \vec v_j$ defines the ordinary eigenvalues of the matrix~$\mat M$, the following equation defines the eigenvalues associated with the symplectic product:
\begin{align}
	\mat \Sigma \mat \Omega \vec v_j = \lambda_j \vec v_j\,.
\end{align}
Notice the presence of the symplectic form~$\mat \Omega$ in this relation.  Further note that the matrix $\mat \Sigma \mat \Omega$ is not Hermitian, since we have $(\mat \Sigma \mat \Omega)^\dagger = \mat \Omega^\tp \mat \Sigma = - \mat \Omega \mat \Sigma$, but both~$\mat \Sigma \mat \Omega$ and~$\mat \Omega \mat \Sigma$ have the same eigenvalues (since they are related by a similarity transformation).  Therefore, for an $N$-mode matrix, we obtain $2N$ imaginary eigenvalues~$\lambda_{j\pm} = \pm i \sigma_j$, which occur in conjugate pairs.  The $N$~symplectic eigenvalues are just the~$\sigma_j > 0$.  

We want the symplectic eigenvalues of the reduced density matrix, though, so we have to consider the covariance matrix truncated to the subset of $n$ modes.  Let~$\mat P$ be a matrix constructed in the following way: first, create a diagonal matrix that has a~1 in the diagonal entries corresponding to the nodes of the set to be kept and 0's everywhere else; then, remove the all-zero rows.  Also, let~$\bar {\mat P} = \left( \begin{smallmatrix} \mat P & \mat 0 \\ \mat 0 & \mat P \end{smallmatrix}\right)$, which is also not a square matrix.  
Here, $\mat P$ is an $n\times N$ matrix, and so $\bar {\mat P}$ is a $2n \times 2N$ matrix.

The covariance matrix of the reduced state of $n$ modes is then~$\tilde {\mat \Sigma} := \bar {\mat P} \mat \Sigma \bar {\mat P}^\tp$,
a $2n \times 2n$ matrix. Keep in mind that this truncated covariance matrix does not have a corresponding graph in our formalism, because it is generally not pure (if it were pure, then the original state would be trivially a product state).  
Since $\tilde {\mat \Omega} = \bar {\mat P} \mat \Omega \bar {\mat P}^\tp$ is the symplectic form for the truncated set of modes, we seek the ordinary eigenvalues of~$\bar {\mat P} \mat \Omega \mat \Sigma \bar {\mat P}^\tp$.

Our goal is to find a simple way to use the graph~$\mat Z$ to read off the bipartite entanglement entropy across an arbitrary boundary dividing the graph into the two subsets of nodes. In particular,  
we would like to find an interpretation of the symplectic eigenvalues, and hence the entanglement
in terms of the shape and the weights of the given graph.
For a general graph and arbitrary divisions into subsets of $n$ and $m$ modes, this interpretation
is not so straightforward. However, for specific graphs and bipartite splittings,
a simple, instructive connection between the entanglement and the graph can sometimes be found.

A particularly straightforward example is that of the canonical CV cluster states~\cite{Menicucci2006}.  Canonical CV cluster states are of the form~$\mat Z = i\epsilon \mat \id + \mat V$, where~$\epsilon = e^{-2r}$.
As shown in Figure~\ref{fig:complexgraphs}, the imaginary self-loops represent the initial squeezing of each node, while the matrix $\mat V$
contains the real weights between neighboring nodes (corresponding to the strength of the $\CZ$ gates,
which also contain squeezing~\cite{Braunstein2005}).
The covariance matrix for such a state is
\begin{align}
	\mat \Sigma &:= \frac 1 {2\epsilon}
	\begin{pmatrix}
		\mat \id		& \mat V	\\
		\mat V		& \epsilon^2 \mat \id + \mat V^2
	\end{pmatrix}
	\,.
\end{align}
This means that in order to obtain the ordinary eigenvalues
of $\bar {\mat P} \mat \Omega \mat \Sigma \bar {\mat P}^\tp$, we need to solve
\begin{align}
\label{eq:lambdaWV}
	0 &= \det
	\begin{pmatrix}
		\tilde {\mat V} - \lambda \mat \id	&	\epsilon^2 \mat \id + \tilde {\mat W}	\\
		-\mat \id					&	- \tilde {\mat V} - \lambda \mat \id
	\end{pmatrix}
	\nonumber \\
	&= \det ( \lambda^2 \mat \id - \tilde {\mat V}^2 + \epsilon^2 \mat \id + \tilde {\mat W} ) \,,
\end{align}
where~$\tilde {\mat V} = \mat P \mat V \mat P^\tp$ and~$\tilde {\mat W} = \mat P \mat V^2 \mat P^\tp$, and the second line follows because the bottom two blocks commute.  Notice that since $\mat \id - \mat P^\tp \mat P$ is a projector, $\tilde {\mat W} - \tilde{\mat V}^2 = \mat P \mat V (\mat \id - \mat P^\tp \mat P) \mat V \mat P^\tp \ge 0$, and thus we will label the eigenvalues of~$\tilde {\mat W} - \tilde{\mat V}^2 $ by~$\nu_j^2 \ge 0$.  Equation~\eqref{eq:lambdaWV} then gives~$\lambda_j^2 + \epsilon^2 = - \nu_j^2$.  The $2n$~eigenvalues~$\mu_{j\pm}$ of~$\tilde {\mat \Omega} \tilde {\mat \Sigma}$ are therefore given by~
\begin{align}
	\mu_{j\pm} = \pm \frac {\lambda_j} {2 \epsilon} = \pm \frac i 2 \sqrt {1 + \frac {\nu_j^2} {\epsilon^2}}\,,
\end{align}
which gives us the $n$ symplectic eigenvalues of~$\tilde {\mat \Sigma}$:
\begin{align}
	\sigma_{j} = \frac 1 2 \sqrt {1 + \frac {\nu_j^2} {\epsilon^2}} = \frac 1 2 \sqrt {1 + e^{+4r} \nu_j^2}\,.
\end{align}
The eigenvalues~$\nu_j^2$ of~$\tilde {\mat W} - \tilde {\mat V}^2$ contain the information
about the graph in question. In particular, 
when we consider the bipartite entanglement between any single node~$k$ of the graph
and the remaining nodes ($n=1$, $m=N-1$)
, we have
\begin{align}
	\tilde {\mat V} \mapsto \tilde V = \vphantom{\sum_l}V_{kk}\,, \qquad 
	\tilde {\mat W} \mapsto \tilde W = \sum_l V^2_{kl}\,.
\end{align}
As a result, we simply have to solve
$\lambda^2 + \epsilon^2 = - \sum_{\{ l \mid l\neq k\}} V^2_{kl}$, leading to a single symplectic
eigenvalue
\begin{align}
	\sigma = \frac 1 2 \sqrt {1 + e^{+4r} \sum_{\mathclap{\smash{\{ l \mid l\neq k\}}}} V^2_{kl}}\,.
\end{align}
In other words, besides the initial squeezing $r$,
the entanglement between any
node $k$ and the rest of the graph is determined by the number of its neighbors and
the strength of the link with each neighbor. Only those ``nodes'' of the graph
that have no neighbors at all would give $\sigma = \tfrac 1 2$, corresponding to a pure reduced state
and hence no entanglement with the actual nodes of the graph.
For any given $r$, however, any link with nonzero weight means $\sigma > \tfrac 1 2$
and thus entanglement. 

Quantitatively, both an increasing number of neighboring nodes
and an increasing strength of the links enhances the entanglement,
because the entropy grows with $\sigma$, and $\sigma - \tfrac 1 2$ represents the mean
excitation number of the reduced thermal state. For the special case
of equal unit weights $V_{kl} = 1$, 
we have $\sigma = \tfrac 1 2 \sqrt {1 + e^{+4r} N_k}$, where $N_k$ represents the number
of neighbors of node $k$. Thus, the maximum entanglement obtainable between a single
node and the rest of the graph is determined by the maximal number of neighboring
nodes, the so-called connectivity $C:= \max_k N_k$. For constant
connectivity, like in a 2D lattice with $C = 4$, the entropy will be bounded 
and does not increase with the size of the lattice. Only for an increasing
connectivity do we get larger entropies, and in principle the entanglement
between a single node and the rest of the graph may grow unboundedly with
the number of its links. This result is consistent with the bounds on the offline
squeezing per node needed to create a canonical CV cluster state of constant connectivity~\cite{Gu2009}.

\section{Conclusion}
\label{sec:conc}

We have generalized weighted graphs for continuous-variable~(CV) cluster states in a natural way to a graphical calculus for all Gaussian pure states.  The mathematics behind this generalization is not new~\cite{Simon1988,Arvind1995,Freitas1999,Bartlett2002}.  What \emph{is} new is interpreting the matrix formalism of Simon, Sudarshan, and Mukunda~\cite{Simon1988} as transformations on an undirected, complex-weighted graph.  This would be a mathematic triviality if it were not for the fact that applying this graphical interpretation to approximate CV cluster states and letting the overall squeezing go to infinity results in exactly the same graphs as are already being used to represent ideal CV cluster states~\cite{Zhang2006,Menicucci2006,vanLoock2007,Menicucci2007,Zaidi2008,Menicucci2008,Flammia2009,Gu2009}.  In addition, the graph transformation rules implied by the formalism immediately generalize \emph{all} of the existing graph transformation rules~\cite{Zhang2008,Zhang2008a,Zhang2010} to any Gaussian pure state and limit to these rules in the (unphysical) case of ideal CV cluster states.  It is these remarkable facts that make these results important.

This graphical formalism satisfies all five of the essential desired properties outlined in Section~\ref{subsec:gauss:props}.  We have also made headway on the three bullet points that followed.  Our achievements with the formalism thus far include using it to
\begin{itemize}
\item incorporate all details of finite squeezing within the CV cluster-state graph (Section~\ref{subsec:gauss:CVCS});
\item distinguish between different approximants to a given ideal CV cluster state at the graphical level (Section~\ref{subsec:gauss:CVCS});
\item incorporate $\cH$-graphs~\cite{Menicucci2007,Zaidi2008,Menicucci2008,Flammia2009} within the same graphical formalism through the exponential map (Section~\ref{subsec:gauss:Hgraph});
\item generalize the nullifier formalism~\cite{Gu2009} to all Gaussian pure states (Section~\ref{subec:gauss:nullifier});
\item use the nullifier formalism to define both an error matrix and a scalar approximation error for an approximate CV cluster state (Section~\ref{subec:gauss:nullifier});
\item define matrix and graphical transformation rules for a complete set of simple Gaussian unitary operations and quadrature measurements (Section~\ref{sec:transrules});
\item define the ``closest'' ideal CV cluster state to a given Gaussian pure state (Section~\ref{subsec:apps:closest});
\item analyze the optimality of the $\cH$-graph construction method with respect to this notion (Section~\ref{subsec:apps:Hopt}), including the specific examples of the two-mode squeezed state (Section~\ref{subsec:apps:TMS}) and a useful subclass of~$\cH$-graphs (Section~\ref{subsec:apps:selfinv});
\item demonstrate generation of a star-graph approximate CV cluster state from an approximate GHZ state made using an $\cH$-graph (Section~\ref{subsec:apps:GHZ});
\item identify classes of two-body Hamiltonians that have CV cluster states as their ground state (Section~\ref{subsec:apps:ground});
\item quantify bipartite entanglement in terms of the graphical formalism, with the explicit example of canonical CV cluster states (Section~\ref{subsec:apps:entanglement}).
\end{itemize}
We anticipate that this list will grow over time.  Specifically, we expect that the formalism will serve well when considering 
the propagation and manipulation of quantum information through approximate CV cluster states using homodyne detection 
and 
development of computer software for visualizing the effects of Gaussian operations on CV cluster states, 
as well as 
other uses not yet discovered.

We conclude with a few words about the prospects of fault tolerant quantum computing using Gaussian approximate CV cluster states.  Recent work~\cite{Ohliger2010} has demonstrated that all Gaussian approximants to ideal CV cluster states are inherently faulty when used for one-way quantum computing simply due to the fact that they are finitely squeezed.  This has led to suggestions (made in private communication) that these results forbid a fault-tolerant implementation of this quantum computing paradigm.  This is not the case.

That CV cluster states are error-prone has been known since the beginning~\cite{Menicucci2006}.  The main conclusion that should be drawn from Reference~\olcite{Ohliger2010} is that there is no ``magic pill'' to eliminate the effects of finite squeezing using a simple qubit (or other) encoding scheme.  Instead, fault tolerance must be addressed from the very beginning because unlike qubit cluster states, which remain physical in the limit of zero errors in preparation and use, CV cluster states are unphysical in this limit, since ideal states require infinite energy.  This is not in any way a show-stopper, however, and the authors of Reference~\olcite{Ohliger2010} go to great lengths to show several possible approaches to error correction that do not fall victim to their no-go theorems.  (It would be interesting to apply the matrix-product-state calculations from that paper to noisy qubit cluster states to see if analogous results are obtained in that context, in order to ensure we are comparing apples with apples.)  Fault tolerance in CV one-way quantum computing remains an important open problem and an active area of research.  

\begin{acknowledgments}
We are grateful for discussions with and assistance from Daniel Gottesman, Rolando Somma, Jon Yard, Yeong-Cherng Liang, Andrew Doherty, and Stephen Bartlett.  P.v.L.~acknowledges the DFG for financial support through the Emmy Noether programme.  Research at Perimeter Institute is supported by the Government of Canada through Industry Canada and by the Province of Ontario through the Ministry of Research \& Innovation.
\end{acknowledgments}


\appendix

\section{Derivation of the complex-weighted graph representation}
\label{app:derivations}

Here we derive the connection between a Gaussian pure state and its complex-weighted graph~$\mat Z$.  There is a vast mathematical literature on the set of all allowable graphs~$\mat Z$, known as the \emph{Siegel upper half-space}.  The main results for our purposes can be found in Reference~\olcite{Arvind1995}, with more details in Reference~\olcite{Simon1988}.  An extensive review of the Siegel upper half-space and its connection to the symplectic group can be found in the Ph.D. thesis by Freitas~\cite{Freitas1999} and the references therein.


\begin{definition}
The graph corresponding to a Gaussian pure state is~$\mat Z := \avgg {\opvec q \opvec q^\tp }^{-1} \avgg { \opvec q \opvec p^\tp }$.
\end{definition}
This graph was defined in Eq.~\eqref{eq:Zfromsigma}.  We wish to prove several properties of all such graphs.

\begin{theorem}
Every graph corresponding to a Gaussian pure state is complex-weighted, undirected, and unique and has positive definite imaginary part.
\end{theorem}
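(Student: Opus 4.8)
The plan is to reduce all four claims to direct statements about the zero-mean Gaussian wave function, since the defining expression for~$\mat Z$ is built purely from expectation values. First I would recall that any $N$-mode Gaussian pure state with zero mean has a position-space wave function of the form $\psi(\vec q) \propto \exp(-\tfrac 1 2 \vec q^\tp \mat M \vec q)$ with $\mat M = \mat M^\tp$ complex (only the symmetric part of a quadratic form survives) and $\Re \mat M > 0$ forced by normalizability; comparison with Eq.~\eqref{eq:psifromuv} identifies $\mat M = \mat U - i\mat V$. Equivalently, one may start from the symplectic parametrization $\mat S_{(\mat U, \mat V)}$ of Eq.~\eqref{eq:Suv}, in which case the covariance matrix Eq.~\eqref{eq:covmatuv} and the computation in Eq.~\eqref{eq:Zfromsigma} already carry out most of the work. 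I would present the wave-function route as primary, because it delivers all four properties, including uniqueness, in one stroke.

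The computational core is to evaluate the two moment matrices appearing in the definition. Since $|\psi(\vec q)|^2 \propto \exp(-\vec q^\tp (\Re \mat M) \vec q)$ is a Gaussian position distribution, I get $\avgg{\opvec q \opvec q^\tp} = \tfrac 1 2 (\Re \mat M)^{-1}$. Writing $\op p_k = -i\partial_{q_k}$ and using $\partial_{q_k}\psi = -(\mat M \vec q)_k \psi$, a one-line integration gives $\avgg{\opvec q \opvec p^\tp} = \tfrac i 2 (\Re \mat M)^{-1}\mat M$. Substituting these into $\mat Z := \avgg{\opvec q \opvec q^\tp}^{-1}\avgg{\opvec q \opvec p^\tp}$ collapses the factors of $(\Re\mat M)^{-1}$ and yields the clean identity $\mat Z = i\mat M = \mat V + i\mat U$, in agreement with Eq.~\eqref{eq:Zfromsigma}.

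With this identity in hand the four assertions follow by inspection. The matrix is complex because a generic $\mat M$ has nonzero imaginary part; it is symmetric, so the graph is undirected, since $\mat Z^\tp = i\mat M^\tp = i\mat M = \mat Z$; and its imaginary part is $\Im \mat Z = \Im(i\mat M) = \Re \mat M > 0$, which is exactly the positive-definiteness claim. For uniqueness I would argue both directions: the state fixes $\psi$ up to an overall phase, hence fixes the symmetric quadratic form $\mat M$ and therefore $\mat Z = i\mat M$; conversely $\mat M = -i\mat Z$ reconstructs $\psi$ up to normalization and phase, so the map from zero-mean Gaussian pure states (modulo displacements) to graphs is a bijection onto the complex symmetric matrices with positive-definite imaginary part (the Siegel upper half-space). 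Well-definedness of the defining expression is simultaneously secured, since $\avgg{\opvec q \opvec q^\tp} = \tfrac 1 2 (\Re \mat M)^{-1}$ is invertible precisely because $\Re \mat M > 0$.

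I expect the main obstacle to be the uniqueness (injectivity) claim rather than any algebraic identity: one must be careful that the correspondence is genuinely one-to-one modulo the physically irrelevant data, namely the overall phase and phase-space displacement, together with the passive orthogonal factor dropped between Eqs.~\eqref{eq:preiwasawa} and~\eqref{eq:Suv}; and that positive-definiteness of $\Im \mat Z$ is correctly read as an input inherited from normalizability of the state rather than a consequence of the algebra alone. A secondary point of care is confirming $\Re \mat M > 0$ (equivalently $\mat U > 0$) for an arbitrary Gaussian pure state, which I would justify by convergence of the Gaussian integral defining $\psi$, or equivalently by the positivity $\mat U = \mat U^\tp > 0$ guaranteed in the unique decomposition of Eq.~\eqref{eq:preiwasawa}.
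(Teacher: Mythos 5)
Your proposal is correct, and it reaches the conclusion by a genuinely different route from the paper. The paper never touches the wave function in its proof: it works entirely with the covariance matrix, writing $2\mat \Sigma = \mat S \mat S^\tp$ via Eq.~\eqref{eq:covfromS}, partitioning $\mat \Sigma$ into blocks $\mat A, \mat B, \mat D$, and extracting the symmetry of $\Re \mat Z = \mat A^{-1}\mat B$ from the symplectic identity $4\mat \Sigma \mat \Omega \mat \Sigma = \mat \Omega$ (which yields $(\mat B \mat A)^\tp = \mat B \mat A$), while positive definiteness of $\Im \mat Z = \mat A^{-1}$ comes from $\mat \Sigma > 0$. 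You instead compute the two moment matrices directly against the Gaussian wave function and collapse the definition to the identity $\mat Z = i\mat M = \mat V + i\mat U$, after which all four properties are read off; your evaluations $\avgg{\opvec q \opvec q^\tp} = \tfrac 1 2 (\Re \mat M)^{-1}$ and $\avgg{\opvec q \opvec p^\tp} = \tfrac i 2 (\Re \mat M)^{-1}\mat M$ are consistent with Eqs.~\eqref{eq:covmatuv} and~\eqref{eq:Zfromsigma}. What your route buys is transparency: the one-line identity makes symmetry and $\Im \mat Z = \Re \mat M > 0$ (normalizability) immediate, and it cleanly separates what is algebra from what is physical input. What the paper's route buys is that it never presupposes an $L^2$ position representation of Gaussian form (a fact that itself rests on the decomposition Eq.~\eqref{eq:preiwasawa}, a dependence you correctly flag), and it produces as by-products the relations $(\mat B\mat A)^\tp = \mat B\mat A$ and $\mat A\mat D = \mat \id + \mat B^2$ that are reused verbatim in the proof of the following theorem on surjectivity/uniqueness of the inverse map --- a result you partially anticipate with your bijection-onto-the-Siegel-upper-half-space remark, which goes beyond what this theorem asserts but is established separately in the paper.
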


\begin{proof}
Let~$\mat Z$ be a graph corresponding to a Gaussian pure state.  That $\mat Z$~is complex weighted and unique is guaranteed by the definition: expectation values of operator-valued matrices result in matrices of complex numbers, corresponding to a complex-weighted graph.  It is unique for a given state because expectation values are uniquely determined by the state.

An undirected graph has a symmetric adjacency matrix.  For showing the symmetry of~$\mat Z$, the second line of Eq.~\eqref{eq:Zfromsigma} is useful, and we repeat it here for reference:
\begin{align}
\label{eq:Zfromsigmaline2}
	\mat Z &= \frac 1 2 \avgg { \opvec q \opvec q^\tp }^{-1} \left( i \mat \id + \avgg { \{ \opvec q, \opvec p^\tp \} } \right)\,.
\end{align}
To prove the symmetry of~$\mat Z$ we refer to Eq.~\eqref{eq:covfromS}, which shows that any Gaussian pure state has a covariance matrix that is $\frac 1 2$ times a symplectic matrix:
\begin{align}
	2\mat \Sigma = \mat S \mat S^\tp \in \Sp(2N,\reals)\,,
\end{align}
since it is the product of a symplectic matrix~$\mat S$ and its transpose.  While $\mat S$~is not unique for a given Gaussian pure state, $\mat \Sigma$~is, and we partition it as follows:
\begin{align}
	\mat \Sigma =
	\frac 1 2
	\begin{pmatrix}
		\mat A	& \mat B	\\
		\mat B^\tp	& \mat D
	\end{pmatrix}
	\,.
\end{align}
Being a covariance matrix, $\mat \Sigma = \mat \Sigma^\tp > 0$.  (Being the covariance matrix for a valid quantum state requires more than this~\cite{Braunstein2005a}, but we don't need it.)  This implies immediately that $\mat A = \mat A^\tp > 0$, and $\mat D = \mat D^\tp > 0$.  That $2\mat \Sigma$~is symplectic requires, additionally, that
\begin{align}
	\begin{pmatrix}
		\mat 0	& \mat \id	\\
		-\mat \id	& \mat 0
	\end{pmatrix}
	= \mat \Omega &= 4\mat \Sigma \mat \Omega \mat \Sigma \nonumber \\
	&=
	\begin{pmatrix}
		\mat A	& \mat B	\\
		\mat B^\tp	& \mat D
	\end{pmatrix}
	\begin{pmatrix}
		\mat 0	& \mat \id	\\
		-\mat \id	& \mat 0
	\end{pmatrix}
	\begin{pmatrix}
		\mat A	& \mat B	\\
		\mat B^\tp	& \mat D
	\end{pmatrix}
	\nonumber \\
	&=
	\begin{pmatrix}
		\mat A \mat B^\tp - \mat B \mat A	& \mat A \mat D - \mat B^2	\\
		(\mat B^\tp)^2 - \mat D \mat A		& \mat B^\tp \mat D - \mat D \mat B
	\end{pmatrix}
	\,.
\end{align}
From this we can see immediately that
\begin{subequations}
\label{eq:Sigmasymplectic}
\begin{align}
\label{eq:BAsymm}
	(\mat B \mat A)^\tp &= \mat B \mat A\,, \\
\label{eq:DBsymm}
	(\mat D \mat B)^\tp &= \mat D \mat B\,, \\
\label{eq:ADdef}
	\mat A \mat D &= \mat \id + \mat B^2\,.
\end{align}
\end{subequations}
From the definition of the covariance matrix, we have that $\avgg {\opvec q \opvec q^\tp} = \mat A$, and $\frac 1 2 \avgg { \{\opvec q, \opvec p^\tp \} } = \mat B$.  That~$\mat A = \mat A^\tp > 0$ guarantees immediately that $\Im \mat Z = \mat A^{-1}$ exists and is also symmetric and positive definite.  The only remaining item to prove is that $\Re \mat Z = \mat A^{-1} \mat B$ is symmetric.  Equation~\eqref{eq:BAsymm} gives
\begin{align}
\label{eq:ReZsymm}
	&&(\mat B \mat A)^\tp &= \mat B \mat A \nonumber \\
	&\Longrightarrow & \mat A^{-1} (\mat B \mat A)^\tp \mat A^{-1} &= \mat A^{-1} \mat B \mat A \mat A^{-1} \nonumber \\
	&\Longrightarrow & (\mat A^{-1} \mat B)^\tp &= \mat A^{-1} \mat B\,,
\end{align}
since $\mat A = \mat A^\tp$.  Therefore, $\mat Z$~is an undirected graph.
\end{proof}

\begin{theorem}
Every complex-weighted, undirected graph with positive-definite imaginary part represents a unique Gaussian pure state (up to an arbitrary overall phase).
\end{theorem}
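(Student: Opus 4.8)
The plan is to run the construction of Section~\ref{subsec:gauss:symp} in reverse: build an explicit symplectic matrix from $\mat Z$, produce the corresponding Gaussian pure state, and then invoke uniqueness of the covariance-matrix representation. Write $\mat Z = \mat V + i\mat U$ with $\mat V = \Re \mat Z$ and $\mat U = \Im \mat Z$. Since $\mat Z = \mat Z^\tp$ (the graph is undirected), both $\mat U$ and $\mat V$ are real and symmetric, and by hypothesis $\mat U = \mat U^\tp > 0$. Positive-definiteness gives a unique positive-definite square root $\mat U^{1/2}$ together with $\mat U^{-1/2}$, so I can form the matrix $\mat S_{(\mat U, \mat V)}$ of Eq.~\eqref{eq:Suv}.

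First I would verify directly that $\mat S_{(\mat U, \mat V)}$ is symplectic, i.e., $\mat S_{(\mat U, \mat V)} \mat \Omega \mat S_{(\mat U, \mat V)}^\tp = \mat \Omega$; this is a short two-by-two block computation that uses only $\mat U = \mat U^\tp$ and $\mat V = \mat V^\tp$. A symplectic matrix acting on the $N$-mode vacuum yields a legitimate zero-mean Gaussian pure state, whose covariance matrix is $\frac 1 2 \mat S_{(\mat U, \mat V)} \mat S_{(\mat U, \mat V)}^\tp = \mat \Sigma_{(\mat U, \mat V)}$, given explicitly in Eq.~\eqref{eq:covmatuv}. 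Equivalently, one may simply exhibit the normalized position-space wave function of Eq.~\eqref{eq:psifromuv}: the Gaussian $\exp[-\tfrac 1 2 \vec q^\tp (\mat U - i\mat V)\vec q]$ is square-integrable precisely because $\mat U > 0$, which makes transparent that the positive-definiteness hypothesis is exactly the physicality condition.

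Next I would close the loop by confirming that the graph of the state just constructed is the original $\mat Z$. Feeding $\mat \Sigma_{(\mat U, \mat V)}$ into the defining relation $\mat Z = \avgg{\opvec q \opvec q^\tp}^{-1} \avgg{\opvec q \opvec p^\tp}$ (equivalently Eqs.~\eqref{eq:ufromsigma}--\eqref{eq:Zfromsigma}) recovers $\mat U$ from the upper-left block and $\mat V$ from the upper-right block, so the constructed state indeed represents $\mat Z$; this establishes existence. Uniqueness then follows with essentially no further work: a zero-mean Gaussian pure state is determined by its covariance matrix, and the graph $\mat Z$ fixes $\mat U$ and $\mat V$ and hence fixes $\mat \Sigma_{(\mat U, \mat V)}$ via Eq.~\eqref{eq:covmatuv}, so any two Gaussian pure states with graph $\mat Z$ share a covariance matrix and therefore coincide as rays in Hilbert space, i.e., up to an overall phase (displacements having been neglected throughout).

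The computations here are all routine, so the only genuine points of care are the two consistency checks that make the reversed construction legitimate: the symplectic verification of $\mat S_{(\mat U, \mat V)}$, and the confirmation that extracting the graph from $\mat \Sigma_{(\mat U, \mat V)}$ returns $\mat Z$ rather than some transformed object. Both are direct but worth writing out explicitly precisely because this theorem runs the forward map of Section~\ref{subsec:gauss:symp} backwards, and it is the invertibility of that map—guaranteed by $\mat U > 0$—that is the real content of the statement.
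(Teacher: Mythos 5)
Your existence argument is exactly the paper's: both build $\mat S_{(\mat U,\mat V)}$ from Eq.~\eqref{eq:Suv}, verify $\mat S_{(\mat U,\mat V)} \mat \Omega \mat S_{(\mat U,\mat V)}^\tp = \mat \Omega$ by a direct block computation using only the symmetry of $\mat U$ and $\mat V$, and obtain the covariance matrix $\mat \Sigma_{(\mat U,\mat V)}$ of Eq.~\eqref{eq:covmatuv} by conjugating the vacuum covariance $\tfrac 1 2 \mat \id$; checking that the graph extracted from $\mat \Sigma_{(\mat U,\mat V)}$ is the original $\mat Z$ then closes the existence half. That part is complete and matches the paper.

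The gap is in uniqueness. You argue that ``the graph $\mat Z$ fixes $\mat U$ and $\mat V$ and hence fixes $\mat \Sigma_{(\mat U,\mat V)}$ via Eq.~\eqref{eq:covmatuv}, so any two Gaussian pure states with graph $\mat Z$ share a covariance matrix.'' But the defining relation $\mat Z = \avgg{\opvec q \opvec q^\tp}^{-1} \avgg{\opvec q \opvec p^\tp}$ only constrains the position--position and position--momentum blocks of a competitor state's covariance matrix; it says nothing directly about the momentum--momentum block. A priori, a second Gaussian pure state could reproduce the same $\mat U$ and $\mat V$ from those two blocks while having a different $\avgg{\{\opvec p, \opvec p^\tp\}}$ block, and it would then carry the same graph but a different covariance matrix, defeating your conclusion. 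What rules this out is purity: the covariance matrix of any Gaussian pure state is $\tfrac 1 2$ times a symplectic matrix by Eq.~\eqref{eq:covfromS}, and the resulting constraint $\mat A \mat D = \mat \id + \mat B^2$ of Eq.~\eqref{eq:ADdef} forces the lower-right block to equal $\tfrac 1 2 (\mat A^{-1} + \mat A^{-1}\mat B^2) = \tfrac 1 2 (\mat U + \mat V \mat U^{-1} \mat V)$, i.e., to be determined by the other two. This is exactly the step the paper supplies in Eq.~\eqref{eq:tildeSigma}, and it is the one ingredient your uniqueness paragraph omits; with it added, your proof coincides with the paper's.
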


\begin{proof}
Let~$\mat Z$ be the graph in question.  We'll split it up into its real and imaginary parts as usual: $\mat Z = i\mat U + \mat V$, where~$\mat U = \mat U^\tp > 0$, and~$\mat V = \mat V^\tp$.  To find the Gaussian pure state that $\mat Z$~represents, we will construct a covariance matrix from it and prove that it satisfies the conditions to be a valid covariance matrix for a Gaussian pure state.

We now define
\begin{align}
\label{eq:SigmaZ}
	\mat \Sigma_{\mat Z} &:= \frac 1 2
	\begin{pmatrix}
		\mat U^{-1}			& \mat U^{-1} \mat V	\\
		\mat V \mat U^{-1}		& \mat U + \mat V \mat U^{-1} \mat V
	\end{pmatrix}
	\,,
\end{align}
just as in Eq.~\eqref{eq:covmatuv}.  In this case, though, we need to prove that it is a valid covariance matrix for a Gaussian pure state given the assumptions made about~$\mat Z$.  A straightforward way to do this is to show that it is the result of conjugation of the ground state covariance matrix~$\frac 1 2 \mat \id$ by a symplectic matrix (which necessarily represents a Gaussian unitary operation).  We define
\begin{align}
	\mat S_{\mat Z} :=
	\begin{pmatrix}
		\mat U^{-1/2}			& \mat 0	\\
		\mat V \mat U^{-1/2}		& \mat U^{1/2}
	\end{pmatrix}
	\,,
\end{align}
paralleling Eq.~\eqref{eq:Suv}.  One can verify directly that $\mat S_{\mat Z} \mat \Omega \mat S_{\mat Z}^\tp = \mat \Omega$, guaranteeing the symplectic nature of~$\mat S_{\mat Z}$.  Then we have
\begin{align}
	\mat \Sigma_{\mat Z} = \frac 1 2 \mat S_{\mat Z} \mat S_{\mat Z}^\tp\,,
\end{align}
which shows that $\mat Z$~represents a valid Gaussian pure state.

To show uniqueness of the state represented (up to overall phase), we assume that there exists another covariance matrix~$\tilde {\mat \Sigma}_{\mat Z} \neq \mat \Sigma_{\mat Z}$ that is represented by~$\mat Z$.  Recalling Eq.~\eqref{eq:ADdef}, which holds for an arbitrary pure-Gaussian-state covariance matrix, we write the blocks of~$\tilde {\mat \Sigma}_{\mat Z}$ as
\begin{align}
\label{eq:tildeSigma}
	\tilde {\mat \Sigma}_{\mat Z} &= \frac 1 2
	\begin{pmatrix}
		\mat A	& \mat B	\\
		\mat B^\tp	& \mat A^{-1} + \mat A^{-1} \mat B^2
	\end{pmatrix}
	\,.
\end{align}
Using Eq.~\eqref{eq:Zfromsigmaline2}, the graph representing this state is~$\mat A^{-1} (i \mat \id + \mat B)$.  By assumption, this must be the same as~$\mat Z = i\mat U + \mat V$, which gives~$\mat A = \mat U^{-1}$ and~$\mat B = \mat U^{-1} \mat V$.  Plugging these back into Eq.~\eqref{eq:tildeSigma} shows that $\tilde {\mat \Sigma}_{\mat Z} = \mat \Sigma_{\mat Z}$, in contradiction with the assumption.  Therefore, $\mat \Sigma_{\mat Z}$~is unique for a given graph~$\mat Z$.
\end{proof}


\begin{theorem}
The transformation law for graphs representing Gaussian pure states under Gaussian unitary operations is given by Eq.~\eqref{eq:Zprime}.
\end{theorem}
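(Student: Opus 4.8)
The plan is to characterize the graph of a Gaussian pure state entirely through its nullifier and then transport that characterization through the Gaussian unitary. First I would record that the graph is pinned down by a single vector equation: any matrix~$\mat M$ obeying $(\opvec p - \mat M \opvec q)\ket{\psi} = \vec 0$ must in fact equal the graph. Indeed, forming the operator-valued expectation $\avgg{\opvec q\,(\opvec p - \mat M\opvec q)^\tp}$, which vanishes because the right-hand operator factor annihilates the state, gives $\avgg{\opvec q\opvec p^\tp} - \avgg{\opvec q\opvec q^\tp}\mat M^\tp = \mat 0$, so $\mat M^\tp = \avgg{\opvec q\opvec q^\tp}^{-1}\avgg{\opvec q\opvec p^\tp} = \mat Z$ by the Definition; since $\mat Z$ is symmetric (previous theorem), $\mat M = \mat Z$. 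Combined with Eq.~\eqref{eq:psiZnull}, which supplies exactly one such relation, this tells us that the graph $\mat Z'$ of the evolved state $\ket{\psi'} = \op U\ket{\psi_{\mat Z}}$ is the \emph{unique} matrix with $(\opvec p - \mat Z'\opvec q)\ket{\psi'} = \vec 0$.

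Next I would push the original nullifier through $\op U$. Writing $\ket{\psi_{\mat Z}} = \op U^\dag\ket{\psi'}$ and multiplying on the left by $\op U$ yields $\op U(\opvec p - \mat Z\opvec q)\op U^\dag\ket{\psi'} = \vec 0$. Since $\op U^\dag\opvec x\op U = \mat S\opvec x$ implies $\op U\opvec x\op U^\dag = \mat S^{-1}\opvec x$, and since the symplectic relations (e.g.~$\mat A^\tp\mat D - \mat C^\tp\mat B = \mat \id$ and $\mat D^\tp\mat B = \mat B^\tp\mat D$, cf.~Ref.~\olcite{Arvind1995}) give $\mat S^{-1} = \left(\begin{smallmatrix}\mat D^\tp & -\mat B^\tp\\ -\mat C^\tp & \mat A^\tp\end{smallmatrix}\right)$, the conjugated nullifier vector becomes $\begin{pmatrix}-\mat Z & \mat \id\end{pmatrix}\mat S^{-1}\opvec x = (\mat A^\tp + \mat Z\mat B^\tp)\opvec p - (\mat C^\tp + \mat Z\mat D^\tp)\opvec q$. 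Setting $\mat M := \mat A^\tp + \mat Z\mat B^\tp$ and factoring it out puts the annihilating combination in the standard form $\mat M\bigl(\opvec p - \mat M^{-1}(\mat C^\tp + \mat Z\mat D^\tp)\opvec q\bigr)$, so the uniqueness from the first step identifies $\mat Z' = (\mat A^\tp + \mat Z\mat B^\tp)^{-1}(\mat C^\tp + \mat Z\mat D^\tp)$.

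Finally I would reconcile this with the claimed expression $(\mat C + \mat D\mat Z)(\mat A + \mat B\mat Z)^{-1}$. Transposing the claim and using $\mat Z^\tp = \mat Z$ reproduces exactly $(\mat A^\tp + \mat Z\mat B^\tp)^{-1}(\mat C^\tp + \mat Z\mat D^\tp)$, i.e.\ the claim is the transpose of the expression I derived; because $\mat Z'$ is a graph and hence symmetric (previous theorem), the two coincide. I expect the main obstacle to be justifying that $\mat M = \mat A^\tp + \mat Z\mat B^\tp$ is invertible, so that the factoring step is legitimate. I would handle this by observing that the conjugated operators $\op U\opvec a_{\mat Z}\op U^\dag \propto \mat U^{-1/2}\mat M(\opvec p - \mat Z'\opvec q)$ inherit the canonical commutation relations $[\,\cdot\,,\,\cdot^\herm] = \mat \id$ of Eq.~\eqref{eq:aZcomm} under unitary conjugation; a singular $\mat M$ would render this commutator matrix singular, a contradiction. (This invertibility, together with the fact that the transformed graph again has positive-definite imaginary part, is precisely the statement that the symplectic group acts on the Siegel upper half-space, and may alternatively be cited from Refs.~\olcite{Simon1988,Arvind1995,Freitas1999}.)
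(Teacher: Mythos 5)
Your proof is correct, but it takes a genuinely different route from the paper's. The paper works at the level of covariance matrices: it factorizes $\mat \Sigma_{\mat Z} - \tfrac i 2 \mat \Omega$ through the rank-$N$ block $\left(\begin{smallmatrix}\mat \id \\ \mat Z\end{smallmatrix}\right)$, uses the identity $\mat S \left(\begin{smallmatrix}\mat \id \\ \mat Z\end{smallmatrix}\right) = \left(\begin{smallmatrix}\mat \id \\ \tilde{\mat Z}\end{smallmatrix}\right)(\mat A + \mat B\mat Z)$ with $\tilde{\mat Z} = (\mat C + \mat D\mat Z)(\mat A + \mat B\mat Z)^{-1}$, verifies symmetry and $\Im\tilde{\mat Z}>0$ by sandwiching $\mat \Omega$ between $\mat S$'s, and then invokes uniqueness of the graph for a given covariance matrix. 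You instead work at the level of nullifiers: your opening lemma---that $(\opvec p - \mat M\opvec q)\ket\psi = \vec 0$ forces $\mat M^\tp = \avgg{\opvec q\opvec q^\tp}^{-1}\avgg{\opvec q\opvec p^\tp}$ via the vanishing of $\avgg{\opvec q(\opvec p - \mat M\opvec q)^\tp}$---is a clean observation not made explicitly in the paper, and conjugating the nullifier vector by $\op U$ with $\mat S^{-1} = \left(\begin{smallmatrix}\mat D^\tp & -\mat B^\tp \\ -\mat C^\tp & \mat A^\tp\end{smallmatrix}\right)$ correctly yields the transpose of Eq.~\eqref{eq:Zprime}, which you reconcile by citing symmetry of the evolved graph. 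What your approach buys is a shorter, more operator-theoretic argument tightly linked to Section~\ref{subec:gauss:nullifier}; what it gives up is that symmetry and positivity of $\Im\mat Z'$ come for free only by leaning on the first theorem applied to the evolved state (the paper re-derives both directly, Eqs.~\eqref{eq:Ztildesymm} and~\eqref{eq:ImZtilde}), and it presupposes Eq.~\eqref{eq:psiZnull} for a general Gaussian pure state. Two small points to tighten: your invertibility argument for $\mat M = \mat A^\tp + \mat Z\mat B^\tp$ is circular as written (you express the conjugated operators in terms of $\mat Z' = \mat M^{-1}(\mat C^\tp + \mat Z\mat D^\tp)$ before knowing $\mat M^{-1}$ exists); it should be phrased as computing $[\op U\opvec a_{\mat Z}\op U^\dag, (\op U\opvec a_{\mat Z}\op U^\dag)^\herm] = \tfrac 1 2 \mat U^{-1/2}\, i(\mat M\mat N^\herm - \mat N\mat M^\herm)\,\mat U^{-1/2} = \mat \id$ with $\mat N := \mat C^\tp + \mat Z\mat D^\tp$, whence a left null vector of $\mat M^\herm$ would annihilate the positive-definite matrix $2\mat U$ --- that version is airtight (or simply cite the Siegel upper half-space action, as the paper effectively does). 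Second, your uniqueness lemma needs $\avgg{\opvec q\opvec q^\tp}$ invertible for the evolved state, which holds because its covariance matrix is $\tfrac 1 2 \mat S'\mat S'^\tp > 0$; worth one sentence.
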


\begin{proof}
Since we have a unique way of passing from the covariance matrix for a Gaussian pure state to its graph and back again, our method will be to show the action of an arbitrary symplectic operation on the covariance matrix and then extract the new graph from it.  Rather than dealing with the covariance matrix alone, it will be useful instead to consider the combination
\begin{align}
\label{eq:SigmaOmega}
	\mat \Sigma_{\mat Z} - \frac i 2 \mat \Omega &= \frac 1 2
	\begin{pmatrix}
		\mat U^{-1}					& \mat U^{-1} \mat V - i \mat \id	\\
		\mat V \mat U^{-1} + i \mat \id		& \mat U + \mat V \mat U^{-1} \mat V
	\end{pmatrix}
	\nonumber \\
	&= \frac 1 2
	\begin{pmatrix}
		\mat \id	\\
		\mat Z
	\end{pmatrix}
	\mat U^{-1}
	\begin{pmatrix}
		\mat \id	&	\mat Z^*
	\end{pmatrix}
	\nonumber \\
	&= \frac 1 2
	\begin{pmatrix}
		\mat \id	\\
		\mat Z
	\end{pmatrix}
	\left[ \frac 1 {2i} (\mat Z - \mat Z^*) \right]^{-1}
	\begin{pmatrix}
		\mat \id	&	\mat Z^*
	\end{pmatrix}
	\nonumber \\
	&= i
	\begin{pmatrix}
		\mat \id	\\
		\mat Z
	\end{pmatrix}
	\left[
	\begin{pmatrix}
		\mat \id	&	\mat Z^*
	\end{pmatrix}
	\mat \Omega
	\begin{pmatrix}
		\mat \id	\\
		\mat Z
	\end{pmatrix}
	\right]^{-1}
	\begin{pmatrix}
		\mat \id	&	\mat Z^*
	\end{pmatrix}
	\,,
\end{align}
and similarly for~$\mat \Sigma_{\mat Z'} - \frac i 2 \mat \Omega$, with~$\mat Z \mapsto \mat Z'$.  Equation~\eqref{eq:Zprime} requires that $\mat \Sigma_{\mat Z'} = \mat S \mat \Sigma_{\mat Z} \mat S^\tp$, with
\begin{align}
\label{eq:SABCDrepeated}
	\mat S =
	\begin{pmatrix}
		\mat A	& \mat B	\\
		\mat C	& \mat D
	\end{pmatrix}
\end{align}
from Eq.~\eqref{eq:SABCD}, repeated here for reference.  Using the symplectic property of~$\mat S$ allows us to write
\begin{align}
\label{eq:SigmaOmegaprime}
	&\mat \Sigma_{\mat Z'} - \frac i 2 \mat \Omega \nonumber \\
	&= \mat S \left(\mat \Sigma_{\mat Z} - \frac i 2 \mat \Omega \right) \mat S^\tp \nonumber \\
	&= i \mat S
	\begin{pmatrix}
		\mat \id	\\
		\mat Z
	\end{pmatrix}
	\left[
	\begin{pmatrix}
		\mat \id	&	\mat Z^*
	\end{pmatrix}
	\mat \Omega
	\begin{pmatrix}
		\mat \id	\\
		\mat Z
	\end{pmatrix}
	\right]^{-1}
	\begin{pmatrix}
		\mat \id	&	\mat Z^*
	\end{pmatrix}
	\mat S^\tp
	\nonumber \\
	&= i \mat S
	\begin{pmatrix}
		\mat \id	\\
		\mat Z
	\end{pmatrix}
	\left[
	\begin{pmatrix}
		\mat \id	&	\mat Z^*
	\end{pmatrix}
	\mat S^\tp \mat \Omega \mat S
	\begin{pmatrix}
		\mat \id	\\
		\mat Z
	\end{pmatrix}
	\right]^{-1}
	\begin{pmatrix}
		\mat \id	&	\mat Z^*
	\end{pmatrix}
	\mat S^\tp
	\,.
\end{align}
Notice that
\begin{align}
\label{eq:SonIZ}
	\mat S
	\begin{pmatrix}
		\mat \id	\\
		\mat Z
	\end{pmatrix}
	&=
	\begin{pmatrix}
		\mat A + \mat B \mat Z	\\
		\mat C + \mat D \mat Z
	\end{pmatrix}
	=
	\begin{pmatrix}
		\mat \id	\\
		\tilde {\mat Z}
	\end{pmatrix}
	(\mat A + \mat B \mat Z)\,,
\end{align}
where $\tilde {\mat Z} = (\mat C + \mat D \mat Z)(\mat A + \mat B \mat Z)^{-1}$.  The fact that~$\Im \mat Z > 0$ guarantees that the inverse exists (unless $\mat B = \mat 0$, in which case invertibility of~$\mat S$ guarantees that $\mat A^{-1}$~exists).  Equation~\eqref{eq:Zprime} then amounts to the claim that $\mat Z' = \tilde {\mat Z}$ and also that $\tilde {\mat Z}$~satisfies all the requirements for representing a Gaussian pure state (i.e.,\ symmetry and positive-definite imaginary part).

To show the symmetry of $\tilde {\mat Z}$, we use a trick similar to that used in Eq.~\eqref{eq:SigmaOmegaprime}:
\begin{align}
\label{eq:Ztildesymm}
	\mat 0 &= \mat Z - \mat Z^\tp \nonumber \\
	&=
	\begin{pmatrix}
		\mat \id	&	\mat Z^\tp
	\end{pmatrix}
	\mat \Omega
	\begin{pmatrix}
		\mat \id	\\
		\mat Z
	\end{pmatrix}
	\nonumber \\
	&=
	\begin{pmatrix}
		\mat \id	&	\mat Z^\tp
	\end{pmatrix}
	\mat S^\tp \mat \Omega \mat S
	\begin{pmatrix}
		\mat \id	\\
		\mat Z
	\end{pmatrix}
	\nonumber \\
	&= (\mat A + \mat B \mat Z)^\tp (\tilde {\mat Z} - \tilde {\mat Z}^\tp) (\mat A + \mat B \mat Z)\,.
\end{align}
Since $(\mat A + \mat B \mat Z)$~is invertible, $\tilde {\mat Z} = \tilde {\mat Z}^\tp$.  Similarly, to show that $\Im \tilde {\mat Z} > 0$, we calculate
\begin{align}
\label{eq:ImZtilde}
	\Im \mat Z &= \frac {1} {2i} (\mat Z - \mat Z^*) \nonumber \\
	&= \frac {1} {2i}
	\begin{pmatrix}
		\mat \id	&	\mat Z^*
	\end{pmatrix}
	\mat \Omega
	\begin{pmatrix}
		\mat \id	\\
		\mat Z
	\end{pmatrix}
	\nonumber \\
	&= \frac {1} {2i}
	\begin{pmatrix}
		\mat \id	&	\mat Z^*
	\end{pmatrix}
	\mat S^\tp \mat \Omega \mat S
	\begin{pmatrix}
		\mat \id	\\
		\mat Z
	\end{pmatrix}
	\nonumber \\
	&= \frac {1} {2i} (\mat A + \mat B \mat Z)^\herm (\tilde {\mat Z} - \tilde {\mat Z}^*) (\mat A + \mat B \mat Z) \nonumber \\
	&= (\mat A + \mat B \mat Z)^\herm (\Im \tilde {\mat Z}) (\mat A + \mat B \mat Z)\,.
\end{align}
Inverting this relation verifies that~$(\Im \mat Z > 0) \Longrightarrow (\Im \tilde {\mat Z} > 0)$.

We have shown that~$\tilde {\mat Z}$ satisfies the requirements for representing a Gaussian pure state.  All that's left to show is that $\mat Z' = \tilde {\mat Z}$.  To do this, we first take the conjugate transpose of Eq.~\eqref{eq:SonIZ}, giving
\begin{align}
\label{eq:IZstaronS}
	\begin{pmatrix}
		\mat \id	&	\mat Z^*
	\end{pmatrix}
	\mat S^\tp
	&=
	(\mat A^\tp + \mat Z^* \mat B^\tp)
	\begin{pmatrix}
		\mat \id	&	\tilde {\mat Z}^*
	\end{pmatrix}
	\,.
\end{align}
Plugging Eqs.~\eqref{eq:SonIZ} and~\eqref{eq:IZstaronS} into Eq.~\eqref{eq:SigmaOmegaprime} and canceling the appropriate factors gives
\begin{align}
\label{eq:SigmaOmegatilde}
	\mat \Sigma_{\mat Z'} - \frac i 2 \mat \Omega &= i
	\begin{pmatrix}
		\mat \id	\\
		\tilde {\mat Z}
	\end{pmatrix}
	\left[
	\begin{pmatrix}
		\mat \id	&	\tilde {\mat Z}^*
	\end{pmatrix}
	\mat \Omega
	\begin{pmatrix}
		\mat \id	\\
		\tilde {\mat Z}
	\end{pmatrix}
	\right]^{-1}
	\begin{pmatrix}
		\mat \id	&	\tilde {\mat Z}^*
	\end{pmatrix}
	\,.
\end{align}
Clearly, $\tilde {\mat Z}$~appears everywhere that $\mat Z'$~should appear.  The reader can check that solving this equation for $\mat \Sigma_{\mat Z'}$ and extracting its graph~$\mat Z' = \avgg {\opvec q \opvec q^\tp }^{-1} \avgg { \opvec q \opvec p^\tp }$ does, in fact, show that $\mat Z' = \tilde {\mat Z}$.  Since the graph for a Guassian state is unique, this verifies Eq.~\eqref{eq:Zprime}.  This transformation law is called a \emph{generalized M\"obius transformation}, and the interested reader is directed to Reference~\olcite{Freitas1999} for a more in-depth mathematical analysis.
\end{proof}

\section{Derivation of the closest CV cluster state to a given Gaussian pure state}
\label{app:derivationclosest}

There are other matrix models for Gaussian pure states besides the one we are using here.  One of these, based on the \emph{Siegel disc}~\cite{Freitas1999}, is useful for these calculations.  (We will forgo presentation of the entire model, referring the interested reader to Reference~\olcite{Freitas1999} and the references therein.)  Based on this model, we define
\begin{align}
\label{eq:Kdef}
	\mat K &:= (\mat \id + i \mat Z) (\mat \id - i \mat Z)^{-1} \nonumber \\
	&=
	\begin{pmatrix}
		\mat \id	& i\mat \id
	\end{pmatrix}
	\begin{pmatrix}
		\mat \id	\\
		\mat Z
	\end{pmatrix}
	\left[
	\begin{pmatrix}
		\mat \id	& -i\mat \id
	\end{pmatrix}
	\begin{pmatrix}
		\mat \id	\\
		\mat Z
	\end{pmatrix}
	\right]^{-1}
	\,.
\end{align}
We will also need the following:
\begin{align}
\label{eq:KZidentity}
	\mat \id + \mat K &= [(\mat \id - i \mat Z) + (\mat \id + i \mat Z)] (\mat \id - i \mat Z)^{-1} \nonumber \\
	&= 2 (\mat \id - i \mat Z)^{-1} \,,
\end{align}
as well as
\begin{align}
\label{eq:ZKidentity}
	\mat \id - i \mat Z = 2 (\mat \id + \mat K)^{-1}\,.
\end{align}
Notice that since $\mat \id + \mat K$ is symmetric, so is~$\mat K$.  We repeat Eq.~\eqref{eq:Smattheta} here for reference:
\begin{align}
	\mat S_{\mat \theta} =
	\begin{pmatrix}
		\cos \mat \theta	&	\sin \mat \theta	\\
		-\sin \mat \theta	&	\cos \mat \theta		
	\end{pmatrix}
	\,,
\end{align}
where $\mat \theta = \diag (\theta_1, \dotsc, \theta_N)$.  We can derive the transformation law for~$\mat K$ with respect to these phase shifts [Cf.~Eq.~\eqref{eq:SonIZ}]:
\begin{align}
	\mat K' &=
	\begin{pmatrix}
		\mat \id	& i\mat \id
	\end{pmatrix}
	\begin{pmatrix}
		\mat \id	\\
		\mat Z'
	\end{pmatrix}
	\left[
	\begin{pmatrix}
		\mat \id	& -i\mat \id
	\end{pmatrix}
	\begin{pmatrix}
		\mat \id	\\
		\mat Z'
	\end{pmatrix}
	\right]^{-1}
	\nonumber \\
	&=
	\begin{pmatrix}
		\mat \id	& i\mat \id
	\end{pmatrix}
	\mat S_{\mat \theta}
	\begin{pmatrix}
		\mat \id	\\
		\mat Z
	\end{pmatrix}
	\left[
	\begin{pmatrix}
		\mat \id	& -i\mat \id
	\end{pmatrix}
	\mat S_{\mat \theta}
	\begin{pmatrix}
		\mat \id	\\
		\mat Z
	\end{pmatrix}
	\right]^{-1}
	\nonumber \\
	&=
	e^{-i \mat \theta}
	\begin{pmatrix}
		\mat \id	& i\mat \id
	\end{pmatrix}
	\begin{pmatrix}
		\mat \id	\\
		\mat Z
	\end{pmatrix}
	\left[
	e^{i \mat \theta}
	\begin{pmatrix}
		\mat \id	& -i\mat \id
	\end{pmatrix}
	\begin{pmatrix}
		\mat \id	\\
		\mat Z
	\end{pmatrix}
	\right]^{-1}
	\nonumber \\
	&= e^{-i \mat \theta} \mat K e^{-i \mat \theta}\,.
\end{align}

An extremum of~$\tfrac 1 2 \tr \mat U'$ occurs when~$\partial_j \tr \mat U' = \vec 0$, where~$\partial_j := \tfrac {\partial} {\partial \theta_j}$.  Let's calculate the left-hand side:
\begin{align}
	\partial_j \tr \mat U' &= \partial_j \tr (\mat \id + \mat U') \nonumber \\
	&= \tfrac 1 2 \partial_j \tr (\mat \id - i \mat Z') + \cc \nonumber \\
	&= \partial_j \tr [ (\mat \id + \mat K')^{-1} ] + \cc \nonumber \\
	&= \tr \{ -(\mat \id + \mat K')^{-1} [ \partial_j (\mat \id + \mat K') ] (\mat \id + \mat K')^{-1} \} + \cc \nonumber \\
	&= \tr \{ -(\mat \id + \mat K')^{-1} (-i)e^{-i \mat \theta_j} \mat K e^{-i \mat \theta} (\mat \id + \mat K')^{-1} \} \nonumber \\
	&\qquad + \tr \{ \text{transpose} \} + \cc \nonumber \\
	&= 2i \tr \{ (\mat \id + \mat K')^{-1} e^{-i \mat \theta_j} \mat K e^{-i \mat \theta} (\mat \id + \mat K')^{-1} \} + \cc \nonumber \\
	&= 2i [\mat K' (\mat \id + \mat K')^{-2} ]_{jj} + \cc\,,
\end{align}
where ``$\cc$''~stands for ``complex conjugate,'' and all entries of $\mat \theta_j$~are zero except for the $(j,j)^\text{th}$, which equals~$\theta_j$.  We will also need the following:
\begin{align}
	\mat K' (\mat \id + \mat K')^{-2} &= (\mat \id + \mat K')^{-1} [\mat \id - (\mat \id + \mat K')^{-1}] \nonumber \\
	&= (\mat \id + \mat K')^{-1} - (\mat \id + \mat K')^{-2} \nonumber \\
	&= \frac 1 2 (\mat \id - i\mat Z') - \frac 1 4 (\mat \id - i\mat Z')^2 \nonumber \\
	&= \frac 1 4 (\mat \id + \mat Z'^2)\,.
\end{align}
Then
\begin{align}
	\partial_j \tr \mat U' &= \frac i 2 (\mat \id + \mat Z'^2)_{jj} + \cc \nonumber \\
	&= -\Im (\mat Z'^2)_{jj} \nonumber \\
	&= -( \mat U' \mat V' )_{jj}\,.
\end{align}
Setting this to~0 for an extremum verifies Eq.~\eqref{eq:extremumcond}.

To show that we have a minimum rather than just an extremum, the Hessian of $\tr \mat U'$ must be positive definite.  The Hessian matrix has entries
\begin{align}
\label{eq:dkdjtrU1}
	\partial_k \partial_j \tr \mat U' 
	&= \frac i 2 \partial_k (\mat \id + \mat Z'^2)_{jj} + \cc \nonumber \\
	&= \frac i 2 \partial_k [(\mat \id + i\mat Z') (\mat \id - i\mat Z')]_{jj} + \cc
\end{align}
In addition to Eq.~\eqref{eq:ZKidentity}, we can find another similar relation by inverting Eq.~\eqref{eq:Kdef}:
\begin{align}
\label{eq:KinvZidentity}
	\mat \id + \mat K^{-1} &= [(\mat \id + i \mat Z) + (\mat \id - i \mat Z)] (\mat \id + i \mat Z)^{-1} \nonumber \\
	&= 2 (\mat \id + i \mat Z)^{-1} \,,
\end{align}
and thus, also,
\begin{align}
\label{eq:ZKinvidentity}
	\mat \id + i \mat Z = 2(\mat \id + \mat K^{-1})^{-1}\,,
\end{align}
and similarly for the primed matrices.  We now define the placeholder matrix
\begin{align}
\label{eq:Qdef}
	\mat Q' &:= (\mat \id + \mat K'^{-1}) (\mat \id + \mat K') \nonumber \\
	&= 2\mat \id + \mat K'^{-1} + \mat K'\,.
\end{align}
Notice that
\begin{align}
\label{eq:Qinv}
	\mat \id + \mat Z'^2 = 4\mat Q'^{-1}\,.
\end{align}
The partial derivatives of~$\mat Q'$ are given by
\begin{align}
\label{eq:dQ}
	\partial_k \mat Q' &= \partial_k \mat K' + \partial_k \mat K'^{-1} \nonumber \\
	&= \partial_k \mat K' - \mat K'^{-1} (\partial_k \mat K') \mat K'^{-1} \nonumber \\
	&= -i\mat \pi_k ( \mat K' - \mat K'^{-1}) - i ( \mat K' - \mat K'^{-1}) \mat \pi_k\,,
\end{align}
where the $(k,k)^\text{th}$ entry of~$\mat \pi_k$ equals~1, while all others are~0.  We can plug these results into Eq.~\eqref{eq:dkdjtrU1}:
\begin{align}
\label{eq:dkdjtrU2}
	&\quad \partial_k \partial_j \tr \mat U' \nonumber \\
	&= 2i \tr [\mat \pi_j \partial_k \mat Q'^{-1}] + \cc \nonumber \\
	&= -2i \tr [\mat \pi_j \mat Q'^{-1} (\partial_k \mat Q') \mat Q'^{-1} ] + \cc \nonumber \\
	&= -4 \tr [\mat \pi_j \mat Q'^{-1} \mat \pi_k ( \mat K' - \mat K'^{-1}) \mat Q'^{-1} ] + \cc \nonumber \\
	&= -4 \tr \{\mat \pi_j \mat Q'^{-1} \mat \pi_k [ (\mat \id + \mat K') \mat Q'^{-1} \nonumber \\
	&\qquad \qquad - (\mat \id + \mat K'^{-1})\mat Q'^{-1} ] \} + \cc \nonumber \\
	&= -4 \tr \{\mat \pi_j \mat Q'^{-1} \mat \pi_k [ (\mat \id + \mat K'^{-1})^{-1} - (\mat \id + \mat K')^{-1}] \} + \cc \nonumber \\
	&= -2 \tr \{\mat \pi_j \mat Q'^{-1} \mat \pi_k [ (\mat \id + i\mat Z') - (\mat \id - i\mat Z')] \} + \cc \nonumber \\
	&= -4i \tr (\mat \pi_j \mat Q'^{-1} \mat \pi_k \mat Z') + \cc \nonumber \\
	&= 2 \Im \tr [\mat \pi_j (\mat \id + \mat Z'^2) \mat \pi_k \mat Z'] \nonumber \\
	&= 2 \Im [(\mat \id + \mat Z'^2)_{jk} \mat Z'_{kj}] \nonumber \\
	&= 2 \Im [(\mat \id + \mat Z'^2) \circ \mat Z']_{jk}\,,
\end{align}
where $\circ$~represents the Hadamard (entrywise) product of two matrices.  Requiring the matrix with these entries to be positive definite verifies Eq.~\eqref{eq:Hesspos}.

\section{Stabilizers for Gaussian pure states}
\label{app:stabilizers}


To find the stabilizer operators for the finitely squeezed,
canonical CV cluster states,
we start by constructing the stabilizer of the vacuum state~$\ket 0$ of
a qumode~$\op a_k$~\cite{Bartlett2002}. For the dimensionless
quadrature operators~$\op q$ and~$\op p$, where $\op a = \tfrac {1} {\sqrt 2} (\op q + i \op p)$,
we obtain
\begin{equation}
	\ket{0} = \exp(\alpha\op a_k) \ket{0} = \exp\left[ \tfrac {\alpha} {\sqrt 2} (\op q_k + i \op p_k)\right] \ket{0}\,.
\end{equation}
Further, from this we need the stabilizer for a single-mode squeezed state~$\op S(r_k) \ket 0$,
with a squeezing parameter $r_k > 0$ and
a $\op q$-squeezing operator~$\op S(r_k) = \exp[\tfrac i 2 r_k(\op q_k \op p_k + \op p_k \op q_k)]$.
The stabilizer equation can be written as
\begin{align}
	\op S(r_k) \ket 0&= \op S(r_k) \exp\left[\tfrac {\alpha} {\sqrt 2} (\op q_k + i \op p_k) \right] \op S^\dagger(r_k) \op S(r_k) \ket 0 \nonumber \\
	&= \exp\left[\tfrac {\alpha} {\sqrt 2} (e^{+r_k}\op q_k + i e^{-r_k}\op p_k) \right] \op S(r_k) \ket 0\,.
\end{align}
In the case of momentum squeezing, with $\op S(-r_k)$, we have
\begin{align}
	\op S(-r_k) \ket 0 &= \op S(-r_k) \exp\left[\tfrac {\alpha} {\sqrt 2} (\op q_k + i \op p_k)\right] \op S^\dagger(-r_k) \nonumber \\
	&\qquad \times \op S(-r_k) \ket 0 \nonumber \\
	&= \exp\left[\tfrac {\alpha} {\sqrt 2} (e^{-r_k}\op q_k + i e^{+r_k}\op p_k)\right] \op S(-r_k) \ket 0 \,.
\end{align}
Let us rewrite this as
\begin{equation}
	\exp \left(-\tfrac 1 4 \alpha^2 \right) \op X_k\left(-\tfrac {\alpha} {\sqrt 2} e^{+r_k}\right)\op Z_k\left(-i\tfrac {\alpha} {\sqrt 2} e^{-r_k} \right)\,,
\end{equation}
formally using the WH shift operators $\op X(s) = e^{- i s \op p}$ and $\op Z(s) = e^{i s \op q}$.
Now we define $\alpha := - \sqrt{2} e^{-r_k} s$ such that
the momentum-squeezed stabilizer becomes
\begin{equation}
\label{eq:squeezedstab}
	\exp \left(-\tfrac 1 2 e^{-2 r_k} s^2 \right) \op X_k(s)\op Z_k(i e^{-2 r_k} s)\,.
\end{equation}
In the limit of infinite $p$-squeezing $r_k\to\infty$, this operator
approaches $\op X_k(s)$, which stabilizes the zero-eigenstate~$\ketsub{0}{p_k}$, with
$\op X_k(s) \ketsub{0}{p_k} = \ketsub{0}{p_k}$ for all~$s\in \mathbb{R}$, as expected.

Now we can proceed to create CV cluster states in the canonical way: by pairwise applying the $\CZ$~gates, indicated as~$\opCZ^{kl}$ for a link between nodes~$k$ and~$l$.  The $N$~stabilizers of the initial $N$~momentum-squeezed modes, showing in Eq.~\eqref{eq:squeezedstab}, with~$k=1,2,\dotsc,N$, are then transformed for each interaction with neighbor~$l$ as
\begin{multline}
	\exp \left(-\tfrac 1 2 e^{-2 r_k} s^2 \right)\opCZ^{kl} \op X_k(s)\opCZ^{kl \dagger}\opCZ^{kl}\op Z_k(i e^{-2 r_k} s)\opCZ^{kl \dagger} 
	\\
	= \exp \left(-\tfrac 1 2 e^{-2 r_k} s^2 \right) \op X_k(s)\op Z_l(s)\op Z_k(i e^{-2 r_k} s)\,.
\end{multline}
Eventually, collecting all these interactions, we obtain the $N$
new stabilizers
\begin{eqnarray}
	\exp \left(-\tfrac 1 2 e^{-2 r_k} s^2 \right) \op X_k(s)\op Z_k(i e^{-2 r_k} s)\prod_{\mathclap{l\in \mathcal N(k)}}\op Z_l(s)\,,
\end{eqnarray}
where $\mathcal N(k)$~is the set of neighbors of~$k$.  In the limit of infinite squeezing ($r_k\to\infty$),
we get back the well-known, ideal CV cluster-state stabilizers for unweighted graphs.
However, this time, the above stabilizers also do the job
for finite squeezing and uniquely represent the corresponding
approximate cluster state. The nullifiers are obtained by taking the log of the stabilizers:
\begin{align}
	\exp \bigl(-\tfrac 1 2 &e^{-2 r_k} s^2 \bigr) \op X_k(s)\op Z_k(i e^{-2 r_k} s)\prod_{\mathclap{l\in \mathcal N(k)}}\op Z_l(s) \nonumber\\
	& = \exp \left(-\tfrac 1 2 e^{-2 r_k} s^2 \right) \exp\left[-is \left(\op p_k - i e^{-2 r_k} \op q_k \right)\right] \nonumber \\
	&\qquad \times \exp \left(+\tfrac 1 2 e^{-2 r_k} s^2 \right)\prod_{\mathclap{l\in \mathcal N(k)}}\exp(is \op q_l)\nonumber\\
	&= \exp\left[ -is \Bigl(\op p_k - i e^{-2 r_k} \op q_k - \sum_{l} \op q_l \Bigr)\right]\,,
\end{align}
for all~$k=1,2, \dotsc, N$ and for all~$s\in \mathbb{R}$.  The nullifiers are therefore
\begin{equation}
	\op p_k - i e^{-2 r_k} \op q_k - \sum_{\mathclap{l\in \mathcal N(k)}} \op q_l \quad \forall k\,.
\end{equation}
This result corresponds to the complex nullifier
\begin{equation}
	\left({\opvec p} - {\mat Z} {\opvec q}\right) \ket {\psi_{\mat Z}} = \vec 0\,,
\end{equation}
as expressed in Eq.~\eqref{eq:psiZnull},
with a complex adjacency matrix ${\mat Z}$
having imaginary diagonal entries $ie^{-2 r_k}$
and the remaining entries being either~0 or~1 depending
on the particular CV cluster state with unweighted edges.
For example, for two modes this reproduces the result
in Eq.~\eqref{eq:canontwomode} for a canonical two-mode CV cluster state.
More generally, the result corresponds to
complex-weighted graphs including self-loops.


Any $N$-mode Gaussian pure state can be built from
$N$ squeezed vacua sent through passive linear optics (modulo
phase-space displacements)~\cite{Braunstein2005}. In terms of stabilizers, this means
that, without loss of generality, the stabilizers of
$N$ momentum-squeezed states are transformed as
\begin{multline}
\label{eq:generalStab}
	\exp \left(-\tfrac 1 2 e^{-2 r_k} s^2 \right)\op U \op X_k(s)\op U^\dagger \op U \op Z_k(i e^{-2 r_k} s)\op U^\dagger \\
	= \exp \left[-is \left(\op p_k' - i e^{-2 r_k} \op q_k' \right) \right]\,,
\end{multline}
where~$\op p_k'$ and~$\op q_k'$ are the linearly transformed
momentum and position operators, respectively, 
after the corresponding (inverse) unitary transformation~$\op U$.
Provided this~$\op U$ represents a Gaussian transformation,
the exponent on the right-hand-side of Eq.~\eqref{eq:generalStab} will always be linear combinations of the generators~$\op q$ and~$\op p$. This would
include the canonical $\CZ$~interactions, as discussed before.
However, now we shall restrict ourselves to only passive, number-preserving unitaries~$\op U$,
without loss of generality~\cite{Braunstein2005}.
The canonical case would then require that the squeezing parts of the
$\CZ$ gates be absorbed into the offline momentum squeezers,
corresponding to Bloch-Messiah reduction~\cite{vanLoock2007}.

For the case of a passive linear transformation,
we can write $\opvec a' = \mat L \opvec a$, where $\mat L := \mat X + i\mat Y$~is an~$N \times N$ unitary matrix (with real~$\mat X$ and~$\mat Y$), and therefore [Cf.~the last matrix in Eq.~\eqref{eq:preiwasawa}],
\begin{align}
\label{eq:qppassive}
	\begin{pmatrix}
		\opvec q'	\\
		\opvec p'
	\end{pmatrix}
	=
	\begin{pmatrix}
		\mat X 	& -\mat Y	\\
		\mat Y 	& \mat X
	\end{pmatrix}
	\begin{pmatrix}
		\opvec q	\\
		\opvec p
	\end{pmatrix}
	\,.
\end{align}
Finally, through Eq.~\eqref{eq:generalStab}, we arrive at
the new stabilizers
\begin{multline}
\label{eq:generalStab_out}
	\exp \Bigl\{ -is
	 \sum_l \bigl[ ( Y_{kl} - i e^{-2 r_k} X_{kl} ) \op q_l \\ + ( X_{kl} + i e^{-2 r_k} Y_{kl} ) \op p_l \bigr] \Bigr\} \,.
\end{multline}
For the nullifiers, we obtain
\begin{equation}
	\left({\mat A} {\opvec p} - {\mat B} {\opvec q}\right) \ket {\psi_{\mat Z}} = \vec 0\,,
\end{equation}
where
\begin{align}
	A_{kl} &= X_{kl} + i e^{-2 r_k} Y_{kl}\,, \\
	-B_{kl} &= Y_{kl} - i e^{-2 r_k} X_{kl}\,,
\end{align}
which we may rewrite as
\begin{align}
	\left({\mat A}^{-1}{\mat A} {\opvec p} - {\mat A}^{-1}{\mat B} {\opvec q}\right) \ket {\psi_{\mat Z}}
	&= \left({\opvec p} - {\mat Z} {\opvec q}\right) \ket {\psi_{\mat Z}} = \vec 0\,,
\end{align}
with ${\mat Z}:= {\mat A}^{-1}{\mat B}$.  This gives us again the complex adjacency matrix
for an arbitrary pure Gaussian $N$-mode state.
We note that there are at most $4 N^2$ parameters
to determine the stabilizer/nullifier~\cite{Bartlett2002}.  These, however, are not independent,
as $\mat L$ must be a unitary matrix, and ${\mat B}$ follows from ${\mat A}$.
A general Gaussian unitary transformation has $2 N^2 + N$ free parameters,
without displacements, which is the same number for representing
a symplectic transformation from ${\rm Sp}(2N,\reals)$.
For representing pure Gaussian $N$-mode states (modulo
displacements), it is enough to apply a general Gaussian unitary transformation
to an $N$-mode vacuum state, where after Bloch-Messiah reduction~\cite{Braunstein2005},
the first passive transformation has no effect on the vacuum~\cite{vanLoock2007}.
Thus, $N$~real squeezing parameters~$r_k$ and $N^2$~parameters
for the remaining passive transformation~$\mat L$ suffice
to uniquely determine the matrices~${\mat A}$ and~${\mat B}$, and hence
the state through~${\mat Z}$.

As an example we refer to the standard two-mode squeezed state,
obtainable by interfering a $\op q$-squeezed state with a $\op p$-squeezed
state at a 50:50 beamsplitter~\cite{Furusawa1998,vanLoock1999,Braunstein2005a}. Then we have
\begin{align}
	\mat A &= \frac{1}{\sqrt{2}}
	\begin{pmatrix}
		1	&	1	\\
		i e^{-2 r_2}	&	-i e^{-2 r_2}
	\end{pmatrix}
	\,, \\
	\mat B &= \frac{1}{\sqrt{2}}
	\begin{pmatrix}
		i e^{-2 r_1}	&	i e^{-2 r_1}	\\
		-1	&	1
	\end{pmatrix}
	\,,\\
\intertext{and}
	\mat A^{-1} &= \frac{1}{\sqrt{2}}
	\begin{pmatrix}
		1	&	-i e^{+2 r_2}	\\
		1	&	i e^{+2 r_2}
	\end{pmatrix}
	\,,
\end{align}
resulting in the matrix ${\mat Z}_1$ from Eq.~\eqref{eq:TMSS_adjMatrix}
using equal initial squeezing $r_1=r_2=\alpha$.
Another simple example is the $N$-mode vacuum state,
for which $r_k = 0$, $\mat A = \mat L$, and $\mat B = i \mat L$,
so that ${\mat Z} = i \1$ for any $\mat L$. The vacuum always remains
an uncorrelated graph with only self-loops.

\vspace{1.5em}

\section{Mixed Gaussian states}
\label{app:mixed}

There is a very simple special case
of mixed Gaussian states
for which the entire pure-state graph calculus
presented in this article follows through as well
almost trivially.
This special case is usually referred to as
the $N$-dimensional isotropic oscillator~\cite{Simon1988}. %
The covariance matrix for a general mixed Gaussian $N$-mode
state is given by
\begin{align}
\mat \Sigma = \frac 1 2 \mat S \begin{pmatrix}
		\mat K	&	\vec 0	\\
		\vec 0	&	\mat K
	\end{pmatrix}
\mat S^\tp\,,
\end{align}
generalizing the expressions from Eq.~\eqref{eq:covfromS},
by replacing the $N$-mode vacuum/ground state $\1 /2$ by
$1/2$ times the above diagonal matrix with $\mat K = {\rm diag}(\mat\kappa)$,
where the vector $\mat\kappa = (\kappa_1,...,\kappa_N)^\tp$ contains
the symplectic eigenvalues (times 2) of $\mat \Sigma$. For pure states, we have
$\mat K = \1$.

Now let us assume that all symplectic eigenvalues are equal to $\kappa$,
corresponding to an $N$-mode Gaussian state
built from $N$ thermal states with identical excitation number,
replacing the $N$ initial vacua in Eq.~\eqref{eq:covfromS}.
Carrying along this one extra parameter $\kappa$, we can use our graphical formalism to describe this special case, as well.  We leave to future work a possible extension of our formalism to general Gaussian mixed states.


\bibliographystyle{bibstyleNCM}
\bibliography{CVgraph_trans}

\end{document}